\definecolor{ForestGreen}{rgb}{0.1333,0.5451,0.1333}
\definecolor{DarkRed}{rgb}{0.8,0,0}
\definecolor{Red}{rgb}{1,0,0}
\newif\ifappendix
\let\csname putmaybeappendix#1\endcsname\BODY%
\newcommand\csname putmaybeappendix#1\endcsname{}\BODY%
\newcommand{\putmaybeappendix}[1]{\csname putmaybeappendix#1\endcsname}
\DeclareMathOperator*{\argmin}{arg\,min}
\DeclareMathOperator*{\supp}{supp}
\newcommand{\eq}[1]{\begin{align}#1\end{align}}
\newcommand{\cs}[1]{\mathsf{#1}}   %
\renewcommand{\mid}{\vert}
\renewcommand{\r}{r}
\newcommand{\rr}{r}
\newcommand{\map}{P}
\newcommand{\bathix}{\alpha}
\newcommand{\Cin}{{\mathrm{IN}}}
\newcommand{\Cout}{{\mathrm{OUT}}}
\newcommand{\MM}{{\mathcal{M}}}
\newcommand{\RR}{{\mathcal{R}}}
\newcommand{\N}{{\mathbb{N}}}
\newcommand{\Q}{{\mathcal{Q}}}
\newcommand{\totrans}{\! \shortrightarrow \!}
\newcommand{\LL}{\mathcal{L}}
\newcommand{\LLL}{L}
\global\long\def\CC{\Phi}%
\global\long\def\l{c}%
\global\long\def\FF{\Gamma}%
\global\long\def\Lmap{\LLL(\map)}%
\newcommand{\QCC}{\Q_\mathrm{circ}}
\newcommand{\Qao}{\Q_\mathrm{AO}}
\newcommand{\DMcirc}{\MM^\mathrm{loss}}
\newcommand{\LLC}{{{\mathcal{L}}^\mathrm{circ}}}
\newcommand{\LandauerLoss}{\mathcal{L}^\mathrm{loss}}
\newcommand{\EP}{\sigma}
\newcommand{\subEP}{\hat{\sigma}}
\newcommand{\nullstate}{{\emptyset}}
\newcommand{\EPg}{{\subEP_g}}
\newcommand{\qPP}{\q}
\newcommand{\qcPP}{\q^{c}}
\newcommand{\mC}{\pi_\CC}
\newcommand{\mG}{\pi_g}
\newcommand{\DDbase}{D}
\newcommand{\IIbase}{\mathcal{I}}
\newcommand{\SSS}{S}
\newcommand{\DDf}[2]{\DDbase(#1\Vert #2)}
\newcommand{\DDb}[2]{\DDbase\big(#1\Vert #2\big)}
\newcommand{\II}{\IIbase}
\newcommand{\IIDD}{\mathcal{D}}
\newcommand{\IIDDf}[2]{\mathcal{D}(#1\Vert #2)}
\newcommand{\pa}{\mathrm{pa}}
\newcommand{\pag}{{\pa(g)}}
\newcommand{\begG}{{\mathrm{beg}(g)}}
\newcommand{\ennG}{{\mathrm{end}(g)}}
\newcommand{\q}{{q}}
\newcommand{\ppag}{p_\pag}
\newcommand{\qpag}{\q_\pag}
\newcommand{\SG}{{n(g)}}
\newcommand{\pSG}{p_\SG}
\newcommand{\qSG}{\q_\SG}
\newcommand{\pIN}{p_\Cin}
\newcommand{\ppIN}{q_\Cin}
\newcommand{\ppPAG}{q_\pag}
\newcommand{\sX}{\mathcal{X}}
\newcommand{\sY}{\mathcal{Y}}
\newcommand{\sZ}{\mathcal{Z}}
\newcommand{\pA}{p_A}
\newcommand{\PA}{P_A}
\newcommand{\xA}{x_A}
\newcommand{\xB}{x_B}
\newcommand{\sXA}{\sX_A}
\newcommand{\sXB}{\sX_B}
\newcommand{\sXAB}{\sXA \times \sXB}
\newcommand{\W}{K}
\newcommand{\variablevalue}{{x}}
\newtheorem{apptheorem}{Theorem}
\newtheorem{theorem}{Theorem}
\newtheorem{corollary}[theorem]{Corollary}
\newtheorem{example}{Example}
\newtheorem{applemma}{Lemma}
\newcommand{\apprefcost}{Appendix A} %
\newcommand{\apprefpartialsupport}{Appendix A} %
\newcommand{\apprefsinglepass}{Appendix B}
\newcommand{\apprefoutdegreebig}{Appendix C} %
\begin{document}

	\title{Thermodynamics of computing with circuits}

	\date{\today}%

	\author{David H. Wolpert}
		\affiliation{Santa Fe Institute, Santa Fe, New Mexico}
	\altaffiliation[Also at ]{Complexity Science Hub, Vienna} 
	\altaffiliation[ ]{Arizona State University, Tempe, Arizona.}

	\author{Artemy Kolchinsky}
	
	\affiliation{Santa Fe Institute, Santa Fe, New Mexico}
	
	\begin{abstract}
Digital computers implement computations using circuits, as
do many naturally occurring systems (e.g., gene regulatory networks).
The topology of any such circuit restricts which variables 
may be physically coupled during the operation of a circuit. We investigate how  such
restrictions on the physical coupling affects the thermodynamic costs of running the circuit. 
 To do this we first calculate the minimal additional entropy production that 
arises when we run a given gate in a circuit.
We then build on this calculation, to analyze how the thermodynamic costs of implementing a 
computation with a full circuit, comprising multiple connected gates, depends on the topology of that circuit. 
This analysis provides a rich new set of optimization
problems that must be addressed by any designer of a circuit, if they wish to 
minimize thermodynamic costs. 
	\end{abstract}

	\maketitle

	\section{Introduction}

		A long-standing focus of research in the physics community
		has been how the \textit{energetic} resources required to perform a given computation depend on that computation.
		This issue is sometimes referred to as the
	``thermodynamics of computation'' or the ``physics of information''~\cite{benn82,parrondo2015thermodynamics,wolpert_thermo_comp_review_2019}.
%
		Similarly,  a central focus of computer science
	theory has been how the minimal \textit{computational} resources needed to perform a given computation
	depend on that computation~\cite{arora2009computational,savage1998models}. 
	(Indeed, some of the most important open issues in computer science, like whether $\cs{P = NP}$, 
	concern the relationship between a computation and its resource requirements.)
	Reflecting this commonality of interests, there was a burst of early research relating the 
	resource concerns of computer science theory with the resource concerns of
	thermodynamics~\cite{fredkin1982conservative,lloyd1989use,caves1993information,lloyd2000ultimate}\footnote{Like this early work, in this paper we focus on
		computation by classical systems; see~\cite{Gour2015,Brandao2013} and associated articles for recent work on the resources required to perform quantum
		computations.}.

	Starting a few decades after this early research, there was dramatic progress in our understanding of
	non-equilibrium statistical physics~\cite{jarzynski1997nonequilibrium,crooks1998nonequilibrium,hasegawa2010generalization,takara_generalization_2010,seifert2012stochastic,parrondo2015thermodynamics}, which
has resulted in new insights into the thermodynamics of  
computation~\cite{sagawa2014thermodynamic,parrondo2015thermodynamics,hasegawa2010generalization,wolpert_thermo_comp_review_2019}. 
In particular, recent research has derived the ``(generalized) Landauer
bound''~\cite{maroney2009generalizing,turgut_relations_2009,faist2015minimal,wolpert_arxiv_beyond_bit_erasure_2015,owen_number_2018,wolpert_spacetime_2019}, which states that the heat generated by a
thermodynamically reversible process that sends an initial distribution $p_0(x_0)$ to an ending distribution $p_1(x_1)$
	is $k T [S(p_0) - S(p_1)]$ (where $S(p)$ indicates the entropy of distribution $p$, $T$ is the temperature of the single bath,
and $k$ is Boltzmann's constant).

	Almost all of this  work on the Landauer bound
	assumes that the map taking initial states to final states, $P(x_1 | x_0)$, is implemented with 
	a monolithic, ``all-at-once'' physical process, jointly evolving all of the variables in the system at once.
	In contrast, for purely practical reasons
	modern %
	computers are built out of %
	{circuits}, i.e.,  they are built out of networks of ``gates'', each of which 
evolves only a small subset of  the variables of the full system~\cite{arora2009computational,savage1998models}.  
An example of a simple circuit that computes the parity of 3 input bits using two XOR gates, and which we will return to throughout this paper, is illustrated in \cref{fig:1}.

	Similarly, in the natural world, biological cellular regulatory networks carry out complicated computations by decomposing them into circuits
of far simpler computations~\cite{wang2012boolean,yokobayashi2002directed,brophy2014principles}, as do many other kinds of biological 
	systems~\cite{qian2011scaling,clune2013evolutionary,melo2016modularity,deem2013statistical}.

As elaborated below, there are two major, unavoidable thermodynamic effects of implementing a given computation 
with a circuit of gates rather than with an all-at-once process:

\vspace{5pt}

\noindent I) 
Suppose we build a circuit out of
gates which were manufactured without any specific circuit in mind.
Consider such a gate that implements bit erasure, and suppose that it
is thermodynamically reversible if $p_0$
is uniform. So by the Landauer bound,
it will generate heat $kT S(p_0) = kT \ln 2$\textit{ if run on a uniform distribution}. 

Now in general, depending on where such a bit-erasing gate appears in a 
circuit, the actual initial distribution of its states, $p'_0$, will be non-uniform.
This not only changes the Landauer bound for that gate from $kT \ln 2$ to $kT S( p'_0)$; it is now known that since the gate is thermodynamically
reversible for $p_0 \ne p'_0$, running that gate on $p'_0$ will \textit{not} be thermodynamically
reversible~\cite{kolchinsky2016dependence}. So the actual heat generated by running that bit will exceed the associated value
of the Landauer bound, $kT S(p'_0)$.

%
%
%
%
%
%
%
%

\vspace{5pt}

\noindent II) Suppose the circuit 
is built out of two bit-erasing gates, and that each gate is thermodynamically reversible on a uniform input distribution when run separately from the circuit.
If the {marginal} distributions over the initial states of the gates are both uniform, then the heat generated by running each
of them is $kT \ln 2$, and therefore the total generated heat is $2 kT \ln 2$. Suppose though that there is nonzero statistical coupling
between their states under their initial {joint} distribution. Then as elaborated below, even though each of the gates run
separately is thermodynamically reversible, running them in parallel is \textit{not} thermodynamically reversible. So running them
generates extra heat beyond the minimum given by applying the Landauer bound to the dynamics of the full joint distribution~\footnote{For
example, if the initial states of the gates are perfectly correlated, the initial entropy of the two-gate system is $\ln 2$. In this case
the running the gates in parallel rather than in a joint system generates extra heat of $2 kT\ln 2 - kT\ln 2$, above the minimum possible given by the Landauer bound.}.

\vspace{5pt}

\begin{figure}
  \includegraphics[width=0.5\linewidth]{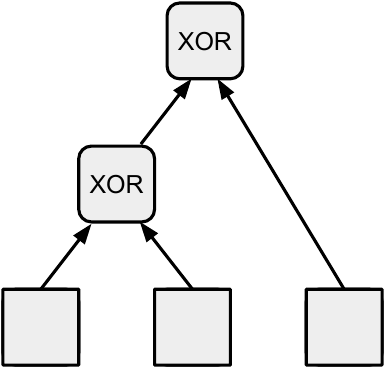}
  \caption{A simple circuit that uses two exclusive-OR (XOR) gates to compute the parity of 3 inputs bits. The circuit outputs a 1 if an odd number of input bits are set to 1, and a 0 otherwise.}
  \label{fig:1}
\end{figure}

	These two effects mean that
	the thermodynamic cost of running %
	a given computation with a circuit will in general vary greatly depending on the precise circuit we 
	use to implement that computation. %
	In the current paper we 
	%
	%
	analyze this dependence.  
	
	We make no restriction on the input-output maps computed by each gate
	in the circuit. They can be either deterministic (i.e., single-valued) or 
	stochastic, %
	logically reversible (i.e., implementing a deterministic permutation of
the system's state space, as in Fredkin gates~\cite{fredkin1982conservative}) or not, etc. 
	However, to ground thinking, the reader may imagine  
	that the circuit being considered is a Boolean circuit, where 
	each gate performs one of the usual single-valued Boolean functions, like logical AND gates, XOR gates, etc. 
	
	For simplicity, in this paper we focus on %
	%
	%
circuits whose topology does not contain loops~\cite{savage1998models,wegener_complexity_1991}, such as the circuit shown in \cref{fig:1}.

	\subsection{Contributions}

	We have four primary contributions.

	\noindent \textbf{1)} We derive exact expressions for how
	the entropy flow (EF) and entropy production (EP) produced by a fixed dynamical system %
	vary as one changes the initial distribution of states of that system.  
These expressions capture effect (I) described above.
(These expressions extend an earlier analysis~\cite{kolchinsky2016dependence}).
%
%
%
%
%
%
%
%
%
%
%
%
%
%
%

\noindent \textbf{2)}
We introduce ``solitary processes''. These are a type of physical process that can implement
any particular gate in a circuit while respecting the constraints on what variables 
in the rest of the circuit that
gate is coupled with.  We can use the thermodynamic properties of solitary processes to analyze effect (II) described above.
	%
	
	%
\noindent \textbf{3)}
We combine our first two contributions
	to analyze 
the thermodynamic costs of implementing circuits in a
	 ``serial-reinitializing'' manner. This means two things: the gates in the circuit are 
	 run one at a time, so each gate is run as a solitary process; after a gate is run its input
	 wires are reinitialized, allowing for subsequent reuse of the circuit.  
	In particular, we derive expressions relating the minimal EP generated by running an SR circuit to
information-theoretic quantities associated with the wiring diagram of the circuit.
	%
	%
	
%
	%
	%
	%

\noindent \textbf{4)}
Our last contribution is an expression for the extra EP 
	that arises in running an SR circuit if the initial state distributions at its gates differ from the ones that
	result in minimal EP for each of those gates. This expression
involves an information-theoretic function that we call ``multi-divergence'' which appears to be new to the
literature.

\subsection{Roadmap}

In \cref{sec:notation} we introduce general notation, and then provide a minimal summary of the parts of stochastic thermodynamics,
information theory and circuit theory that will be used in this paper. We also introduce the definition
of the ``islands'' of a stochastic matrix in that section, which will play a central role in our analysis. In 
\cref{sec:efdecomp} we derive an exact expression for how the EF and EP
of an arbitrary process depends on its initial state distribution.
In \cref{sec:solitary} we introduce solitary processes
and then analyze their thermodynamics.
%
%
In \cref{sec:singlepass} we introduce SR  circuits.
In \cref{sec:singlepasscosts} we use the tools
 developed in the previous sections to analyze the
thermodynamic properties of SR circuits. %
In \cref{sec:earlierwork} %
we discuss related earlier work. 
\cref{sec:future_work} concludes and presents some directions for future work.  
All proofs %
that
are longer than several lines are collected in the appendices.

\section{Background}

Because the analysis of the thermodynamics of circuits involves tools from multiple fields,
we review those tools in this section. We also introduce
some new mathematical structures that will be central to our analysis, in particular ``islands''.
We begin by introducing notation.

\subsection{General notation}
\label{sec:notation}

We write a Kronecker delta as $\delta(a, b)$. 
We write a random variable with an upper
case letter (e.g., $X$), and the associated set of possible outcomes with the associated calligraphic letter (e.g., $\sX$). A particular outcome of a random
variable is written with a lower case letter (e.g., $x$).  We also use lower case letters
like $p$, $q$, etc. to indicate probability distributions.  

We  use $\Delta_\sX$ to indicate the set of probability distribution over a set of outcomes $\sX$. For any distribution $p \in \Delta_\sX$, we use $\supp p:=\{ x\in \sX : p(x) > 0\}$ to indicate the support of $p$. 
Given a distribution $p$ over $\sX$ and any $\sZ \subseteq \sX$, we write $p(\sZ) = \sum_{x\in \sZ} p(x)$ to indicate the probability that the outcome of $X$ is in $\sZ$. Given a function $f : \sX \to \mathbb{R}$, we write $\mathbb{E}_p[f]$ to indicate $\sum_x p(x) f(x)$, the expectation of $f$ under distribution $p$.

Given any conditional distribution $\map(y \vert x)$ of $y \in \sY$ given  $x \in \sX$, %
and some distribution $p$ over $\sX$, we write $\map p$ for the distribution over $\sY$
induced by applying $\map$ to $p$:
\begin{align}
[\map p](y) := \sum_{x\in \sX} \map(y\vert x) p(x) \,.
\label{eq:matrixnotation}
\end{align}
We will sometimes use the term ``map'' to refer to a conditional distribution. 

We say that a conditional distribution $\map$ is ``logically reversible'' if it is deterministic (the entries of $\map(y\vert x) $ are 0/1-valued for all $x\in\sX$ and $y\in \sY$) and if there do not exist $x,x'\in\sX$ and $y\in \sY$ such that $\map(y\vert x)>0$ and $\map(y\vert x')>0$. When $\sY=\sX$, a logically reversible $\map$ is simply a permutation matrix. Given any subset of states $\sZ\subseteq \sX$, we also say that $\map$ is ``logically reversible over $\sZ$'' if the entries $\map(y\vert x)$ are 0/1-valued for all $x\in \sZ$ and $y\in\sY$, and there do not exist $x,x'\in\sZ$ and $y\in \sY$ such that $\map(y\vert x)>0$ and $\map(y\vert x')>0$.

We write a multivariate random variable with components $V=\{1,2,\dots\}$ as $X_V = (X_1, X_2, \dots)$, with outcomes $x_V$. 
We will also use upper case letters (e.g., $A,V,\dots$) to indicate sets of variables. 
For any
subset $A \subseteq V$ we use the random variable $X_A$ (and its outcomes $x_A$) to
refer to the components of $X_V$ indexed by $A$. Similarly, for a distribution $p_V$ over
$X_V$, we write the marginal distribution over $X_A$ as $p_A$. 
For a singleton set $\{a\}$, we slightly abuse notation and write $X_a$ instead of $X_{\{a\}}$.

\subsection{Stochastic thermodynamics}
\label{sec:stoch_thermo}

We will consider a circuit to be 
physical system
in contact with one or more thermodynamic reservoirs (heat baths, chemical baths, etc.). 
The system evolves over some time interval (sometimes implicitly taken to be $t\in [0,1]$, where the units of time are arbitrary), 
possibly while being driven by a work reservoir. 
We refer to the set of thermodynamic reservoirs and the driving --- and, in particular, the stochastic dynamics they induce over the system during $t \in [0, 1]$ --- as a \textbf{physical process}.

We use $\sX$ to indicate the finite state space of the system. 
%
Physically, the states $x \in \sX$ can either be microstates 
or they can be coarse-grained macrostates
under some additional assumptions (e.g., that all macrostates have the same ``internal entropy''~\cite{wolpert2016free,wolpert_arxiv_beyond_bit_erasure_2015,parrondo2015thermodynamics}). 
While we
are ultimately interested in the special case where  the system is a circuit with a set of nodes $V$, the review
in this section is more general.

\def\Wt{\W_t}

While much of our analysis applies more broadly, %
to make things concrete one may imagine that the system undergoes master equation dynamics,
also known as a continuous-time Markov chain (CTMC), as often used in stochastic
thermodynamics to model discrete-state physical systems.  In this
subsection we briefly review stochastic thermodynamics, 
referring the reader to \cite{van2015ensemble,esposito2010three} for more details.

Under a CTMC, 
the probability distribution over $\sX$ at time $t$, indicated by $p^t$, evolves according to the  %
master equation
\begin{align}
\frac{d }{dt} p^t(x') = \sum_{{x}}p^t(x) \Wt(x \totrans x') ,
\label{eq:ctmc_dynamics2}
\end{align}
where $\Wt$ is the rate matrix at time $t$. For any rate matrix $\Wt$, the off-diagonal entries $\Wt(x \totrans x')$ (for $x \ne x'$)  indicate  the rate at which probability flows from state $x$ to $x'$, while the diagonal entries are fixed by $\Wt(x \totrans x)=-\sum_{x' (\ne x)} K_t(x \totrans x')$, which guarantees conservation of probability. %
If the system is connected to multiple thermodynamic reservoirs indexed by $\bathix$, the rate matrix can be further decomposed as $\Wt(x \totrans x') = \sum_\bathix \Wt^\bathix(x \totrans x')$, where $\Wt^\bathix$ is the rate matrix at time $t$ corresponding to reservoir $\bathix$.  

The term \textbf{entropy flow} (EF) refers to the increase of entropy in all coupled reservoirs. 
The instantaneous rate of EF out of the system 
at time $t$ is defined as
\begin{align}
\dot{\Q}(p^t) =\sum_{\bathix, x,x'} p^t(x) \Wt^\bathix(x \totrans x') \ln \frac{\Wt^\bathix(x \totrans x')}{\Wt^\bathix(x' \totrans x)} .
\label{eq:defefrate0}
\end{align}
The overall EF incurred over the course of the entire process is $\Q = \int_0^1 \dot{\Q} \; dt$. 

The term \textbf{entropy production} (EP) refers to the overall increase of entropy, both in the system and in all coupled reservoirs. The instantaneous rate of EP 
at time $t$ is defined as
\begin{align}
\dot{\EP}(p^t) = \textstyle{\frac{d}{dt}} S(p^t) + \dot{\Q}(p^t) \,.\label{eq:defeprate}
\end{align}
The overall EP incurred over the course of the entire process is $\EP = \int_0^1 \dot{\EP} \;  dt$.

Note that we use terms like ``EF'' and ``EP'' to refer to either the associated rate or the associated integral over a non-infinitesimal time interval; the context should always make
the precise meaning clear.

Given some initial distribution $p$, the EF, EP, and the drop in the entropy of the system from the beginning to the end of the process are related according to
\begin{align}
\Q(p) = \left[  \SSS(p) - \SSS(\map p) \right] + \EP(p) \,.
\label{eq:newdecomp1}
\end{align}
In general, the EF can be written as the expectation $\Q(p)=\sum_x p(x) q(x)$, where $q(x)$ indicates the expected EF arising from trajectories that begin on state $x$. 
Given that the drop in entropy is a nonlinear function of $p$, while the expectation $\Q(p)$ is a linear function of $p$, 
\cref{eq:newdecomp1} tells us that EP will generally be a nonlinear function of 
$p$. Note that if $P$ is logically reversible, 
then $S(p)=S(\map p)$ and therefore EF and EP will be equal for any $p$.

While the EF can be positive or negative, 
the log-sum inequality can be used to prove that EP for master equation dynamics is 
non-negative~\cite{esposito2011second,seifert2012stochastic}:
\begin{align}
\Q(p) \ge \SSS(p) - \SSS(\map p) \,.
\label{eq:secondlaw}
\end{align}
This can be viewed as a derivation of the second law of thermodynamics, given the assumption that our
system is evolving forward in time as a CTMC.

All of these results are purely mathematical and hold for any CTMC dynamics, even in contexts having nothing to do with
physical systems. However, these results can be interpreted in 
thermodynamic terms when each $\Wt^\bathix$  obeys \textbf{local detailed balance} 
(LDB) with regard to %
thermodynamic
reservoir $\bathix$~\cite{van2015ensemble,seifert2012stochastic,wolpert_thermo_comp_review_2019}. 
Consider a system with Hamiltonian $H_t(\cdot)$ at time $t$, and let $\bathix$ label a heat
bath whose inverse temperature is $\beta_\bathix$.  
Then, $\Wt^\bathix$ will obey LDB when for all $x,x'\in \sX$, either $\W_t^\bathix(x  \totrans  x')= \W_t^\bathix(x' \totrans  x)= 0$, or 
\begin{align}
\label{eq:ldb}
\frac{\Wt^\bathix(x  \totrans  x')}{\Wt^\bathix(x'  \totrans  x)} = e^{\beta_\bathix(H_t(x) - H_t(x'))}.
\end{align}
If LDB holds, then  
EF can be written as~\cite{esposito2010three}
\begin{align}
\Q(p) = \sum_\bathix \beta_\bathix Q_\bathix(p) \,,
\label{eq:efheat}
\end{align}
where 
$Q_\bathix$ is the expected amount of heat transfered from the system into bath $\bathix$ 
during the process. %

We end with two caveats concerning the use of stochastic thermodynamics to analyze real-world circuits. 
First, many of the processes described in this paper require that some transition rates
be exactly zero at some moments. In many physical models this implies
there are infinite energy barriers at those times. In addition, perfectly carrying out any deterministic map (such as bit erasure) requires the use of infinite energy gaps between some states at some times.  
Thus, as is conventional (though implicit) in much of the thermodynamics of computation literature, 
the
thermodynamic costs derived in this paper should be understood as limiting values.

Second, there are some conditional distributions that take
the system state at time $0$ to its state at time $1$, $\map(x_{1} \vert x_0)$, that cannot be implemented by any CTMC~\cite{lencastre2016empirical,kingman_imbedding_1962}. For example,  
one cannot carry out (or even approximate) a simple bit flip $\map(x_{1} \vert  x_0)=1-\delta(x_1,x_0)$ with a CTMC.
Now, we \textit{can} design a CTMC to implement
any given $P(x_{1} | x_0)$ to arbitrary precision, if the dynamics is expanded
to include a set of ``hidden states'' in addition to the states in $X$~\cite{owen_number_2018,wolpert_spacetime_2019}. 
However, as we
explicitly demonstrate below,  
SR circuits can be implemented without introducing any such hidden states; this is one of their advantages. (See also \cref{ex:ctmchiddencost} in {\apprefpartialsupport}.)

\subsection{Information theory}
\label{sec:info_notation}

Given two distributions $p$ and $r$ over random variable $X$, we use notation like
$\SSS(p)$ for Shannon entropy and $\DDf{p}{\r}$ for Kullback-Leibler (KL) divergence. 
We write $S(\map p)$ to refer to the entropy of the distribution over $Y$ induced by $p(x)$ and the conditional distribution $\map$, as defined in \cref{eq:matrixnotation}, and similarly for other information-theoretic measures.
Given two random variables $X$ and $Y$ with joint distribution $p$, we write 
{$ \SSS(p(X \mid Y))$ for the conditional entropy of $X$ given $Y$, and}
$I_p(X;Y)$ for the mutual information (we drop the subscript $p$ where the distribution is clear from context). All information-theoretic measures are in nats.

Some of our results below are formulated in terms of
an extension of mutual information to more than two random variables that
is known as ``total correlation'' or \textbf{multi-information}~\cite{watanabe1960information}. For a random variable $X_{A}=(X_1, X_2,\dots)$, the multi-information is defined as
\begin{align}
 \II(p_A) =  \Bigg[ \sum_{v \in A} \SSS(p_v) \Bigg]  -\SSS(p_A )  \,.
 \label{eq:multiinfodef}
 \end{align}

Some of the results reviewed below are formulated in terms of
the \textbf{multi-divergence} between two probability distributions over the same multi-dimensional space. This is a 
recently introduced information-theoretic measure  
which can be viewed as an extension of multi-information to include a reference distribution. 
Given two distributions $p_A$ and $r_A$ over $X_A$, the multi-divergence is defined as
\begin{align}
\IIDDf{p_A}{\r_A} := \DDf{p_A}{ \r_A} - \sum_{v \in A}  \DDf{p_v}{ \r_v} \,.
\label{eq:15a}	
\end{align}
Multi-divergence measures how much of the divergence between $p_A$ and $\r_A$ arises from the correlations among the variables $X_1,X_2,\ldots$, rather than from the marginal distributions of
each variable considered separately. See App.\,A of \cite{wolpert_thermo_comp_review_2019} for a
discussion of the elementary properties of multi-divergence and
its relation to conventional multi-information.
Note that multi-divergence is defined with ``the opposite sign'' of multi-information,
i.e., by subtracting a sum of terms involving marginal variables from a term involving
the joint random variable, rather than vice-versa.

\subsection{`Island' decomposition of a conditional distribution}
\label{sec:island_def}

A central part of our analysis will involve the equivalence relation,
\begin{align}
x \sim x' \quad \Leftrightarrow \quad \exists  y \; : \; \map(y \mid x) > 0, \map(y \mid x') > 0 .
\label{eq:islanddef}
\end{align}
In words, $x \sim x'$ if there is a non-zero probability of transitioning to some state $y$ from both $x$ and $x'$ under the conditional distribution  $\map(y \mid x)$. 
We define an \textbf{island}  of the conditional distribution $\map(y \mid x)$ as any connected subset of $\sX$ given by the transitive closure of %
this equivalence relation.
The set of islands of any $\map(\cdot \mid \cdot)$ form a partition of $\sX$,
which we write as $L(\map)$. 
%
%
%
%
%
%
%


We will also use the notion of the islands of the conditional distribution $\map$ restricted to some subset of states $\sZ \subseteq \sX$.  We write $L_\sZ(\map)$ to indicate the partition of $\sZ$ generated by the transitive closure of the relation given by \cref{eq:islanddef} for $x,x' \in \sZ$.  Note that in this notation, $L(\map) = L_\sX(\map)$.

\begin{figure}
  \includegraphics[width=0.9\linewidth]{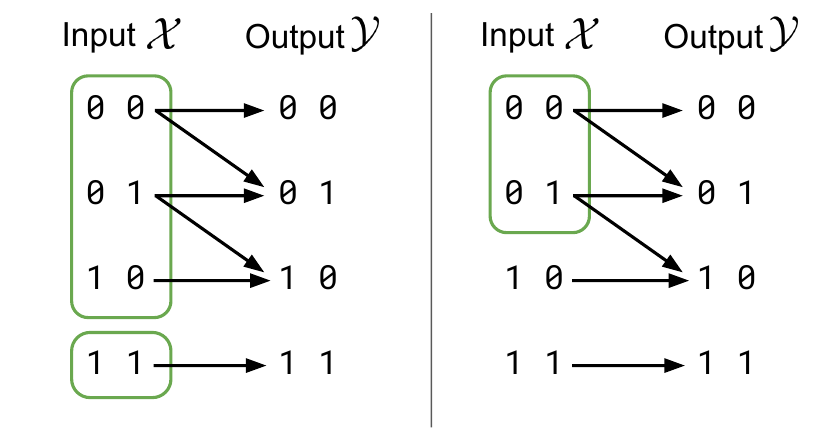}
  \caption{Left: the island decomposition for the conditional distribution in \cref{eq:mapis}. The two islands are indicated by the two rounded green boxes. Right: the island decomposition for this map with $\sX$ restricted to the subset of states $\sZ=\{\mathsf{00},\mathsf{01}\}$. For this subset of states, there is only one island, indicated by the round green box.
  } 
  \label{fig:islands}
\end{figure}

As an example, if $\map(y \vert x)>0$ for all $x \in \sX$ and $y \in \sY$ (i.e., any final state $y$ can be reached from any initial state $x$ with non-zero probability), then $L(\map)$ contains only a single island. %
As another example, if $\map(y \mid x)$ implements a deterministic
function $f : \sX \rightarrow \sY$, then $L(\map)$ is the partition of $\sX$
given by the pre-images of $f$, $L(\map) = \{ f^{-1}(y) : y \in \sY\}$. %
For example, the conditional distribution that implements the logical AND operation,%
\begin{align}
\map(c \vert a,b) = \delta(c,a b)
\end{align}
has two islands, corresponding to $(a,b) \in \{(0,0),\allowbreak (0,1),\allowbreak (1,0)\}$ and $(a,b) \in \{(1,1) \}$, respectively.  As a final example, let $\map$ be the following conditional distribution:
\begin{align}
\map(y\vert x) = \begin{bmatrix}
0.5 & 0.5 & 0  &0 \\
 0 & 0.5 & 0.5 & 0 \\
0 & 0 & 1 & 0 \\
 0 & 0 & 0  & 1 
\end{bmatrix},
\label{eq:mapis}
\end{align}
where the rows and columns corresponds to the ordered states $\sX = \sY = \{\mathsf{00},\mathsf{01},\mathsf{10},\mathsf{11}\}$.  The island decomposition for this map is illustrated in \cref{fig:islands} (left).  We also show the island decomposition for this map restricted to subset of states $\sZ=\{\mathsf{00},\mathsf{01}\}$, in \cref{fig:islands} (right).

For any distribution $p$ over $\sX$, any $\sZ\subseteq \sX$, and any $\l \in L_\sZ(\map)$, 
$p(c)= \sum_{x\in c} p(x)$ is the probability that the state of the system is contained in island $c$. It will be helpful to use the unusual notation   
$p^c(x)$ to indicate the conditional probability of $x$ within island $c$. Formally, 
$p^c(x) = p(x)/p(c)$ if $x\in c$, and $p^c(x) = 0$ otherwise.

Intuitively, the islands of a conditional distribution
are ``firewalled'' subsystems, both computationally
and thermodynamically isolated from one another for the duration
of the process implementing that conditional distribution. In particular, we will show below that 
the EP of running  $\map(y \mid x)$  on an initial distribution $p$ 
can be written as a weighted sum of the EPs involved in running $P$ on each separate 
island  $\l \in \Lmap$, where the weight for island $c$ is given by  $p(c)$.

\subsection{Circuit theory}
\label{sec:circuit_theory}

For the purposes of this paper, a \textbf{(logical) circuit} is a special type of Bayes net~\cite{kofr09,ito2013information,ito_information_2015}.
Specifically, we define any circuit $\CC$ as a tuple $(V, E, F, \sX_V)$. The pair $(V, E)$
specifies the vertices and edges of a directed acyclic graph (DAG). (We sometimes call this DAG
the \textbf{wiring diagram} of the circuit.)
$\sX_V$ is a Cartesian product $\prod_v {\sX}_v$, where each ${\sX}_v$ is the set of possible states %
associated with node $v$. $F$ 
is a set of conditional distributions, indicating the logical maps implemented at the non-root nodes of the DAG.

Following the convention in the Bayes nets literature, we orient edges in the direction of information flow. %
Thus, the inputs to the circuit are the roots of the
associated DAG and the outputs are the leaves of the 
DAG \footnote{The reader should be warned that much of the computer
science literature adopts the opposite convention.}. 
Without loss of generality, we assume that each node $v$ has a special ``initialized state'',  indicated as $\nullstate$.

We use the term \textbf{gate} to refer to any non-root node, \textbf{input node} to refer to any root node, and \textbf{output node} or \textbf{output gate} to refer to a leaf node.  
For simplicity,
we assume that all output nodes are gates, i.e., there is no root node which is also a leaf node. 
We write $\Cin$ and $x_\Cin$ to indicate the set of input nodes and their joint state, 
and similarly write $\Cout$ and $x_\Cout$ for the output nodes. %

We write the set of all gates in a given circuit as $G \subseteq V$,
and use $g \in G$ to indicate a particular gate.
We indicate the set of all nodes that are parents of gate $g$ as $\pag$. 
We indicate the set of nodes that includes gate $g$ and all parents of $g$ as $\SG := \{g\}\cup\pag$.

As mentioned, $F$ is a set of conditional distributions, indicating the logical maps implemented by each gate of the circuit.  The element of $F$ corresponding to gate $g$ is written as $\mG(x_g \vert x_\pag)$.
In conventional circuit theory, each $\mG$ is required to be deterministic (i.e., 0/1-valued). However,
we make no such restriction in this paper. 
We write the overall conditional distribution of output gates given input nodes 
implemented by the circuit $\CC$ as  
\begin{align}
\mC( x_\Cout \vert x_\Cin) = \sum_{x_{G \setminus \Cout}} \prod_{g \in G} \mG(x_g \vert x_{\pag})  \,.
\label{eq:defmc}
\end{align}

We can illustrate this formalism using the parity circuit shown in \cref{fig:1}.  Here, $V$ has 5 nodes, corresponding to the 3 input nodes and the two gates. The circuit operates over bits, so $\sX_v = \{0,1\}$ for each $v \in V$. Both gates carry out the XOR operation, so both elements of $F$ are given by $\mG(x_g \vert x_\pag) = \delta(x_g, \mathsf{XOR}(x_\pag))$ (where $\mathsf{XOR}(x_\pag)=1$ when the two parents of gate $g$ are in different states, and $\mathsf{XOR}(x_\pag)=0$ otherwise). Finally, 
 $E$ has four elements representing the edges connecting the nodes in $V$, which are shown as arrows in \cref{fig:1}.

In the conventional 
representation of a physical circuit as a (Bayes net) DAG, the wires in the physical circuit
are identified with edges in the DAG. %
However, %
in order to account for
the thermodynamic costs of communication between gates, it will be useful to represent the wires themselves as a special kind of 
gate. This means that the DAG $(V, E)$ we use to represent a  particular physical circuit
is not the same as the DAG $(V', E')$ that would be used in the conventional computer science
representation of that circuit. Rather $(V, E)$ is constructed from $(V', E')$ as follows. 

To begin, $V = V'$ and $E = E'$. Then, for each edge $(v \rightarrow \tilde{v})
\in E'$,
we first add a \textbf{wire gate} $w$ to $V$, 
and then add two edges to $E$: an edge from $v$ to $w$ and an edge from $w$ to $\tilde{v}$. 
So 
a wire gate $w$ has a single parent and a single child, and 
implements the identity map, $\pi_{w}(x_{w} \vert x_{{\rm{pa}}(w)}) = 
\delta(x_{w}, x_{{\rm{pa}}(w)})$. (This is an idealization of the real world, in which wires have nonzero probability of introducing errors.) 
We sometimes calls $(V, E)$ the \textbf{wired circuit}, to distinguish it from the original,
\textbf{logical circuit} defined as in computer science theory, $(V', E')$. We  use %
$W\subset G$ to indicate the set of wire gates in a wired
circuit. %

Every edge in a wired circuit either connects a wire gate to a non-wire gate
or vice versa. Physically, the edges of the DAG of a wired circuit
don't represent interconnects (e.g., copper wires), as they do in a logical circuit. Rather they only
indicate physical identity: an edge $e \in E$ going into a wire gate $w$ 
from a non-wire node $v$ indicates that the same physical variable will be
written as either $X_v$ or $X_{\pa(w)}$. 
Similarly, an edge $e \in E$ going into a non-wire gate $g$ 
from a wire gate $w$ indicates that $X_w$ is the same physical variable (and so always has the same state) as the corresponding
component of $X_\pag$.   
However, despite this modified meaning of the nodes in a wired circuit, \cref{eq:defmc} still applies to any wired circuit, as well
as applying to the corresponding logical circuit.
In \cref{fig:wiredcircuit}, we demonstrate how to represent the 3-bit parity circuit from \cref{fig:1} as a wired circuit.

\begin{figure}
  \includegraphics[width=0.5\linewidth]{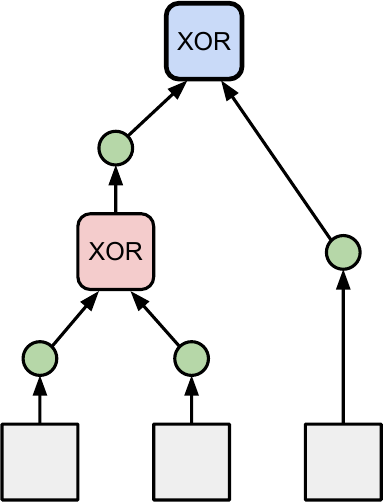}
  \caption{The 3-bit parity circuit of \cref{fig:1} represented as a wired circuit. Squares represent input nodes, rounded boxes represent non-wire gates, and smaller green circles represent wire gates. The output XOR gate is in blue, while the other (non-output) XOR gate is in red.} 
  \label{fig:wiredcircuit}
\end{figure}

We use the word ``circuit'' to refer to either an abstract wired (or logical) circuit, or to a physical system that implements that 
abstraction. Note that there are many details of the physical system that are not specified in the associated abstract circuit. 
When we need to distinguish the abstraction from its physical implementation, %
we will refer to the latter as a \textbf{physical
circuit}, with the former being the corresponding wired circuit. The context will 
always make clear whether we are using terms like ``gate'', ``circuit'', etc., 
to refer to physical systems or to their formal abstractions. %

Even if one fully specifies the distinct physical subsystems
of a physical circuit that
will be used to implement each gate in a wired circuit, we
still do not have enough information concerning the physical circuit to
analyze the thermodynamic costs of running it. We still need to specify
the initial states of those subsystems (before the circuit begins running), 
the precise sequence of operations of the gates in the circuit, etc. 
However, before considering these issues, we need to analyze the general form of the thermodynamic costs of running
individual gates in a circuit, isolated from the rest of the circuit. We do that in the next section.

\section{Decomposition of EF}
\label{sec:efdecomp}
Suppose we have a fixed physical system whose dynamics over some time interval is specified
by a conditional distribution $P$, and let $p$ be its initial state distribution, which we can vary.
We decompose the EF of running that system into a sum
of three functions of $p$. Applied to any specific gate in a circuit 
(the ``fixed physical system''), this decomposition tells us how the thermodynamic
costs of that gate  would change
if the distribution of inputs to the gate were changed. 
%

First, \cref{eq:secondlaw} %
tells us that 
the minimal possible EF, across all physical processes %
that transform $p$ into $p' := Pp$, is given by the drop in system entropy.  
%
%
%
%
%
%
%
%
We refer to this drop
as the \textbf{Landauer cost} of computing $\map$ on $p$, 
and write it as %
\begin{align}
\LL(p) := \SSS(p) - \SSS(\map p) \,.
\label{eq:ldef}
\end{align}

Since EF is just Landauer cost plus EP, our next task is to calculate how the EP incurred by a fixed physical
process depends on the initial distribution $p$ of that process. To that end,
in the rest of this section we show that EP can be decomposed into a sum of two non-negative functions of $p$.
Roughly speaking, the first of those two functions reflects the deviation of the initial distribution $p$
from an ``optimal'' initial distribution, while the second term 
reflects the remaining EP that would occur even if the process were run
on that optimal initial distribution. %

To derive this decomposition, we make use of a mathematical result provided by the following theorem.  The theorem considers any function of the initial distribution $p$ which can be written in the form $S(\map p) -S(p) + \mathbb{E}_p[f]$ (i.e., the increase of Shannon entropy plus an expectation of some quantity with respect to $p$). The EP incurred by a physical process can be written in this form (by \cref{eq:newdecomp1}, where $\mathbb{E}_p[f]$ refers to the EF).  Further below, we will also consider other functions, which are closely related to EP, that can be written in this special form. The theorem shows that any function with this special form can be decomposed into a sum of the two terms described above: the first term reflecting deviation of $p$ from the optimal initial distribution (relative to all distributions with support in some restricted set of states, which we indicate as $\sZ$), and a remainder term.

\begin{restatable}{theorem}{thmmaincost}
\label{thm:cost}
Consider any function $\FF :\Delta_\sX \to \mathbb{R}$ of the form
\begin{align*}
\FF(p) := S(\map p) - S(p) + \mathbb{E}_p[f]  
\end{align*}
where  $\map(y\vert x)$ is some conditional distribution of $y \in \sY$ given
 $x\in \sX$ and $f : \sX \to \mathbb{R} \cup \{ \infty \}$ is some function. 
Let $\sZ$ be any subset of $\sX$ such that $f(x)<\infty$ for $x\in\sZ$, and let $q \in \Delta_\sZ$ be any distribution that obeys 
\[
q^\l \in \argmin_{{\rr : \supp \rr \subseteq c}}  \FF(\rr)  \;\;\text{ for all }\;\; \l \in L_\sZ(\map).
\]
Then, each $q^c$ will be unique, and for any $p$ with $\supp p \subseteq \sZ$,
\begin{align*}
\FF(p) & = \DDf{p}{\qPP} - \DDf{\map p}{\map \qPP} + \sum_{\mathclap{\l \in L_\sZ(\map)}} p(c) \FF(\qcPP)  .
\end{align*}
\end{restatable}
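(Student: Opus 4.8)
The plan is to first reduce to the case of a single island, then handle the single-island case by a direct computation exploiting the structure of $\FF$. For the reduction, I would observe that both $S(\map p)-S(p)$ and $\mathbb{E}_p[f]$ decompose nicely over the islands of $L_\sZ(\map)$. Writing $p = \sum_{c} p(c)\, p^c$, the key point is that since distinct islands have disjoint images under $\map$ (by the very definition of the equivalence relation in \cref{eq:islanddef}), the distribution $\map p$ is a mixture $\sum_c p(c)\, \map p^c$ of distributions with \emph{disjoint supports}. Hence $S(\map p) = \big[\sum_c p(c)\, S(\map p^c)\big] + H(\{p(c)\})$, and similarly $S(p) = \big[\sum_c p(c)\, S(p^c)\big] + H(\{p(c)\})$ since the $p^c$ also have disjoint supports. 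The two mixing-entropy terms $H(\{p(c)\})$ cancel, and $\mathbb{E}_p[f] = \sum_c p(c)\,\mathbb{E}_{p^c}[f]$ is linear, so
\[
\FF(p) = \sum_{c\in L_\sZ(\map)} p(c)\, \FF(p^c).
\]
This identity does the heavy lifting: it reduces everything to understanding $\FF$ on distributions supported within a single island.

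Next I would treat a single island $c$. Here I want to show that over the convex set $\{\rr : \supp\rr\subseteq c\}$, the minimizer $q^c$ of $\FF$ is unique and that for any $p^c$ supported in $c$,
\[
\FF(p^c) = \DDf{p^c}{q^c} - \DDf{\map p^c}{\map q^c} + \FF(q^c).
\]
The natural way to get this is to expand the right-hand side using the definitions of KL divergence and $\FF$: the cross terms $-\sum_x p^c(x)\ln q^c(x)$ and $+\sum_y [\map p^c](y)\ln[\map q^c](y)$ combine with $S(\map p^c)-S(p^c)+\mathbb{E}_{p^c}[f]$, and one checks that the coefficient of $p^c(x)$ collapses to exactly $\FF(q^c)$ by the stationarity (Euler–Lagrange / KKT) condition characterizing the minimizer $q^c$. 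Concretely, the first-order optimality condition for minimizing $\FF(\rr)=S(\map\rr)-S(\rr)+\mathbb{E}_\rr[f]$ over the simplex on $c$ says that the "gradient" quantity $f(x) - 1 - \ln\big(\sum_y \map(y\vert x)\big) \cdot(\ldots)$ — more precisely $f(x) + \ln \rr(x) - \sum_y \map(y\vert x)\ln[\map\rr](y)$ — is constant in $x$ over $\supp q^c$, and that constant is precisely $\FF(q^c)$ (a standard consequence of Euler's homogeneity-type argument, since $\FF$ restricted to the simplex has a gradient whose inner product with $\rr$ recovers $\FF(\rr)$ up to the entropy bookkeeping). Assembling this algebra gives the per-island identity. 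Summing $p(c)$ times it over all islands, and recombining the $\DDf{\cdot}{\cdot}$ terms back into global divergences (again using disjointness of island supports and of their images, which makes $\DDf{p}{q} = \sum_c p(c)\DDf{p^c}{q^c} + \DDf{\{p(c)\}}{\{q(c)\}}$ and likewise downstream, with the mixing-divergence terms matching on both sides — here I need $\map q$ to also split correctly, which it does since $q^c$ is supported in $c$), yields the claimed formula.

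For uniqueness of each $q^c$: I would argue that $\FF$ restricted to $\{\rr:\supp\rr\subseteq c\}$ is strictly convex, \emph{or} more robustly, that $\FF(\rr) = \DDf{\rr}{q^c} - \DDf{\map\rr}{\map q^c} + \FF(q^c)$ would hold for \emph{any} minimizer $q^c$, and then the difference of two such identities forces the two candidate minimizers to have zero KL divergence from each other. Strict convexity is cleaner if available: $-S(\rr)$ is strictly convex, $S(\map\rr)$ is concave, so $\FF = S(\map\rr) - S(\rr) + \text{linear}$ — the concavity of $S(\map\rr)$ works against us, but on a single island $\map$ is "reversible enough" that $S(\map\rr) - S(\rr)$ is still strictly convex in the right directions; I would verify this via the same disjoint-image structure, or else fall back on the identity-based uniqueness argument, which I expect is the safer route to write down.

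The main obstacle I anticipate is the bookkeeping in the single-island first-order condition: pinning down that the Lagrange multiplier for the normalization constraint equals exactly $\FF(q^c)$ (rather than $\FF(q^c)$ plus some stray constant), and correctly handling boundary cases where $\supp q^c$ is a proper subset of $c$ (so the optimality condition is an inequality off the support). Carefully separating the "on support" equality from the "off support" inequality, and checking that $p^c$ with support possibly larger than $\supp q^c$ still gives the stated equality (this should follow because $\DDf{p^c}{q^c}=\infty$ exactly when $\supp p^c \not\subseteq \supp q^c$, matching the fact that $\FF(p^c)$ can be $+\infty$ only through the entropy terms — actually here one must be slightly careful, and I would lean on the hypothesis $f<\infty$ on $\sZ$ to keep $\mathbb{E}_{p^c}[f]$ finite and locate all divergence in the $\DDf{\cdot}{\cdot}$ terms). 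The rest is routine algebra once the island decomposition identity $\FF(p)=\sum_c p(c)\FF(p^c)$ and the optimality condition are in hand.
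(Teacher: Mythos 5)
Your overall route coincides with the paper's: an island decomposition identity $\FF(p)=\sum_{\l} p(\l)\FF(p^\l)$ (which you justify correctly via disjointness of the islands' supports and of their images, exactly as the paper does), a per-island variational identity $\FF(p^\l)-\FF(q^\l)=\DDf{p^\l}{q^\l}-\DDf{\map p^\l}{\map q^\l}$ obtained from first-order optimality of $q^\l$ (the paper phrases this as a directional-derivative computation rather than a coordinatewise KKT condition, but it is the same calculation), recombination of the per-island divergences into global ones, and uniqueness via the ``apply the identity to two minimizers'' trick. On that last point your phrasing is a bit loose --- $\DDf{p}{q}-\DDf{\map p}{\map q}=0$ does not by itself give $\DDf{p}{q}=0$; the paper extracts the equality condition of the log-sum inequality, namely $p(x)/q(x)=\alpha\, [\map p](y)/[\map q](y)$ whenever $\map(y\vert x)>0$, and then uses connectivity of the island to propagate ``$p/q$ is constant'' across all of $\l$. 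That connectivity step is where the island structure is genuinely used and should not be omitted.

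The genuine gap is the one you yourself flag as the ``main obstacle'' and then resolve incorrectly: you must show that the minimizer $q^\l$ has \emph{full} support on $\l$ (equivalently, on $\{x\in \l: f(x)<\infty\}$, which is all of $\l$ since $\l\subseteq\sZ$). This is not bookkeeping. If $\supp q^\l$ were a proper subset of $\l$, then (i) the equality case of the variational inequality fails, because $q^\l$ is no longer in the relative interior of $\{\rr:\supp\rr\subseteq \l\}$ and the first-order condition off the support is only an inequality; and (ii) your proposed fallback --- that $\DDf{p^\l}{q^\l}=\infty$ would be ``matched'' by $\FF(p^\l)=\infty$ --- is false: $\FF(p^\l)$ is finite for every $p^\l$ supported in $\l$ (the entropy terms are bounded and $f<\infty$ on $\sZ$), so the claimed identity would simply be wrong for any $p^\l$ putting mass outside $\supp q^\l$. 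The paper closes this with a separate lemma proved by contradiction: if $q^\l$ omitted some $\hat{x}\in\l$ with $f(\hat{x})<\infty$, then by the island structure there is a $\hat{y}\in\supp(\map q^\l)$ with $\map(\hat{y}\vert\hat{x})>0$; applying the variational inequality along the segment from $q^\l$ toward the point mass $u$ at $\hat{x}$ gives $\FF(u)\ge \DDf{u}{q^{\e}}-\DDf{\map u}{\map q^{\e}}+\FF(q^\l)$, and the right-hand side contains a term of order $\map(\hat y\vert\hat x)\ln\!\big((1-\e)q^\l(\hat{y})/(\e\,\map(\hat y\vert \hat x))\big)$ that diverges as $\e\to 0^{+}$, contradicting the finiteness of $\FF(u)$. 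You need this argument (or an equivalent one) before the per-island equality, and hence the theorem, is established.
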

\noindent
(We emphasize that that $\map$ and $f$ are implicit in the definition of $\FF$. We remind the reader that the definition of $L_\sZ$ and $q^c$ is provided in \cref{sec:island_def}. See proof in \apprefcost.)

Note that \cref{thm:cost} does not suppose that $\q$ is unique, 
only that the
conditional distributions within each island, $\{\qcPP \}_c$, are. 
Moreover, as implied by the statement of the theorem,
the overall probability weights assigned to the separate islands,  $\{ q(c)\}_c$, has no effect on 
the value of $\FF$. %

%
%
%
%
Consider some conditional distribution $\map(y\vert x)$, with  $\sY=\sX$, implemented by a physical process. Then, if we take $\sZ = \sX$ and $\mathbb{E}_p[f]=\Q$ in \cref{thm:cost}, the function $\FF$
is just the EP of running the conditional distribution $P(y\vert x)$. This establishes the following decomposition of EP:
\begin{align}
\EP(p) = \DDf{p}{\qPP} - \DDf{\map p}{\map \qPP} + \sum_{\mathclap{\l \in \Lmap}} p(c) \EP(q^c) \,.
\label{eq:epdecomp1}
\end{align}
We emphasize that~\cref{eq:epdecomp1} holds without any restrictions on the process, e.g., we do not require that the process obey LDB.
In fact,~\cref{eq:epdecomp1} even holds if the process does not evolve according to a CTMC (as long as EP can be defined via \cref{eq:newdecomp1}).

We refer to the first term in \cref{eq:epdecomp1}, the drop in KL divergence between $p$ and $\qPP$ as both evolve under $\map$, as \textbf{mismatch cost} \footnote{
	\cite{kolchinsky2016dependence} derived \cref{eq:epdecomp1} for the special case 
	where %
	$\map$ has a single island within $\sX$, and only provided a lower bound for more general cases. In that paper, mismatch cost 
	is called the ``dissipation due to incorrect priors'', due to a particular Bayesian interpretation of $q$.
}.
Mismatch cost is non-negative by the data-processing inequality for KL 
divergence~\cite{csiszar_information_2011}.
It equals zero in the special case that $p^c = \qcPP$ 
for each island $c \in L_\sZ(\map)$. %
We refer to any such initial distribution $p$ that results in zero mismatch cost
as a \textbf{prior distribution} of the physical process that implements the conditional distribution $\map$ (the term `prior' reflects a Bayesian interpretation of
$\q$; see~\cite{wolpert_arxiv_beyond_bit_erasure_2015,kolchinsky2016dependence}.) If there is more than one island in $L_\sZ(\map)$, the prior distribution is not unique. 

We call the second term in our decomposition of EP in \cref{eq:epdecomp1}, $\sum_{\l \in L_\sZ(\map)} p(c) \EP(q^c)$,
the \textbf{residual EP}. 
In contrast to mismatch cost, residual EP
does not involve information-theoretic quantities, and depends linearly on $p$.
When $L_\sZ(\map)$ contains a single island, %
this ``linear'' term reduces to an additive constant, %
independent of the initial distribution. %
The residual EP terms $\{ \sigma(q^c)\}_c$ are all non-negative, since EP is non-negative.
%
%
%
%
%
%
%
%
%
%
%
%

Concretely, the conditional distributions $\{\qcPP\}_c$ 
and the corresponding 
set of real numbers $\{\EP(\qcPP)\}_c$ depend on the precise physical details 
of the process, beyond the fact that that process implements
$\map$.  Indeed, by appropriate design of the ``nitty gritty'' details of the physical process,
it is possible to have $\EP(\qcPP) = 0$ for all $\l \in L_\sZ(\map)$, in which case 
the residual EP would equal zero for all $p$. (For example, this will be the
case if the process is an appropriate quasi-static transformation; see~\cite{owen_number_2018,esposito2010finite}.)
 
Imagine that the conditional distribution $\map$ is logically reversible over some set of states $\sZ\subseteq \sX$, and that $\supp p \subseteq \sZ$. Then, both mismatch cost and
Landauer cost must equal zero, and EF must equal EP, which in turn must equal residual EP~\footnote{When \unexpanded{$\map$} is logically reversible over initial states $\sZ$, each state in \unexpanded{$\sZ$} is a separate island, which means that 
EF, EP, and residual EP can be written as \unexpanded{$\sum_{x \in \sZ} p(x) \EP(u^x)$}, where \unexpanded{$u^x$} 
indicates a distribution which is a delta function over state \unexpanded{$x$}.}.
Conversely, if $\map$ is not logically reversible over $\sZ$, then mismatch cost cannot 
be zero for all initial distributions $p$ with $\supp p \subseteq \sZ$ (for such a $P$, regardless of what $q$ is, there will be some $p$ with $\supp p \subseteq \sZ$ such that the KL divergence between $p$ and $q$ will shrink under the mapping $\map$). 
Thus, for any fixed process that implements a logically irreversible map, there will be some initial distributions $p$ that result in unavoidable EP.

\def\peq{p_\mathrm{eq}}

To provide some intuition into these results,
the following example reformulates the EP of a very commonly considered scenario as a special case of \cref{eq:epdecomp1}:
\begin{example}
Consider a physical system evolving according to an irreducible master equation, %
while coupled to a single thermodynamic reservoir and without external driving.  
Because there is no external driving, the master equation is time-homogeneous with some %
unique equilibrium distribution $\peq$. %
So the system is relaxing toward that equilibrium as it undergoes the conditional distribution $P$ over the interval
$t \in [0, 1]$.

For this kind of relaxation process, it is well known that the EP can be written as~\cite{schlogl1971stability,schnakenberg_network_1976,esposito2010three}:
\begin{align}
\label{eq:eprelax}
\EP(p) = \DDf{p}{\peq} - \DDf{\map p}{\peq} \,.
\end{align}
\cref{eq:eprelax} can also be derived from our result,
 \cref{eq:epdecomp1}, since 
\begin{enumerate}
\item Taking $\sZ=\sX$, $P$ has a single island
(because the master equation is irreducible, and therefore any state %
 is reachable from any other over $t\in[0,1]$);
\item  The prior distribution within this single island is $q =\peq$ (since the EP
would be exactly zero if the system were started at this equilibrium, which is a fixed point of $\map$); 
\item The residual EP is $\sigma(q) = 0$ (again using fact that
EP is exactly zero for $p = \peq$, and that there is a single island);
\item  $\map q =\peq$ (since there is no driving, and the equilibrium distribution is a fixed point of $\map$). %
\end{enumerate}
Thus, \cref{eq:epdecomp1} can be seen as a generalization of the well-known relation given by \cref{eq:eprelax}, which is defined for simple relaxation processes, to processes that are driven and possibly connected to multiple reservoirs.  
\end{example}

The following example addresses the effect of possible discontinuities in the island decomposition of $\map$
on our decomposition of thermodynamic costs:

\begin{example}
	
Mismatch cost and residual EP are both defined in terms of the island decomposition of the conditional distributions $\map$ over some set of states $\sZ$. 
That decomposition in turn depends on which (if any) 
entries in the conditional probability distribution $\map$ are exactly 0.  
%
%
This suggests that the decomposition of \cref{eq:epdecomp1} 
can depend discontinuously on very small 
variations in $\map$ which replace strictly zero entries in $\map$ with infinitesimal values, since
such variations will change the island decomposition of $\map$.
%
%
%
%
%

To address this concern, first note that if $\map \simeq\map'$, then the EP of the real-world process that 
implements $\map'$ can be approximated as
\begin{align}
\EP'(p) & = S(\map' p)- S(p) + \Q'(p) \nonumber \\
& \simeq  S(\map p) - S(p) + \Q'(p) ,
\label{eq:approx0}
\end{align}
where $\Q'(p)$ is the EF function of the real-world process, with the approximation becoming exact as $\map \rightarrow \map'$
\footnote{The fact that
	$S(\map' p)\rightarrow  S(\map p)$ as $\map \rightarrow \map'$ follows from
	\cite[Thm. 17.3.3]{cover_elements_2012}.}. If we now apply \cref{thm:cost} to the RHS of \cref{eq:approx0}, we see that so long as $\map'$ is close enough to $\map$,
we can approximate
$\EP'(p)$ as a sum of mismatch cost and residual EP using the islands 
of the idealized map $\map$, instead of the actual map $\map'$.
\end{example}

\section{Solitary processes}
\label{sec:solitary}

Implicit in the definition of a physical circuit is that it is ``modular'', in 
the sense that when a gate in the circuit runs, 
it is physically coupled to the gates that are its direct inputs, and those that directly
get its output, but is %
\textit{not} physically coupled to any other gates in the circuit.
This restriction on the allowed physical coupling is a constraint on the possible processes that implement each gate 
in the circuit. It has major thermodynamic consequences, which we analyze in this section. 

To begin, suppose we have a system that can be decomposed into two 
separate subsystems, $A$ and $B$, so that the system's overall state space $\sX$ can be written as
$\sX = \sXAB$, with states $(\xA, \xB)$. 
For example, $A$ might contain a particular gate and its inputs, while $B$ might consist
of all other nodes in the circuit.  %
We use the term {\textbf{solitary process}} to refer to a physical process over state space $\sXAB$ 
that takes place during $t\in[0,1]$ where: %
\begin{enumerate}
\item  %
 $A$ evolves independently of $B$, and $B$ is held fixed:
\begin{align}
P(x_A', x_B' \vert x_A, x_B) = \PA(x_A'\vert x_A)\delta(x_B', x_B) \,.
\label{eq:fracsystem1}
\end{align}
\item 
The EF of the process depends only on the initial distribution over $\mathcal{X}_A$, which we indicate with the following notation: 
%
%
%
\begin{align}
\label{eq:fracsystem2}
\Q(p)=\Q_A(\pA) \,.%
\end{align}
\item The EF %
is lower bounded by the change in the marginal entropy of subsystem $A$,
\begin{align}
\label{eq:fracsystem3}
\Q_A(\pA) &\ge S(\pA) - S(\PA \pA) \,.
\end{align}
\end{enumerate}

Note that it may be that some subset $A'$ of
the variables in subsystem $A$ don't change their state during the solitary process. In that sense 
such variables would be
like the variables in $B$.  However, if the dynamics of those variables in $A$ that \textit{do} change
state depends on the values of the variables in $A'$, 
then in general the variables in $A'$ cannot be assigned to $B$; they have
to be included in subsystem $A$ in order for condition (2) to be met.

\begin{example}
\label{ex:1}
A concrete example of a solitary process is
a CTMC where at all times, the rate matrix $\Wt$  %
has the decoupled structure
\begin{align}
\Wt(x_V  \totrans  x_V') = \delta(x_B, x_B') \sum_\bathix\Wt^{A,\bathix}(x_A  \totrans  x_A')\label{eq:decoupledW}
\end{align}
for $x_V \ne x_V'$, where $K_t^{A,\bathix}$ indicates the rate matrix for subsystem $A$ and thermodynamic reservoir $\bathix$ at time $t$~\footnote{In fact, in~\cite{wolpert_thermo_comp_review_2019} solitary processes are defined as CTMCs with this form.}. 

To verify that this CTMC  is a solitary process, first plug 
 the rate matrix in \cref{eq:decoupledW} into \cref{eq:ctmc_dynamics2} and simplify, giving 
\begin{align*}
\frac{d}{dt} p^t(x'_A, x'_B) = 
p^{t}(x_{B}')\sum_{x_{A}}p^{t}(x_{A}\vert x_{B}')\sum_\bathix \Wt^{A,\bathix}(x_{A}\to x_{A}').
\end{align*}
Marginalizing the above equation, we see that 
the distribution over the 
states of $A$ evolves independently according to
\begin{align*}
{\frac{d}{dt}} p^t(x_A') = %
\sum_{x_A} p^t(x_A) \sum_\bathix \Wt^{A,\bathix}(x_A  \totrans  x_A') \,.
\end{align*}
Note also that, given the form of \cref{eq:decoupledW}, the state of $B$ does not change. Thus, the conditional distribution carried out by this CTMC over any time interval must have the form of \cref{eq:fracsystem1}. (See also App.~B in~\cite{wolpert_thermo_comp_review_2019}.)

Next, plug \cref{eq:decoupledW} into \cref{eq:defefrate0} and simplify to get
\begin{align}
\dot{\Q}(p^t) = \sum_{\mathclap{\bathix, x_A,x_A'}} p^t(x_A) \Wt^{A,\bathix}(x_A  \totrans  x_A') \ln \frac{\Wt^{A,\bathix}(x_A  \totrans x_A')}{\Wt^{A,\bathix}(x_A'  \totrans  x_A)} \,.
\label{eq:22a}
\end{align}
Thus, the EF incurred by the process evolves exactly as if $A$ were an independent system connected to a set of thermodynamic reservoirs. %
Therefore, a joint system evolving according to \cref{eq:decoupledW} 
will satisfy \cref{eq:fracsystem2,eq:fracsystem3}. 
\end{example}

We refer to the lower bound on the EF of subsystem $A$, as given in \cref{eq:fracsystem3}, 
as the \textbf{subsystem Landauer
cost} for the solitary process. We make the associated definition
that the \textbf{subsystem EP} for the solitary process is
\begin{align}
\subEP_A(p_A) := \Q_A(p_A) - \big[\SSS(p_A) - \SSS(\map_A p_A)\big] ,
\end{align}
which by \cref{eq:fracsystem3} is non-negative. 
Note that if $\map_A$ is a logically reversible conditional distribution, then subsystem EP is equal to the EF incurred by the solitary process.

In general,
$\SSS(p_A) - \SSS(\map_A p_A)$, the subsystem Landauer cost, will not 
equal $\SSS(p_{AB}) - \SSS(\map p_{AB})$, the Landauer cost of the  entire joint system. 
Loosely speaking, an observer examining the entire system would ascribe a different value to its 
entropy change during the solitary process than would an observer examining just subsystem $A$ --- even though
subsystem $B$ doesn't change its state. We use the term  \textbf{Landauer loss} to refer to  this difference in Landauer costs,
\begin{align}
\label{eq:land-loss-def}
\LandauerLoss(p) &:= 
\big[\SSS(p_A) - \SSS(\map_A p_A)\big] 
- \big[\SSS(p_{AB}) - \SSS(\map p_{AB})\big] .
\end{align}
Assuming that the lower bound  \cref{eq:fracsystem3} can be saturated,
since the bound \cref{eq:secondlaw} can be saturated, 
the Landauer loss is the increase in the minimal EF that must be incurred by any process that carries out $\map$
if that process is required to be a solitary process.

By using the fact that subsystem $B$ remains fixed throughout a solitary process, the Landauer loss can  be rewritten as the drop  in the mutual information between $A$ and $B$, from the beginning to the end of the solitary process,
\begin{align}
\LandauerLoss(p) &= I_p(A;B) - I_{\map p}(A;B) .
\label{eq:drop_in_mutual}
\end{align}
Applying the data processing inequality establishes that Landauer loss is non-negative~\cite{cover_elements_2012}.
(See \cref{sec:earlierwork} for a discussion of the relation between solitary processes and other processes that have been
considered in the literature.)

If $\map_A$ (and thus also $\map$) is logically reversible, then the Landauer loss will always be
zero. However, for other conditional distributions, there is always some $p$ that results in strictly positive Landauer loss.
Moreover, we can rewrite it as
\begin{align}
	\LandauerLoss(p) = 
	\EP(p_{AB}) - \subEP_A(p_{A}) .
	\label{eq:lldiff}
\end{align}
So in general the subsystem EP will be less than the overall EP of the
entire system
%
%
%
%
%
%
%
%
%
%
\footnote{See~\cite{wolpert_thermo_comp_review_2019} for an example explicitly
	illustrating how the rate matrices change if we go from 
	an unconstrained process that implements $P_A$ to a solitary process that does so, and how that change increases the total EP.
	That example also considers the special case where the prior of the full $A \times B$ system 
	is required to factor into a product of a distribution over the initial
	value of $x_A$ times a distribution over the initial value of $x_B$. In particular, it shows that the Landauer loss is the minimal value of the mismatch cost 
	in this special case.}.

Finally, note that $\Q_A(p_A)$ is a linear function of the distribution $p_A$ (since EF functions are linear). %
Combining this fact with
\cref{thm:cost}, while taking $\sZ=\sX_A$, allows us to expand the subsystem EP as 
\begin{multline}
\label{eq:subep}
\subEP_A(p_A)  =\DDf{p_A}{\qPP_A} - \DDf{\map_A p_A}{\map_A \qPP_A} + \\
        \sum_{\mathclap{\l \in \LLL(\map_A)}} p_A(c) \subEP_A(\qPP_A) ,
\end{multline}
where $\qPP_A$ is a distribution over $\sX_A$ that satisfies $\subEP_a(\qPP_A^c) = \min_{r : \supp r \subseteq c} \subEP_A(r)$ for all $\l \in \LLL(\map_A)$.
As before, both the drop in KL divergence and the term linear in $p_A(c)$ are
non-negative.  
We will sometimes refer to that drop in KL divergence as
\textbf{subsystem mismatch cost}, with $q_A$ the \textbf{subsystem prior}, and refer to the linear term as \textbf{subsystem residual EP}.
Intuitively, subsystem Landauer cost, subsystem EP, subsystem mismatch cost, and subsystem residual EP are simply
the values of those quantities that an observer would ascribe to subsystem $A$
if they observed it independently of $B$. %

\section{Serial-Reinitialized circuits}
\label{sec:singlepass}

\label{sec:how_a_gate_runs}

As mentioned at the end of \cref{sec:circuit_theory}, specifying a wired circuit does not specify 
the initial distributions of the gates in the physical circuit, the sequence in which the gates 
in the physical circuit are run, etc. So it does not fully specify the dynamics  
of a physical system that implements that wired circuit.
In this section we introduce one relatively simple way of mapping a wired circuit to 
such a full specification. 
In this specification, 
%
the gates are run serially, one after the other. Moreover, the gatesreinitialize the states
of their parent gates after they run, so that the entire circuit can be repeatedly run, incurring the same expected 
thermodynamic costs each time. We call such physical systems \textbf{serial reinitialized implementations} of a given
wired circuit, or just {SR circuits} for short.

For simplicity, 
in the main text of this paper we focus on the special case 
in which all non-output nodes have out-degree 1, i.e., where 
each non-output node is the parent of exactly one gate. 
See {\apprefoutdegreebig} for a discussion of how to extend the current analysis
to relax this requirement, allowing some nodes to have out-degree larger than 1. 

There are several properties that jointly define the {SR circuit} 
implementation of a given wired circuit.

First, just before the physical circuit starts to run, 
all of its nodes have a special initialized value with probability $1$, i.e., 
$\variablevalue_v = \nullstate$ for all $v \in V$ at time $t=0$. %
Then the joint state of
the input nodes $\variablevalue_\Cin$ is set by sampling $\pIN(x_\Cin)$ \footnote{Strictly speaking, if
the circuit is a Bayes net, then $\pIN$ should be a product distribution over the root nodes.
Here we relax this requirement of Bayes nets, and let  $\pIN$ have arbitrary correlations.}. 
Typically this setting of the state of
the input nodes is done by some offboard system, e.g., the user of the digital device containing the circuit.
We do not include the details of this offboard system in our model of the physical circuit. Accordingly, we
do not include the thermodynamic costs of 
setting the joint state of the input nodes in our calculation of the thermodynamic 
costs of running the circuit \footnote{For example, it could be that at some $t<0$, 
the joint state of the input nodes is some special initialized state $\vec{\nullstate}$ with probability $1$,
and that that initialized joint state is then overwritten with the values copied in from
some variables in an offboard system, just before the circuit starts. The joint entropy of the offboard
system and the circuit would not change in this overwriting operation, and so 
it is theoretically possible to perform that operation with zero EF~\cite{parrondo2015thermodynamics}. 
However, to be able to run the circuit again after it finishes, with new
values at the input nodes set this
way, we need to reinitialize those input nodes to the joint state $\vec{\nullstate}$ .
As elaborated below, we \textit{do} include the thermodynamic costs of reinitializing those input
nodes in preparation of the next run of the circuit.
This is consistent with modern analyses of Maxwell's demon, 
which account for the costs of reinitializing the demon's memory 
in preparation for its next run~\cite{parrondo2015thermodynamics,wolpert_thermo_comp_review_2019}.
See also \cref{foot:series_of_circuits}.
\label{foot:cleanup}
}. 

After $x_\Cin$ is set this way, the SR circuit implementation begins. It works
by carrying out a sequence of solitary processes, one for each gate of the circuit,
including wire gates. 
At all times that a gate $g$ is ``running'', the combination of that gate and its parents (which we indicate as $\SG$) is the subsystem $A$ in the 
definition of solitary processes.  
The set of all other nodes of the wired circuit ($V\setminus \SG$) constitute the
subsystem $B$ of the solitary process.  
The temporal ordering of the solitary processes must be a topological ordering
consistent with the wiring diagram of the circuit: 
if gate $g$ is an ancestor of gate $g'$, then the solitary process for gate $g$ 
completes before the solitary process for gate $g'$ begins.

When the solitary process corresponding to any gate $g \in G$ begins running, $\variablevalue_g$ is still set to its initialized
state, $\nullstate$, while all of the parent nodes of $g$ are either input nodes, or other gates that have 
completed running and are set to their output values. %
By the end of the solitary process for gate $g$, %
$\variablevalue_g$ is set to a random sample of the
conditional distribution $\mG(x_g \vert x_\pag)$, while its 
parents are reinitialized to state 
$\nullstate$. %
More formally, under the solitary process for gate $g$, nodes $\SG$ evolve according to 
\begin{align}
P_g(x_{\SG}'\vert x_{\SG}) :=
\mG(x_g' \vert x_\pag) \prod_{v \in\pag} \delta(x_{v}', \nullstate) %
\label{eq:spmap}
\end{align}
while all nodes $V \setminus \SG$ do not change their states. 
(Recall notation from \cref{sec:circuit_theory}.) Note that this means that the input nodes are reinitialized
as soon as their child gates have run.

\begin{example}
In this example we demonstrate how to implement an XOR gate $g$ in an SR circuit with a CTMC, i.e., how to carry out the following logical map on the state of gate $g$,
\[
\mG(x_g\vert x_\pag) =\delta(x_g,\mathsf{XOR}(x_\pag)). 
\]
The CTMC involves a sequence of two solitary processes over $\SG$. The time-dependent rate matrix for both solitary processes has the form
\begin{align*}
\Wt(x_V \totrans x_V')=\delta(x_{V\setminus \SG}, x_{V\setminus \SG}') \Wt^{\SG}(x_\SG  \totrans  x_\SG')
\end{align*}
for all $x_V \ne x_V'$ (compare to \cref{eq:decoupledW}, where for simplicity we assume there is a single thermodynamic reservoir). The two solitary processes differ in their associated subsystem rate matrices  $\Wt^{\SG}$. 

In the first solitary process, the state of the gate's parents is held fixed, while the gate's output is changed from the initialized state to the correct XOR value. 
For $t\in [0,1]$ (the units of time are arbitrary), the subsystem rate matrix that implements this solitary process is 
\begin{multline}
\Wt^\SG \big(x_\SG \totrans x'_\SG \big) = \delta(x_{\pag}, x'_{\pag}) \times \\
\eta \left[ (1-t)\delta(x_g',\nullstate)/4 + t \mG(x_g'\vert x_\pag')/4 \right],
\label{eq:exrm}
\end{multline}
for $x_\SG \ne x'_\SG$, where $\eta > 0$ is the relaxation speed.  
Note that the term $\delta(x'_\SG, \nullstate)$ inside the square brackets encodes the assumption that the initial state of the gate is $\nullstate$ with probability $1$, while the factor of $1/4$ encodes the assumption that 
the initial distribution over the four possible states of the gate's parents is uniform.

From the beginning to the end of the first solitary process, the nodes $\SG$ are updated according to the conditional probability distribution $\map_g^{(1)}$, given by the time-ordered exponential of the rate matrix in \cref{eq:exrm} over $t\in[0,1]$. 
In the quasi-static limit $\eta \to \infty$, this conditional distribution becomes
\begin{align*}
\map_g^{(1)}(x_\SG'\vert x_\SG) = \delta(x_\pag,x_\pag')\mG(x_g'\vert x_\pag).
\end{align*}

In the second solitary process, the gate's output is held fixed while the gate's parents are reinitialized.  
Redefining the time coordinate so that this second process also transpires in $t\in [0,1]$, its subsystem rate matrix is 
\begin{align}
&\Wt^\SG \big(x_\SG \totrans x'_\SG \big) = \delta(x_{g}, x'_{g}) \times  \label{eq:exrm2} \\
& \quad \eta \bigg[(1-t)\mG(x_g'\vert x_\pag')/4 + t \prod_{\mathclap{v\in\pag}} \delta(x_v', \nullstate)/2 \bigg], \nonumber
\end{align}
for $x_\SG \ne x'_\SG$, where $\eta$ is again the relaxation speed. 
Note that $\mG(x_g'\vert x_\pag')/4$ is what the distribution over nodes $\SG$ would be at the beginning of the second solitary process, if the distribution at the beginning of the first solitary process was $\delta(x_g',\nullstate)/4$. 
From the beginning to the end of the second solitary process, the nodes $\SG$ are updated according to the conditional probability distribution $\map_g^{(2)}$, which is given by the time-ordered exponential of the rate matrix \cref{eq:exrm2}. In the quasi-static limit $\eta \to \infty$, this conditional distribution is
\begin{align*}
\map_g^{(2)}(x_\SG'\vert x_\SG) = \delta(x_g,x_g')\prod_{\mathclap{v \in\pag}} \delta(x_v', \nullstate).
\end{align*}

The sequence of two solitary processes causes the nodes in $\SG$ to be updated according to the conditional distribution  $P_g = \map_g^{(1)} \map_g^{(2)}$.  In the quasi-static limit, this is  
\begin{align}
\map_g(x_\SG'\vert x_\SG)  =\mG(x_g'\vert x_\pag) \prod_{\mathclap{v\in\pag}} \delta(x_v', \nullstate),
\label{eq:PgXOR}
\end{align}
which recovers \cref{eq:spmap}, as desired.

We now compute thermodynamic costs for the XOR gate. Let $\Q(\ppag)$ be the total EF incurred by running the sequence of two solitary process, given some initial distribution $\ppag$ over the parents of gate $g$.  Using results from \cref{sec:solitary}, write this EF as
\begin{align}
\Q(\ppag) =&\; S(\ppag) - S(\mG p_{\pag}) \nonumber \\
&+\DDf{\ppag}{\qpag} - \DDf{\mG \ppag}{\mG \qpag} \nonumber \\
&+\sum_{\mathclap{\l \in \LLL(\mG)}} \ppag(c) \subEP_\SG(\qpag),
\label{eq:exampe_full_decomp}
\end{align}
where the three lines correspond to subsystem Landauer cost, subsystem mismatch cost, and subsystem residual EP, respectively.  To derive this decomposition, we
 applied \cref{thm:cost}, while taking $\sZ = \{  x_\SG \in \sX_\SG : x_g = \nullstate \}$ (note that for this $\sZ$, $L_\sZ(\map) = \LLL(\mG)$). 

To compute the second and third of those terms, note that
in the quasi-static limit, the prior distribution is uniform:
\begin{align}
\qpag(x_\pag) = 1 / 4 .
\label{eq:prior1}
\end{align}
To see this,
suppose that the distribution over $\SG$ when the sequence of processes begins is given by $p_\SG(x_\SG) = \delta(x_g,\nullstate)\qpag(x_\pag)$.  Then,
\begin{enumerate}
 \item The system will remain in equilibrium during the first solitary process, thereby incurring zero EP. At the end of the first solitary process, it will have distribution
 \begin{align}
 [\map_g^{(1)} p_\SG](x_\SG) = \qPP_\pag(x_\pag)\mG(x_g'\vert x_\pag) .
 \label{eq:prior2}
 \end{align}
 \item Given that the system starts the second solitary process with this distribution $\map_g^{(1)} p_\SG$, it will remain in equilibrium throughout the second solitary process, thereby again incurring zero EP.
\end{enumerate}
So that sequence of processes will incur zero EP
 --- the minimum possible --- if the initial distribution is $\qpag$ over $\pag$ (and $x_g=\nullstate$), as claimed. 
In addition, the fact that the minimal EP that can be generated for any initial distribution is strictly zero means that the subsystem residual EP vanishes. This fully specifies all terms in \cref{eq:exampe_full_decomp}, as a function of $\ppag$.

As a concrete example of this analysis, consider the initial distribution which is uniform over states $\{\mathsf{00},\mathsf{01},\mathsf{10}\}$:
\[
\ppag(x_\pag) = [1-\delta(x_\pag, \mathsf{11})]/3 .
\]
For this distribution, the subsystem Landauer cost is
\begin{multline*}
S(\ppag) - S(\mG \ppag) = \\
\ln 3 + [(1/3)\ln(1/3) + (2/3)\ln(2/3)] \approx 0.46 .
\end{multline*}
The subsystem EP is 
\begin{multline*}
\DDf{\ppag}{\qpag} - \DDf{\mG \ppag}{\mG \qpag} = \\
 [\ln 4 - \ln 2] - [S(\ppag) - S(\map_g \mG \ppag)] \approx 0.23 .
\end{multline*}

We end by noting that the XOR gate may also incur some EP which is not accounted for by these calculations, due to loss of  correlations between the nodes $\SG$ and the rest of the circuit as the gate runs. This is quantified by the Landauer loss, which can be evaluated using \cref{eq:land-loss-def}, \cref{eq:drop_in_mutual}, or \cref{eq:lldiff}.

\label{ex:new_3}
\end{example}

\begin{figure*}
  \includegraphics[width=1\linewidth]{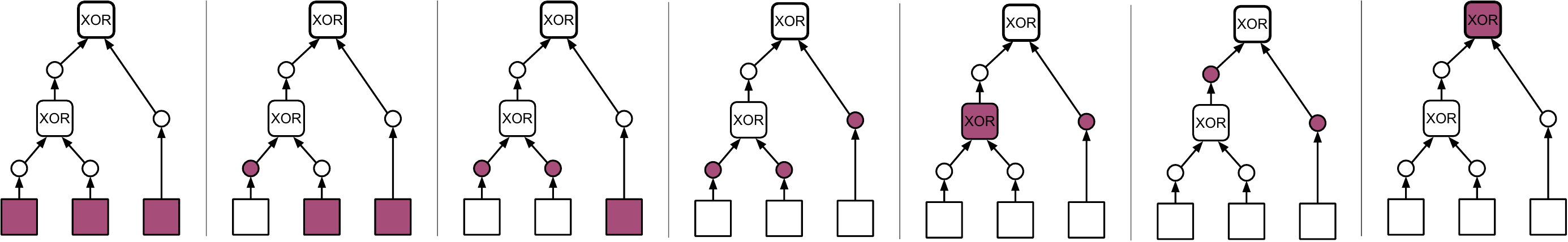}
  \caption{An SR implementation of the wired circuit shown in \cref{fig:wiredcircuit}. Each diagram represents one step of the SR implementation, with white shapes indicating nodes set to their initialized value ($\nullstate$) and maroon shapes indicates nodes that can have non-initialized values. The implementation starts with only the input nodes set to non-initialized values (left-most diagram) and ends with only the output gates set to non-initialized values (right-most diagram).} 
  \label{fig:srflow}
\end{figure*}

Given the requirement that the solitary
processes are run accordingly to a topological ordering, \cref{eq:spmap} ensures that once all the gates of the circuit have
run, the state of the output gates
of the circuit have been set to a random sample of $\mC(x_\Cout|x_\Cin)$, while 
all non-output nodes are back in their
initialized states, i.e.,  $\variablevalue_v = \nullstate$ for $v \in V \setminus \Cout$. %
%

\begin{example}
Consider the 3-bit parity circuit shown in \cref{fig:wiredcircuit}.  An SR implementation of this wired circuit would run its 6 gates in topological order, such that each gate computes its output and then reinitializes its parents.  One such sequence of steps is shown in \cref{fig:srflow} (note that some other topological orderings are also possible). Each XOR gate could be implemented by the kind of CTMC described in \cref{ex:new_3}. Each wired gate could be run by a similar kind CTMC, but which carries out the identity operation $\mG(x_g'\vert x_\pag) = \delta(x_g',x_\pag)$, instead of the XOR operation.
\end{example}

After the SR circuit has run,
some ``offboard system'' may make a copy of 
the state of the output gate for subsequent use, e.g., by copying it into some of 
the input bits of some downstream circuit(s), onto an external disk, %
etc. Regardless, we assume that after the circuit finishes, 
but before the circuit is run again, the state of the output nodes have also
been reinitialized to $\nullstate$.
Just as we do not model the physical mechanism by which new
inputs are created for the next run of the circuit, we also do not
model the physical mechanism by which the output of the circuit is
reinitialized. Accordingly, in our calculation of the thermodynamic costs of running the 
circuit, we do not account for any possible cost of reinitializing the output~\footnote{Suppose that the outputs of circuit $\CC$ were the \textit{inputs} of
some subsequent circuit $\CC'$. That would mean that when $\CC'$ reinitializes its inputs,
it would reinitialize the outputs of $\CC$. Since we ascribe the 
thermodynamic costs of that reinitialization to $\CC'$, it would result in
double-counting to also ascribe the costs of reinitializing $\CC$'s outputs to $\CC$.
\label{foot:series_of_circuits}
}.

%
%
%
This kind of cyclic procedure for running the circuit 
allows the circuit to be re-used an arbitrary number of times, while
ensuring  
that each time it will %
have the same expected thermodynamic
behavior  (Landauer cost, mismatch cost, etc.), and will carry out the same map $\mC$ from input nodes to output gates.

\section{Thermodynamic costs of SR circuits}
\label{sec:singlepasscosts}

In general, there are multiple decompositions of the EF and EP incurred by running any given SR circuit.
They differ in how much of the detailed structure of the circuit they incorporate. In this section we
present some of these decompositions.
(See \apprefsinglepass for all proofs of the results in this section.)

\subsection{General decomposition of EF and EP}

Let $p$ refer to the initial distribution over the joint state of all nodes in the circuit. By \cref{eq:newdecomp1}, 
the total EF incurred by implementing some overall map $P$ which takes the initial joint state of \textit{all} nodes
in the the full circuit to the final joint state is
\begin{align}
\Q(p) = \LL(p) + \EP(p) .
\label{eq:zzz99}
\end{align} 
where
\begin{align}
\LL(p) &:= \SSS(p) - \SSS(P p) 
\label{eq:spl0}
\end{align}
is the Landauer cost of computing $P$ for initial distribution $p$. 

The first term in \cref{eq:spl0}, $\LL$, is the minimal EF that must be incurred by any process over
$\sX_\Cin \times \sX_\Cout$ that implements $\map$, without any constraints on how
the variables in $\sX_\Cin \times \sX_\Cout$ are coupled, 
and without any reference to a set of intermediate subsystems
(e.g., gates) that may connect the input and output variables~\footnote{See~\cite{wolpert_thermo_comp_review_2019}
for a discussion of how this bound applies in the case of ``logically reversible circuits''.}.

The second term in \cref{eq:zzz99} is the EP, which reflects the thermodynamic irreversibility of the SR circuit. Using  \cref{eq:epdecomp1}, the EP can be further decomposed as 
\begin{align}
\EP(p) = \big[ \DDf{p}{\qPP} - \DDf{\map p}{\map \qPP} \big] + \sum_{\mathclap{\l \in \Lmap}} p(c) \EP(\qPP^c)  .
\label{eq:systemwide1}
\end{align}
The decrease in KL reflects the mismatch cost, arising from the discrepancy between $p(x_V)$, the actual 
initial distribution over {all} nodes of the circuit defined in \cref{eq:p_IN}, 
and $q(x_V)$, the optimal prior distribution over the joint state of all the nodes of the circuit which would result in the least EP.  The last sum in \cref{eq:systemwide1} reflects the residual EP, reflecting EP that remains even when the circuit is initialized with the optimal prior distribution.

Suppose we know that the dynamics is actually implemented with an SR circuit, but don't know
the precise wiring diagram. Then we
know that the initial joint distribution over all the nodes is
\begin{align}
p(x_V) = \pIN(x_\Cin) \prod_{v \in V \setminus \Cin} \delta(x_v, \nullstate) \,,
\label{eq:p_IN}
\end{align}
and the ending joint distribution is
\begin{align}
[Pp](x_V) = 
p_\Cout(x_\Cout) \prod_{v \in V \setminus \Cout} 
\delta(x_v, \nullstate) \,,
\end{align}
So $S(p) = S(\pIN)$ and 
$\SSS(P p) = \SSS(p_\Cout)= \SSS(\mC \pIN)$, 
where $\mC$ is the conditional distribution of the final joint state of the output gates given the initial joint state of the input nodes, defined in \cref{eq:defmc}.  
Combining gives
\begin{align}
\LL(p) & = \SSS(\pIN) - \SSS(\mC \pIN) 
\label{eq:spl}
\end{align}
Similarly, the EP becomes
%
\begin{multline}
\EP(p) = \big[ \DDf{\pIN}{\ppIN} - \DDf{\mC \pIN}{\mC \ppIN} \big] + \\\sum_{\mathclap{\l \in \LLL(\mC)}} p(c) \EP(\ppIN^c) .
\label{eq:systemwide2}
\end{multline}

While the expressions in \cref{eq:systemwide1,eq:systemwide2} for EP must be equal, how they decompose that EP among a mismatch cost
term and a residual EP differ. 
The two decompositions differ because they define the ``optimal initial distribution'' relative to differ sets of possible distributions, resulting in different prior distributions (which are also defined over different sets of outcomes). 
Also note that the residual EP terms in \cref{eq:systemwide2} are defined in terms of a more constrained minimization problem than the residual EP terms in  \cref{eq:systemwide1}. Thus, given the same initial distribution $p$, the residual EP  in \cref{eq:systemwide2} will generally be larger than the residual EP in \cref{eq:systemwide1}, while the mismatch cost in \cref{eq:systemwide1} will generally be larger than the mismatch cost in \cref{eq:systemwide2}.
We also emphasize that the  island decompositions appearing in the two expressions are different.

\subsection{Circuit-based decompositions of EF and EP}
\label{sec:ef_decomp_circuit}
\label{sec:alt_decomp_EP}

The decompositions of EF and EP given in
(\cref{eq:zzz99,eq:systemwide1,eq:systemwide2}) do not involve the wiring diagram of the SR circuit.  As an alternative,
we can exploit that wiring diagram to formulate a decomposition of EF and EP which separates the contributions from different gates. In general, such circuit-based decompositions allow for a finer-grained analysis of the EP in SR circuits than do
the decompositions proposed in the last section.  In particular, 
they allow us to derive some novel connections between nonequilibrium statistical
physics, computer science theory, and information theory, as discussed in the next two subsections.  


Before discussing these circuit-based decompositions, we introduce some new notation. 
We write $\ppag(x_{\pag})$ and 
$\pSG(x_\SG) = \ppag(x_{\pag}) \delta(x_g, \nullstate)$ for the distributions over
$x_\pag$ and $x_\SG$, respectively, 
at the beginning of the solitary process that implements gate $g$. 
We write the EF function of the solitary process of gate $g$ as $\Q_g(\pSG)$, and its subsystem EP as
\begin{align}
\EPg(\pSG) := \Q_g(\pSG) - [S(\pSG) - S(P_g \pSG)] \,.
\label{eq:def_sub_EP}
\end{align}
We also write $p^{{\text{beg}}(g)}$ and $p^{{\text{end}}(g)}$ to indicate the joint
distribution over \textit{all} circuit nodes at the beginning and end, respectively, 
of the solitary process that runs gate $g$. As an illustration of this notation, $p^{{\text{beg}}(g)}(x_\pag) =
p_{\pag}(x_{\pag})$. On the other hand, $p^{{\text{end}}(g)}(x_g) = (\mG p_\pag)(x_g)$ is the distribution over $x_g$ 
after gate $g$ runs, and  $p^{{\text{end}}(g)}(x_{\pag})$ is a delta function about the joint state of
the parents of $g$ in which they are all initialized, by \cref{eq:spmap}. Note that since we're considering solitary processes,
$p^{{\text{beg}}(g)}(x_{V \setminus \SG}) = p^{{\text{end}}(g)}(x_{V \setminus \SG})$.

We now present our first circuit-based decomposition, and then we
explain what its terms mean in detail:
\begin{theorem}
The total EF incurred by running an SR circuit where $p$ is the initial distribution over the joint state of all nodes in the circuit  is
\begin{align}
\Q(p) = \LL(p) + \underbrace{\LandauerLoss(p) + \MM(p)+ \RR(p)}_{\EP(p)}  .
\label{eq:decomp9}
\end{align}
\label{thm:maindecomp}
\end{theorem}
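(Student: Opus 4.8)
The plan is to exploit the fact that an SR circuit is literally a time-ordered concatenation of solitary processes, one per gate (wire gates included), so that its EF is a sum of per-gate contributions, each of which can be expanded using the machinery of \cref{sec:solitary} and \cref{thm:cost}. First I would write
$\Q(p) = \sum_{g \in G} \Q_g(\pSG)$,
where the sum runs over all gates in the topological order in which they are executed and $\Q_g$ is the EF function of the solitary process that runs gate $g$; this uses that EF is additive over a partition of the time interval $[0,1]$ into the consecutive windows of the individual solitary processes, together with property \cref{eq:fracsystem2}, which guarantees that the EF of gate $g$'s process depends only on the marginal distribution $\pSG$ over $\SG$ at the instant that process begins.

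Next, for each gate $g$ I would split $\Q_g(\pSG)$ two ways. On the one hand, by \cref{eq:def_sub_EP} it equals the \emph{subsystem} Landauer cost $S(\pSG) - S(\map_g \pSG)$ plus the subsystem EP $\EPg(\pSG)$. On the other hand, \cref{eq:newdecomp1} applied at the level of the full joint state $x_V$ says the EF of step $g$ equals the drop in the \emph{joint} entropy over all nodes during that step plus the joint EP of that step. The definition of Landauer loss, \cref{eq:land-loss-def}, applied to the solitary process for gate $g$ (with $A=\SG$ and $B = V\setminus\SG$, which is genuinely held fixed by \cref{eq:spmap}), is precisely the statement that the subsystem Landauer cost of step $g$ equals the joint-entropy drop of that step plus the Landauer loss $\LandauerLoss_g(p^\begG)$ of that step. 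Hence
\[
\Q_g(\pSG) = \big[\,S(p^\begG) - S(p^\ennG)\,\big] + \LandauerLoss_g(p^\begG) + \EPg(\pSG).
\]

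Summing over $g$, the first bracket telescopes: the joint state at the end of one gate's process is the initial joint state of the next, i.e. $p^\ennG = p^{\mathrm{beg}(g')}$ for consecutive gates $g,g'$, while the first process begins at $p$ and the last ends at $\map p$; by \cref{eq:p_IN} and \cref{eq:defmc} this gives $\sum_g [S(p^\begG) - S(p^\ennG)] = S(p) - S(\map p) = S(\pIN) - S(\mC \pIN) = \LL(p)$. Setting $\LandauerLoss(p) := \sum_g \LandauerLoss_g(p^\begG)$ then yields $\Q(p) = \LL(p) + \LandauerLoss(p) + \sum_g \EPg(\pSG)$. Finally I would expand each subsystem EP using \cref{eq:subep} (equivalently \cref{thm:cost} with $\sZ = \{x_\SG \in \sX_\SG : x_g = \nullstate\}$) into a subsystem mismatch cost and a subsystem residual EP, and define $\MM(p)$ and $\RR(p)$ as the sums of these over all gates. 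The identification of $\LandauerLoss(p)+\MM(p)+\RR(p)$ with $\EP(p)$ is then immediate from \cref{eq:zzz99}.

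The step needing the most care is the bookkeeping that keeps the marginal and joint entropies straight: the per-gate formalism naturally delivers the EF of step $g$ in terms of the \emph{marginal} quantity $S(\pSG) - S(\map_g \pSG)$, whereas the telescoping only works for the \emph{joint} quantities $S(p^\begG) - S(p^\ennG)$, and the bridge between the two is exactly the Landauer-loss identity \cref{eq:land-loss-def}. So the main obstacle is verifying that this identity is being invoked for the correct solitary process at each step and that the hypothesis "$B = V\setminus\SG$ is held fixed" genuinely holds there — which it does, by the form of $\map_g$ in \cref{eq:spmap}. Everything else (additivity of EF over the time partition, the telescoping sum, and the per-gate expansion of $\EPg$ via \cref{thm:cost}) is routine given the results already established.
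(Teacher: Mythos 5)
Your proof is correct and follows essentially the same route as the paper's: EF is summed over the temporally disjoint solitary processes, each gate's EF is split into subsystem Landauer cost plus subsystem EP, the subsystem Landauer costs are tied to the telescoping joint-entropy drops via the per-gate Landauer loss, and each subsystem EP is expanded via \cref{thm:cost} into mismatch and residual terms. One small imprecision: \cref{eq:subep} (which takes $\sZ=\sX_\SG$, so the islands are $\LLL(P_g)$) and \cref{thm:cost} with the restricted $\sZ=\{x_\SG : x_g=\nullstate\}$ (islands $\LLL(\mG)$) are \emph{not} equivalent --- the former yields the $\MM$ and $\RR$ of \cref{thm:maindecomp}, while the latter yields the $\MM'$ and $\RR'$ of \cref{thm:altdecomp} --- though either choice gives a valid decomposition of the same total EP, so this does not affect the displayed identity.
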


\vspace{5pt}

\noindent \textbf{{1)}}
The first term in \cref{eq:decomp9}, $\LL(p)$, is the Landauer cost of the circuit, as described in \cref{eq:spl0}.  This Landauer cost can be further decomposed into contributions from the individual gates. Specifically,  write $\LL_g(p)$ 
for the drop in the entropy of the entire circuit during the time that the solitary process
for gate $g$ runs, given that the input distribution over the entire circuit is $p$:
\begin{align}
\LL_g(p) := S(p^\begG)- S(p^\ennG) \,.
\label{eq:system_wide_land_gate}
\end{align}
Note that the distribution over the states of the entire physical circuit at the {\it{end}} of
the running of any gate is the same as the distribution at the{ \it{beginning}}
of the running of the next gate. 
So by canceling 
terms, and using the fact that entropy does not change when a wire
gate runs, we can expand $\LL$ as %
\begin{align}
\LL (p) = \sum_{g \in G \setminus W} \LL_g(p) \,.
\label{eq:land_AO}
\end{align}
(Recall from \cref{sec:circuit_theory} that $W$ is the set of wire gates in the circuit.)
This decomposition will be useful below.
\vspace{5pt}

\noindent \textbf{{2)}}
The second term in \cref{eq:decomp9}, $\LandauerLoss(p)$, is the unavoidable 
additional EF that is incurred by {any} SR implementation of the SR circuit on initial distribution $p$, above and beyond 
$\LL$, the Landauer cost of running the map $\mC$ on initial distribution $p$.  
We refer to this unavoidable extra EF as
the \textbf{circuit Landauer loss}.  It equals the sum of the subsystem Landauer losses incurred by each 
non-wire gate's solitary process, 
\begin{align}
\LandauerLoss(p) = \sum_{g \in G \setminus W} \LandauerLoss_g(p)\,,
\label{eq:cll0}
\end{align}
where $\LandauerLoss_g(p) =  S( p_\pag) - S(\mG p_\pag) - \LL_g(p)$.

Each term $ \LandauerLoss_g(p)$ in this sum is non-negative (see end of \cref{sec:solitary}), and
so $\LandauerLoss(p)\ge 0$. 
Note that we can omit wires from the sum in \cref{eq:cll0} because $\mG$ is logically reversible for any wire gate $g$, which means
that $\LandauerLoss_g(p) = 0$ for such gates.

We define \textbf{circuit Landauer cost} 
to be the minimal EF
incurred by running any SR implementation of the circuit, i.e.,
\begin{align}
\LLC(p) &= \LL(p) + \LandauerLoss(p) \label{eq:cllA} \\
& = \sum_{g\in G \setminus W} \Big[ \SSS(\ppag) - \SSS(\mG \ppag) \Big]  \,.
\label{eq:circuit_Landauer_loss}
\end{align}
Recall that $\LL(p)$ is the minimal EF that must be generated by {any} physical process that carries out the map $\map$ on initial distribution $p$.  So by \cref{eq:cllA}, $\LandauerLoss(p)$ is the minimal additional EF that must be generated if we
use an SR circuit to carry
out $\map$ on $p$, no matter how efficient the gates in the circuit are.
In this sense, \cref{eq:cllA} can be viewed as an extension of the generalized Landauer bound, to concern SR circuits.

\noindent \textbf{{3)}}
The third term in \cref{eq:decomp9}, $\MM$, reflects the EF incurred because the actual initial distribution of each gate $g$
is not the optimal one for that gate (i.e., not one that minimizes subsystem EP within each island of the conditional
distribution $P_g$, defined in \cref{eq:spmap}).   
We refer to this cost as the \textbf{circuit mismatch cost}, and write it as
\begin{align}
\label{eq:circmm}
\!\!\MM(p) \!=\! \sum_{\mathclap{g  \in G}} 
\big[\DDb{\pSG}{\qSG} \!-\! \DDb{P_g \pSG}{P_g \qSG} \big] 
\end{align}
where the \textbf{prior}
$\qSG$ is a distribution over $\sX_\SG$ whose conditional distributions over
the islands $\l \in \LLL(P_g)$ all
obey $\EPg(\qSG^c) = \min_{{\rr : \supp \rr \subseteq c}} \EPg(\rr)$.  
Note that we must include wire gates $g$ in the sum in \cref{eq:circmm} even though $\pi_g$ for a wire gate
is logically reversible. This is because 
the associated overall map over $\SG$, \cref{eq:spmap}, is not logically reversible over $\SG$ 
\footnote{There are several
ways that we could manufacture wires that would allow us to exclude them from the sum in \cref{eq:circmm}. One way would be to 
modify the conditional distribution $P_g$ of the wire gates, replacing the logically irreversible \cref{eq:spmap} with the logically reversible
\unexpanded{$P_g(x_{\SG}'\vert x_{\SG}) =
\delta(x_g', x_{\pag}) \delta(x_{\pag}', x_g)$} (thus, each wire gate would basically ``flip'' its input and output). 
Since in an SR circuit an initial state $x_g \ne \nullstate$ should not arise for any gate $g$, 
such modified wire gates would always end up performing the same logical operation as 
would wire gates that obey \cref{eq:spmap}.   
Another way that wire gates
could be excluded from the sum in \cref{eq:circmm} is if their priors had the form $\qSG(x_\pag, x_g) = p_\pag(x_\pag)\delta(x_g, \nullstate)$ (see {\apprefpartialsupport} for details).
}.

$\MM$ is non-negative, since each gate's subsystem mismatch cost is non-negative.
Moreover, $\MM$ approaches its minimum value of 0 as $\pSG^{c_g} \to \qSG^{c_g}$ for all $g\in G$ and all islands ${c_g} \in \LLL(P_g)$.  
(Recall that subsystem priors like $\qSG^{c_g}$ reflect the specific details of the underlying physical process
that implements the gate $g$, such as how its energy spectrum evolves as it runs.) %

Suppose that one wishes to construct a physical system
to implement some circuit, and can vary the associated subsystem priors $\qSG^{c_g}$ arbitrarily.
Then in order to minimize mismatch cost one should choose priors 
 $\qSG^{c_g}$ that equal the actual associated initial distributions $\pSG^c$. Moreover,
those actual initial distributions $\pSG^{c_g}$ can be calculated from the circuit's
wiring diagram, together with the input distribution of the entire circuit,
$\pIN$, by ``propagating'' $\pIN$ through the transformations specified by the wiring
diagram. As a result, given knowledge of the wiring diagram and the input distribution of the entire
circuit, in principle the priors can be set so that mismatch cost is arbitrarily small. %

\vspace{5pt}

\noindent \textbf{{4)}}
The fourth term in \cref{eq:decomp9}, $\RR$, reflects the remaining EF incurred by running the SR circuit, and
so we call it \textbf{circuit residual EP}. Concretely, it equals the subsystem EP that would be incurred even if the initial distribution 
within each island of each gate were optimal:
\begin{align}
\label{eq:circRR}
\RR(p)=\sum_{g\in G}  \sum_{{\l \in \LLL(P_g)}} \pSG(c) \, \subEP_g(q^c_{\SG}) .
\end{align}

Circuit residual EP is non-negative, since each $\subEP_g$ is non-negative. %
Since for every gate $g$, $\pSG(c)$ is a linear function of the initial distribution 
to the circuit as a whole, circuit residual EP also depends linearly on the initial distribution.
Like the priors of the gates, the residual EP terms $\{ \subEP_g(q^c_{\SG})\}$
reflect the ``nitty-gritty'' details of how the gates run.

To summarize, the EF incurred by a circuit can be decomposed into the Landauer cost (the contribution to the EF that would
arise even in a thermodynamically reversible process)
plus the EP (the contribution to that EF which is thermodynamically irreversible). In turn,
there are three contributions to that EP: 
\begin{enumerate}
\item Circuit Landauer loss, which is independent of how the circuit is physically implemented, but does depend
on the wiring diagram of the circuit, the conditional distributions implemented by the gates, and the
initial distribution over inputs.
 It is a nonlinear function of the distribution over inputs, $\pIN$.
\item Circuit mismatch cost, which \textit{does} depend on how the circuit is physical implemented (via the priors),   
as well as the wiring diagram. It is also a nonlinear function of $\pIN$.
\item Circuit residual EP,  which also depends on how the circuit is physical implemented.
It is a {linear} function of $\pIN$.
However, no matter what the wiring diagram of the circuit is,
if we implement each of the gates in a circuit with a quasistatic process, then
the associated circuit residual EP is identically zero, independent of $\pIN$~\footnote{To see this,
simply consider \cref{eq:22a} in \cref{ex:1}, and recall that it is possible to choose a time-varying rate matrix to implement any desired
map over any space in such a way that the resultant EF equals the drop in entropies~\cite{owen_number_2018}.}.
\end{enumerate}

There are other useful decompositions of the EP incurred by an SR circuit that incorporate the wiring diagram.
One such alternative decomposition, which is our second main result, 
leaves the circuit Landauer loss term in \cref{eq:decomp9} unchanged, but modifies the circuit mismatch cost and the circuit residual EP terms.
\begin{theorem}
\label{thm:altdecomp}
The total EF incurred by running an SR circuit where $p$ is the initial distribution over the joint state of all nodes in the circuit  is

\begin{align}
\Q(p) = \LL(p) + \underbrace{\LandauerLoss(p) + \MM'(p)+ \RR'(p)}_{\EP(p)}.
\label{eq:altdecomp2}
\end{align}
\end{theorem}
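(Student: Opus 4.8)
The plan is to re-run the argument behind \cref{thm:maindecomp} and change only its last step. First I would recall that, exactly as in the derivation of \cref{eq:decomp9}, additivity of entropy flow over the sequentially executed solitary processes gives $\Q(p)=\sum_{g\in G}\Q_g(\pSG)$, where $\pSG(x_\SG)=\ppag(x_\pag)\,\delta(x_g,\nullstate)$ is the distribution over $\SG$ at the start of gate $g$'s solitary process; by the solitary-process property \cref{eq:fracsystem2} the entropy flow of that process depends only on $\pSG$. Writing $\Q_g(\pSG)=\big[S(\ppag)-S(\mG\ppag)\big]+\EPg(\pSG)$ — subsystem Landauer cost plus subsystem EP, using $S(\pSG)=S(\ppag)$ and $S(P_g\pSG)=S(\mG\ppag)$ from \cref{eq:spmap} — and using $\sum_{g\in G}\big[S(\ppag)-S(\mG\ppag)\big]=\LL(p)+\LandauerLoss(p)$, which is \cref{eq:land_AO} combined with \cref{eq:cll0} (noting that wire gates contribute $0$ to both sums and may be freely included), I obtain $\EP(p)=\LandauerLoss(p)+\sum_{g\in G}\EPg(\pSG)$. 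This identity is common to both the present statement and \cref{thm:maindecomp}; the two results differ only in how the single quantity $\sum_g\EPg(\pSG)$ is split into a mismatch piece and a residual piece.

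Next I would apply \cref{thm:cost} to each $\EPg(\pSG)$, but now with $\sZ=\sZ_g:=\{x_\SG\in\sX_\SG:x_g=\nullstate\}$ rather than $\sZ=\sX_\SG$ as used for \cref{thm:maindecomp}. This is legitimate: $\EPg$ has the required form $S(P_g\pSG)-S(\pSG)+\mathbb{E}_{\pSG}[f]$ with $f$ the (finite, in the idealized limiting sense) entropy-flow function of gate $g$'s process; $f<\infty$ on $\sZ_g$; and $\supp\pSG\subseteq\sZ_g$. The key structural point, already noted in \cref{ex:new_3}, is that $L_{\sZ_g}(P_g)$ is in natural bijection with $\LLL(\mG)$: since $P_g(\cdot\mid x_\SG)$ does not depend on $x_g$ and sends every parent to $\nullstate$, two states of $\SG$ with $x_g=x_g'=\nullstate$ are $P_g$-equivalent iff their parent components are $\mG$-equivalent, and the corresponding island weights coincide, $\pSG(c)=\ppag(c')$. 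Moreover, because $P_g\pSG$ and $P_g\qSG$ both reinitialize the parents deterministically and set $x_g$ via $\mG$, the divergences collapse onto the parent marginals, $\DDf{\pSG}{\qSG}=\DDf{\ppag}{\qpag}$ and $\DDf{P_g\pSG}{P_g\qSG}=\DDf{\mG\ppag}{\mG\qpag}$. Summing \cref{thm:cost} over $g$ then yields $\sum_g\EPg(\pSG)=\MM'(p)+\RR'(p)$ with $\MM'(p)=\sum_{g\in G}\big[\DDf{\ppag}{\qpag}-\DDf{\mG\ppag}{\mG\qpag}\big]$ and $\RR'(p)=\sum_{g\in G}\sum_{c\in\LLL(\mG)}\ppag(c)\,\EPg(\qSG^c)$, where $\qpag$ is the parent marginal of an island-optimal prior. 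Non-negativity of $\MM'$ follows from the data-processing inequality for KL divergence and of $\RR'$ from non-negativity of subsystem EP, exactly as for $\MM,\RR$. Combining with the first paragraph gives \cref{eq:altdecomp2}.

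As with \cref{eq:circmm}, wire gates must be retained in both sums: although $\mG$ is logically reversible for a wire gate, the restricted map $P_g$ is not logically reversible over $\sZ_g$ (it still erases the parent), so its subsystem mismatch and residual contributions are generically nonzero. It is also worth noting that, because the optimization defining $\RR'$ ranges over the smaller family of distributions supported on $\sZ_g$ (equivalently on $\sX_\pag$) rather than on all of $\sX_\SG$, one has $\RR'(p)\ge\RR(p)$ and $\MM'(p)\le\MM(p)$ — the same trade-off that appears between \cref{eq:systemwide1} and \cref{eq:systemwide2}.

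The step I expect to be the main obstacle is the careful justification of the $\sZ_g$-restricted application of \cref{thm:cost}: verifying finiteness of $f$ on $\sZ_g$ and uniqueness of the within-island priors $\qSG^c$, confirming that $L_{\sZ_g}(P_g)$ really coincides with $\LLL(\mG)$ (rather than a coarser or finer partition), and checking that the reduction of the two KL-divergence terms to the parent marginals genuinely goes through, including the observation that the deterministically reinitialized parent coordinates add nothing to either divergence. Everything upstream of that step is a repetition of the bookkeeping already carried out for \cref{thm:maindecomp}.
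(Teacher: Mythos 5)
Your derivation follows the paper's own route: both reduce \cref{thm:altdecomp} to the derivation of \cref{thm:maindecomp} and change only the application of \cref{thm:cost}, restricting to $\sZ_g=\{x_\SG : x_g=\nullstate\}$ so that the islands become those of $\mG$ and the two KL terms collapse onto the parent marginals (this is exactly what the paper does in \cref{ex:alt_decomp_SR_circuit} and in the text following the theorem statement). The bookkeeping in your first two paragraphs --- the reduction $\EP(p)=\LandauerLoss(p)+\sum_g\EPg(\pSG)$, the identification of $L_{\sZ_g}(P_g)$ with $\LLL(\mG)$, and the collapse of the divergences onto $\sX_\pag$ --- is correct and in fact more explicit than the paper's own presentation.

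The one genuine error is your closing claim that wire gates must be retained in the mismatch sum because ``the restricted map $P_g$ is not logically reversible over $\sZ_g$ (it still erases the parent).'' For a wire gate the restricted map sends $(\nullstate, x_v)$ to $(x_v,\nullstate)$: the parent's value is copied into the gate before the parent is reinitialized, so the map is injective on $\sZ_g$ and hence logically reversible over $\sZ_g$; equivalently, each state of $\sZ_g$ is its own island, consistent with $\LLL(\mG)$ for the identity map. Consequently the wire gates' mismatch terms vanish identically ($\mG\ppag=\ppag$ and $\mG\qpag=\qpag$, so the drop in KL divergence is zero), which is precisely why the paper restricts the sum in \cref{eq:circuit_mismatch_alt} to $G\setminus W$. (Wire gates do remain in the residual-EP sum \cref{eq:circuit_residual_EP_alt}, since the residual EP of a logically reversible gate need not vanish.) Because the terms you are adding are zero, your final identity \cref{eq:altdecomp2} is unaffected; but the stated justification, and in particular the assertion that the wire gates' mismatch contributions are ``generically nonzero,'' is false and would put your $\MM'$ at odds with the paper's definition if it were true.
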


To present this decomposition, 
recall from \cref{eq:spmap} that for any gate $g$, the distribution over $\sX_{\SG}$ has partial support
at the beginning of the solitary process that implements $P_g$, since there is 0 probability that $x_g \ne \nullstate$. 
%
%
We use this fact 
to apply \cref{thm:cost} to \cref{eq:def_sub_EP}, while taking $\sZ = \{  x_\SG \in \sX_\SG : x_g = \nullstate \}$.  This allows us 
to express the modified circuit mismatch cost by replacing all of the map $P_g(x'_{\SG} \vert x_{\SG})$
in the summand in \cref{eq:circmm} with $P_g(x'_{g} \vert  x_{\pag}) = \pi_g$:
\begin{align}
\MM'(p) &= \sum_{\mathclap{g  \in G \setminus W}} \big[D(\ppag \Vert q_\pag) - D(\mG \ppag \Vert \mG q_\pag)\big]
\label{eq:circuit_mismatch_alt}
\end{align}
where the priors $q_\pag$ are defined in terms of the island decompositions of the 
associated conditional distributions $\mG$, rather than in terms of the island decompositions of the conditional distributions 
$P_g$. 
Note that can exclude the wire gates from the sum in \cref{eq:circuit_mismatch_alt} because each wire gate's $\mG$ is logically reversible, 
and so the associated drop in KL divergence are zero. Then, the modified circuit residual EP is  
\begin{align}
\RR'(p)=\sum_{g\in G}  \sum_{{\l \in \LLL(\mG)}} \ppag \, \subEP_g(q^c_{\pag}) ,
\label{eq:circuit_residual_EP_alt}
\end{align}
where each term $ \subEP_g(q^c_{\pag})$ is given by appropriately modifying the arguments in \cref{eq:def_sub_EP}. In deriving \cref{eq:circuit_residual_EP_alt}, we used the fact that $L_\sZ(P_g) = L(\mG)$.

As with the analogous results in the previous section, \cref{thm:maindecomp} and \cref{thm:altdecomp} differ, because they define ``optimal initial distribution'' relative to different sets of possibilities.  In particular, the decomposition in \cref{thm:maindecomp}  will generally have a larger mismatch cost and smaller residual EP term than the decomposition in \cref{thm:altdecomp}.   

For the rest of this section, 
we will use the term ``circuit mismatch cost'' to refer to the expression in \cref{eq:circuit_mismatch_alt}
rather than the expression in  \cref{eq:circmm}, and similarly will use the term ``circuit residual EP'' to refer
to the expression in \cref{eq:circuit_residual_EP_alt} rather than the expression in \cref{eq:circRR}.

\subsection{Information theory and circuit Landauer loss}
\label{sec:alt_decomp_one_pass_circ}

By combining
\cref{eq:cll0} and \cref{eq:drop_in_mutual}, we can write circuit Landauer loss as
\begin{align}
\label{eq:mmmi}
\LandauerLoss(p) = \sum_{\mathclap{g \in G\setminus W}} \left[I(X_{\pag} ; X_{V \setminus \pag}) - I(X_{g} ; X_{V \setminus g}) \right]
\end{align}
Any nodes that belong to $V \setminus \SG$
 and that are in their initialized state when gate $g$ starts to run will not contribute to the
drop in mutual information terms in \cref{eq:mmmi}. Keeping track of such nodes and simplifying establishes 
the following:
\begin{corollary}
\label{corr:3}
The circuit Landauer loss is
\begin{align*}
\LandauerLoss(p) = \II(\pIN) - \II(\mC \pIN) - \sum_{g \in G \setminus W} \II(\ppag) .
\end{align*}
\end{corollary}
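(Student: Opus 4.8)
The starting point is \cref{eq:mmmi}, in which the first mutual information in each summand is evaluated at $p^\begG$ (the joint distribution over all circuit nodes at the start of gate $g$'s solitary process) and the second at $p^\ennG$. My plan is to rewrite each mutual information using the elementary identity $I_p(X_A;X_B)=\II(p_{A\cup B})-\II(p_A)-\II(p_B)$, valid for any disjoint $A,B$, which follows immediately from $\II(p_C)=\sum_{v\in C}S(p_v)-S(p_C)$. Applying it with $(A,B)=(\pag,\,V\setminus\pag)$ at time $\begG$ and with $(A,B)=(\{g\},\,V\setminus g)$ at time $\ennG$, and using $\II(p^\ennG_g)=0$ for the single node $g$, the $g$-th summand becomes
\[
\big[\II(p^\begG_V)-\II(\ppag)-\II(p^\begG_{V\setminus\pag})\big]-\big[\II(p^\ennG_V)-\II(p^\ennG_{V\setminus g})\big].
\]

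Next I would dispose of the two ``complement'' terms. Because the solitary process of $g$ touches only the nodes in $\SG=\{g\}\cup\pag$, the marginal over $V\setminus\SG$ is the same at $\begG$ and $\ennG$; and because $X_g$ is deterministically $\nullstate$ at $\begG$ while $X_\pag$ is deterministically reinitialized at $\ennG$ (by \cref{eq:spmap}), and deterministic coordinates contribute nothing to $\II$, we get $\II(p^\begG_{V\setminus\pag})=\II(p^\begG_{V\setminus\SG})=\II(p^\ennG_{V\setminus\SG})=\II(p^\ennG_{V\setminus g})$. These cancel, so the $g$-th summand reduces to $\II(p^\begG_V)-\II(\ppag)-\II(p^\ennG_V)$. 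It then remains to telescope $\sum_{g\in G\setminus W}\big[\II(p^\begG_V)-\II(p^\ennG_V)\big]$: the SR circuit is just the concatenation of the gates' solitary processes in a topological order, so the full-circuit distribution at the end of one gate equals that at the beginning of the next, and the overall endpoints are $p$ (as in \cref{eq:p_IN}) and $Pp$. The gates dropped from the sum are exactly the wire gates, and across a wire gate $\II(p_V)$ is unchanged, since such a gate merely relocates a live variable from $\pa(w)$ to $w$ while resetting $\pa(w)$ — no marginal entropy and no $S(p_V)$ changes. Hence the sum telescopes to $\II(p)-\II(Pp)$. Finally, \cref{eq:p_IN} shows only the input nodes are non-initialized in $p$ and only the output nodes are non-initialized in $Pp$, so $\II(p)=\II(\pIN)$ and $\II(Pp)=\II(p_\Cout)=\II(\mC\pIN)$, which gives the stated identity.

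The only delicate point is the bookkeeping of which coordinates sit in their deterministic initialized state at $\begG$ versus $\ennG$: this is what makes the two complement multi-informations coincide and what makes the wire gates drop out of the telescoping cleanly. Everything else is the $I$-to-$\II$ identity together with a one-line telescoping over the schedule of solitary processes.
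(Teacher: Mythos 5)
Your proof is correct, but it takes a genuinely different route from the one the paper writes out in its appendix. You start from \cref{eq:mmmi}, convert each block mutual information into multi-informations via the identity $I_p(X_A;X_B)=\II(p_{A\cup B})-\II(p_A)-\II(p_B)$, cancel the two complement terms by noting that deterministically initialized coordinates contribute nothing to $\II$ and that the solitary process leaves the marginal on $V\setminus\SG$ untouched, and then telescope $\sum_{g}\bigl[\II(p^\begG_V)-\II(p^\ennG_V)\bigr]$ over the schedule, with wire gates dropping out because a wire merely relocates a live coordinate and so leaves $\II(p_V)$ invariant. The paper's appendix derivation never invokes \cref{eq:mmmi}: it works statically from $\LandauerLoss(p)=\sum_{g\in G\setminus W}\bigl[\SSS(\ppag)-\SSS(p_g)\bigr]-\LL(p)$ and eliminates the single-node entropies by pairing each non-wire, non-output node $v$ with the unique wire $w$ that transmits it (so $\SSS(p_w)=\SSS(p_v)$) and using the fact that every wire belongs to exactly one parent set $\pag$ of a non-wire gate. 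Your version is essentially the argument the main text gestures at (``keeping track of such nodes and simplifying'' \cref{eq:mmmi}), and it makes the dynamical content explicit --- $\II(p_V)$ is conserved by wire gates and drops by $\II(\ppag)$ plus the two-block mutual-information loss at each non-wire gate --- whereas the paper's static bookkeeping is shorter but less transparent about where the $\II(\ppag)$ terms originate. All the steps you flag as delicate (the deterministic-coordinate cancellations, the wire-gate invariance of $\II(p_V)$, and the identification of the telescoped endpoints with $\II(\pIN)$ and $\II(\mC\pIN)$) check out.
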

\noindent (We remind the reader that $\II(p_A)$ refers to the multi-information between the variables indexed by $A$.) 


%
%
%
%
%
%
%
%

%

%
%
%
%
%
%
%
%
%
%
%
%
%

%
%
%

\cref{corr:3} suggest a set of novel optimization problems for 
how to design SR circuits: given some desired computation $\mC$ and some given initial distribution $\pIN$, find the circuit wiring diagram that carries out $\mC$ while minimizing the circuit Landauer loss.  
Presuming we have a fixed input distribution $p$ and map $\mC$, 
the term $\II(\pIN) - \II(\mC \pIN)$ in \cref{corr:3} is an additive constant that doesn't depend on the particular choice of circuit wiring diagram. So the optimization problem can be reduced to 
finding which wiring diagram 
results in a minimal value of $ \sum_{g \in G \setminus W} \II(\ppag)$.
In other words, for fixed $p$ and map $\mC$, to minimize the Landauer loss
we should choose the wiring diagram for which the parents of each gate are as strongly correlated among themselves
as possible. Intuitively, this
ensures that the ``loss of correlations'', as information
is propagated down the circuit, is as small as possible.


In general, the distributions over the outputs of the gates in any
particular layer of the circuit will affect the distribution of the inputs of all of the downstream gates,
in the subsequent layers of the circuit. This means that the sum of multi-informations 
in \cref{corr:3} is an inherently global property of the wiring diagram of a circuit; it cannot
be reduced to a sum of properties of each gate considered in isolation, independently of the other gates. 
This makes the optimization problem particularly challenging. We illustrate this optimization problem 
in the following example.

%
%

\begin{figure}
  \includegraphics[trim=0 50 0 0,clip,width=1\linewidth]{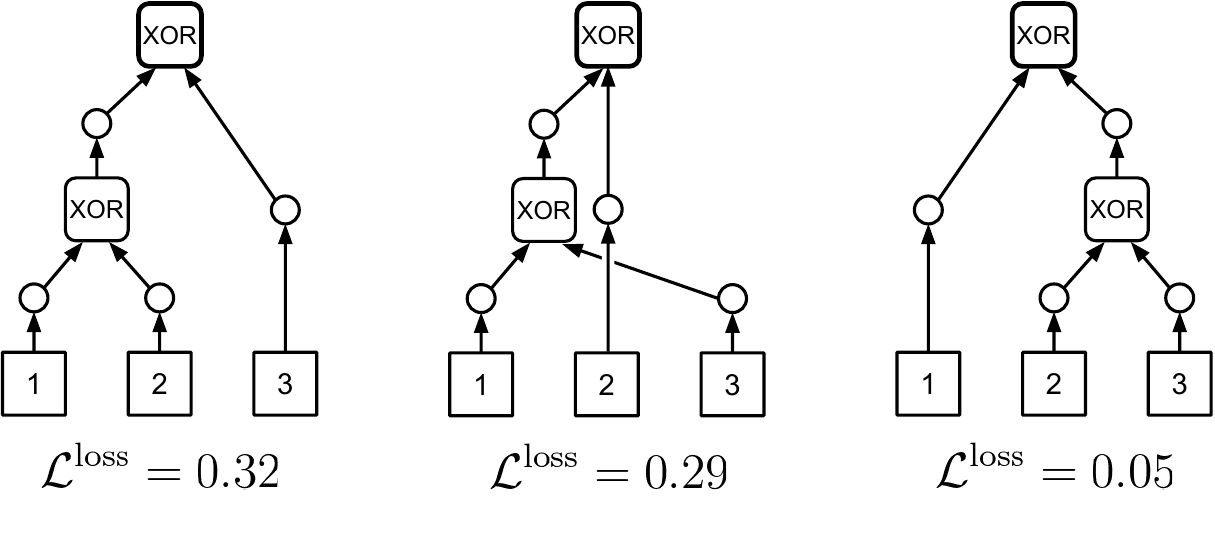}
  $\LandauerLoss = 0.32$
  \qquad\qquad\;
  $\LandauerLoss = 0.29$
  \qquad\qquad\;
  $\LandauerLoss = 0.05$

  \caption{Three different circuit wiring diagrams for computing three-bit parity using XOR gates. The circuit Landauer loss for each wiring diagram is shown below, given the input distribution of \cref{eq:xorspin}.}
  \label{fig:archsearch}
\end{figure}

\begin{example}
\label{ex:paritysearch}
Consider again the case where we want our circuit to compute the three-bit parity function using $2$-input XOR gates, i.e.,
it want it to implement the map
\begin{multline*}
\mC(x_g \vert x_1, x_2, x_3) = \\
\delta(x_g, \delta(x_1 + x_2 + x_3, 1) + \delta(x_1 + x_2 + x_3, 3))
.\end{multline*}
Suppose we happen to know that the input distribution to the circuit (which is specified by how we will use the circuit) is
\begin{align}
\pIN(x_1,x_2,x_3) = \frac{1}{Z} e^{-[\phi({x}_1) \phi({x}_2)/4 + \phi({x}_1)\phi({x}_3) / 2 + \phi({x}_2)\phi({x}_3)]},
\label{eq:xorspin}
\end{align}
where $Z$ is a normalization constant and $\phi(x) := 2x - 1$ is a function that maps bit values $x\in\{0,1\}$ to spin values, $\phi(x) \in \{-1,1\}$.  The distribution \cref{eq:xorspin} is a Boltzmann distribution for a pairwise spin model, where inputs 2 and 3 have the strongest coupling strength ($1$), inputs 1 and 3 have intermediate-strength coupling strength ($1/2$), and inputs 1 and 2 have the weakest coupling strength ($1/4$).

We wish to find the wiring diagram connecting our XOR gates that has minimal circuit Landauer cost for this distribution
over its three input bits.
It turns out that we can restrict our search to three possible wiring diagrams, which are shown in \cref{fig:archsearch}. 
We indicate the circuit Landauer loss for the input distribution of \cref{eq:xorspin} for each of those three wiring
diagrams.  
So for this input distribution, the right-most wiring diagram results in minimal circuit Landauer loss.  Note that this wiring diagram 
aligns with the correlational structure of the input distribution (given that inputs 2 and 3 have the strongest statistical correlation).
\end{example}

An interesting variant of the optimization problem described above arises if we model the residual EP terms for the wire gates.
In any SR circuit, wire gates carry out a logically reversible operation on their inputs.  Thus, by \cref{eq:epdecomp1}, 
\textit{all} of the EF generated by any
wire gates is residual EP. 
If we allow the physical lengths of wires to vary, then as a simple model we could presume that the residual EP
of any wire is proportional to its length. This would allow us to incorporate into our analysis
the thermodynamic effect of
the geometry with which a circuit is laid out on a two-dimensional circuit boards, in
addition to the thermodynamic effect of the topology of that circuit.

Finally, note that for any set of nodes $A$, multi-information can be bounded as $0 \le \II(p_A) \le \sum_{v \in A} S(p_v)
\le \sum_{v \in A} |\sX_v|$. Given this, \cref{corr:3} implies
\begin{align}
\LandauerLoss(p) \le \II(\pIN) \le \ln |\sX_\Cin| \,.
\label{eq:iibound}
\end{align}
This means that for a fixed input state space, the circuit Landauer loss
cannot grow without bound as we vary the wiring diagram. 
Interestingly, this bound on Landauer loss only holds for SR circuits that
have out-degree 1.  If we consider SR circuits that have out-degree greater than 1,
then the circuit Landauer cost can be arbitrarily
large. This is formalized as the following proposition (See proof in \apprefoutdegreebig.)

\begin{restatable}{proposition}{propcircuitscansuck}
For any $\mC$, non-delta function input distribution $\pIN$, 
and $\kappa \ge 0$, there exist an SR circuit with out-degree greater than 1 
that implements $\mC$
for which $\LandauerLoss(p) \ge \kappa$.
\label{prop:circuits_can_suck}
\end{restatable}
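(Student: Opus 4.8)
The plan is to prove the proposition by an explicit construction. Fix $\mC$, a non-delta function input distribution $\pIN$, and $\kappa\ge0$, and let $\CC_0$ be any SR circuit that implements $\mC$. Since $\pIN$ is not a point mass, not all of its single-variable marginals can be point masses, so there is an input node $v_0\in\Cin$ with $h_0:=S(p_{v_0})>0$. I will graft onto $\CC_0$ a ``fan-out-and-discard'' gadget, indexed by an integer $N$, that raises $v_0$'s out-degree above $1$ and inflates the circuit Landauer loss by $Nh_0$ while leaving the circuit's input--output map --- and hence $\mC$, the set of output (leaf) nodes, and the Landauer cost $\LL(p)=S(\pIN)-S(\mC\pIN)$ (cf.\ \cref{eq:spl}) --- unchanged. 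The main tool is the identity
\[
\LandauerLoss(p)=\LLC(p)-\LL(p)=\sum_{g\in G\setminus W}\big[S(\ppag)-S(\mG\ppag)\big]-\big[S(\pIN)-S(\mC\pIN)\big],
\]
which holds for any SR circuit: it comes from writing each gate's subsystem Landauer loss as $\LandauerLoss_g(p)=S(\ppag)-S(\mG\ppag)-\LL_g(p)$ and telescoping the global entropy-drop terms $\LL_g$ along the topological order of the solitary processes, exactly as in \cref{sec:singlepasscosts}/\cref{thm:maindecomp} (the out-degree $>1$ version is set up in \apprefoutdegreebig); in particular $\LandauerLoss(p)\ge0$ always.

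To build $\CC_N$ from $\CC_0$: add $N$ copy gates $c_1,\dots,c_N$, each reading $v_0$ and computing $\pi_{c_i}(x_{c_i}\vert x_{v_0})=\delta(x_{c_i},x_{v_0})$; add $N$ discard gates $e_1,\dots,e_N$, where $e_i$ reads $c_i$ and computes the constant map $\pi_{e_i}(x_{e_i}\vert x_{c_i})=\delta(x_{e_i},\nullstate)$; then pick any output node $o$ of $\CC_0$ (one exists, since a DAG has a leaf), add the edges $e_1\to o,\dots,e_N\to o$, and redefine $\pi_o$ so that its value equals the old one for every joint state of $o$'s original parents, independent of $x_{e_1},\dots,x_{e_N}$. (In the wired picture one inserts a wire gate on each new edge, as usual; wire gates never contribute to $\LLC$.) I then check that $\CC_N$ is well-formed: it is acyclic, since the new nodes lie only on paths $v_0\to c_i\to e_i\to o$ and $v_0$, being a root, is not a descendant of $o$; its leaves coincide with those of $\CC_0$, so $\Cout$ and the ``output node $=$ leaf'' convention are preserved; and since $x_{e_i}=\nullstate$ with probability $1$ at every stage of any SR run, $\pi_o$ behaves exactly as in $\CC_0$, so $\CC_N$ implements the same $\mC$. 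Also $v_0$ now has out-degree $N+1>1$, which is all the proposition asks of the circuit.

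It remains to evaluate $\LLC$ for $\CC_N$. In an SR circuit a node is reinitialized only after all its children have run, so $c_i$ reads $v_0$ while it still holds its value; hence the law of $X_{c_i}$ equals that of $X_{v_0}$, giving $c_i$ a contribution $S(p_{v_0})-S(p_{c_i})=h_0-h_0=0$, and $e_i$ a contribution $S(p_{c_i})-0=h_0$. Every original gate of $\CC_0$ keeps its parent distribution in $\CC_N$ (the gadget feeds only $o$, which ignores it, and adjoining the deterministically-$\nullstate$ parents $e_i$ to $o$ changes neither $S(\ppag)$ nor $S(\mG\ppag)$ for $g=o$), so the original gates together contribute the same amount as in $\CC_0$. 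Thus $\CC_N$'s circuit Landauer cost exceeds $\CC_0$'s by exactly $Nh_0$, while their Landauer cost $\LL(p)=S(\pIN)-S(\mC\pIN)$ is the same, so $\CC_N$'s circuit Landauer loss exceeds $\CC_0$'s by $Nh_0$ and is therefore at least $Nh_0$ (circuit Landauer loss being non-negative, by \cref{eq:drop_in_mutual} and the data-processing inequality). Taking $N=\max(1,\lceil\kappa/h_0\rceil)$ gives $\LandauerLoss(p)\ge\kappa$. I expect the genuine work to be bookkeeping, not anything deep: one must verify, under the paper's conventions, that ``discarding'' bits need not create a non-output leaf --- which is precisely why the $e_i$ are funneled into an existing output gate that ignores them --- and that the entropies assigned to the gadget's parent distributions are the ones actually produced by the out-degree $>1$ SR reinitialization schedule. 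It is also worth a remark that no out-degree $1$ version of this construction exists: producing $N$ simultaneously live copies of $X_{v_0}$ requires fan-out at $v_0$ or at one of the $c_i$, hence out-degree $>1$ somewhere, consistent with the bound $\LandauerLoss(p)\le\ln|\sX_\Cin|$ of \cref{eq:iibound} for out-degree $1$ circuits.
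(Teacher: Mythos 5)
Your construction is essentially the paper's own proof: both locate an input node $v$ with $S(p_v)>0$ and graft on erasure gadgets that copy $v$, destroy the copy (contributing $S(p_v)$ to the circuit Landauer cost but nothing to $\LL$), and funnel the resulting constant into an existing output gate that ignores it, preserving $\mC$ and the leaf structure. The only cosmetic difference is that the paper adds one gadget at a time and iterates, whereas you add $N$ in parallel and choose $N\ge\kappa/S(p_v)$.
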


\subsection{Information theory and circuit mismatch loss}
\label{sec:circuit_mismatch_loss}

Landauer loss captures the gain in minimal EF due to using an SR circuit, if there is no mismatch cost or residual EP. It is harder to make general statements about the gain in \textit{actual} EF due to using an SR circuit, i.e., when the mismatch cost is nonzero. In this subsection we make some preliminary remarks about this issue.

Imagine that we wish to build a physical process that implements  some computation $\mC(x_\Cout\vert x_\Cin)$ over a space $\cal{X}_\Cin \times \cal{X}_\Cout$.  Suppose we want this process to achieve minimal EP when run with inputs generated by $\ppIN$ (e.g., if  we expect future inputs to the process to be generated by sampling $\ppIN$), and as usual assume the initial value of  $x_\Cout$ will be $\nullstate$ whenever it is run.  Using the decomposition of \cref{eq:systemwide2} and assuming that the residual EP of the process is zero, the EF that such a process would generate if it is actually run with an input distribution $p$ (initialized like SR circuits are, so that it has the form of \cref{eq:p_IN})
is given by the sum of the Landauer cost and the mismatch cost, with no Landauer loss term. 
We write this as
\begin{align}
\Qao(p) = \LL(p) + \DDf{\pIN}{\ppIN} - \DDf{\mC \pIN}{\mC \ppIN},
\label{eq:efmm0}
\end{align}  

Note that in order for the EF generated by an actual physical process to 
be given by \cref{eq:efmm0}, the prior of that process must be $\ppIN$, and in general this may require that the process couple together arbitrary sets of variables. This means that the EF generated by implementing $\mC(x_\Cout\vert x_\Cin)$ with an SR circuit cannot obey \cref{eq:efmm0} in general, due to restrictions on what variables can be coupled in such a circuit.  (One can  verify, for example, that the  prior distribution $\ppIN$ of  a circuit consisting of two disconnected bit erasing gates must be a product distribution over the two input bits.)
To emphasize this distinction, we will refer to a process whose EF is given by  \cref{eq:efmm0}) as an ``all-at-once'' (AO) process (indicated by the subscript ``AO'' in \cref{eq:efmm0}).

For practical reasons, it may be quite difficult to construct an AO process that implements $\mC$, and we must use a circuit implementation instead. In particular, even though the circuit as a whole cannot have prior $\ppIN$, suppose we can set the  priors $\qpag$  at its gates by propagating $\ppIN$ through the wiring diagram of the circuit. 
Assuming again that there is zero EP, the EF that must be incurred by any such SR circuit implementation of $\mC$ on input distribution $p$, assuming some particular wiring topology and gate priors, is  given by the decomposition of \cref{thm:altdecomp},
\begin{align}
\QCC(p) = \LL(p) + \LandauerLoss(p) + \MM'(p) .
\label{eq:efmm1}
\end{align}


We now ask: how much larger is this EF incurred by the SR circuit implementation, compared to that of the original AO process?  Subtracting \cref{eq:efmm0} from \cref{eq:efmm1} gives  
\begin{align}
\Delta \Q & =  \QCC(\pIN) - \Qao(\pIN) \nonumber \\ 
&= \LandauerLoss(\pIN) + \DMcirc(\pIN\Vert \ppIN) 
\label{eq:51}
\end{align}
where we have defined $\DMcirc$ as the the difference between the circuit mismatch cost, $ \MM'(p)$, and the mismatch
cost of the AO process. 
We refer to that difference in mismatch costs as  the \textbf{circuit mismatch loss},
and use \cref{eq:circuit_mismatch_alt} to express it as
\begin{align}
& \DMcirc(\pIN \Vert \ppIN)=  \label{eq:multi_divs} \\
& \;\;\IIDDf{\mC \pIN}{\mC \ppIN} -\IIDDf{\pIN}{ \ppIN} + \sum_{\mathclap{g \in G \setminus W}} \IIDDf{\ppag}{ \ppPAG} , \nonumber 
\end{align}
where $\IIDD$ refers to the multi-divergence, defined in \cref{eq:15a}. 
\cref{eq:multi_divs} can be 
compared to  \cref{corr:3}, which expresses the circuit {Landauer} cost
rather than circuit mismatch cost, and involves multi-informations rather than
multi-divergences.

Interestingly, while circuit Landauer loss is non-negative, circuit mismatch loss
can either be positive or negative. In fact, depending on the wiring diagram, $\pIN$ and $\ppIN$, the
sum of circuit mismatch loss and circuit Landauer loss can be negative. 
This means that when the actual input distribution $\pIN$ is different from the prior distribution of the AO process, the ``closest equivalent circuit'' to the AO process may actually incur less EF than the corresponding AO process.    
This occurs because an SR circuit cannot implement some of the prior distributions that an AO process can implement, so the two implementations end up having different priors. 
This is illustrated in the following example.

\begin{example}
\label{ex:4}
Assume the desired computation is the erasure of two bits, $\mC(x_3, x_4 \vert x_1, x_2) = \delta(x_3,0) \delta(x_4, 0)$, where $x_1$ and $x_2$ refers to the input bits, and $x_3$ and $x_4$ refers to the output bits.   
The prior distribution implemented by an AO process is given by $q_\Cin(0, 0) = q(1, 1) = \epsilon < 1/2$, 
and $q_\Cin(0, 1) = q(1, 0) = 1/2 - \epsilon$. The actual input distribution is given by a delta function distribution, $p_\Cin(x_1,x_2)= \delta(x_1,0)\delta(x_2,0)$. 

We now implement this computation using an SR circuit which consists of two disconnected erasure gates. The closest equivalent SR circuit has gate priors given by the uniform marginal
distributions, $q(x_1)=1/2$ and $q(x_2)=1/2$. 
Then
the difference between the EF of the AO process and the SR circuit is
\begin{align*}
\Delta \Q = &  - 	\big[\SSS(\pIN) \!+\! \DDf{\pIN}{\ppIN} \big]\\
& \qquad + \Big[\sum_g \SSS(\ppag) + \DDf{\ppag}{\qpag}\Big] \\
=&  \sum_{\mathclap{x_1, x_2}} \pIN(x_1, x_2) \ln \ppIN(x_1, x_2)   \nonumber \\
&\qquad - \sum_{x_1} p(x_1) \ln q(x_1)- \sum_{x_2} p(x_2) \ln q(x_2) \\ 
=& \ln \epsilon + 2 \ln 2
\end{align*}
This can be made arbitrarily negative by taking $\epsilon$ sufficiently close to zero. Thus, the
EF of the AO process may be arbitrarily larger than the EF of the closest equivalent SR circuit.
\end{example}

\section{Related work}
\label{sec:earlierwork}

The issue of how the thermodynamic costs of a circuit depend on
the constraints inherent in the topology of the circuit has not previously been addressed 
using the tools of %
modern nonequilibrium statistical physics. %
Indeed, this 
precise issue has received very little attention in \emph{any} of the statistical physics literature.
A notable exception was a 1996 paper by Gernshenfeld~\cite{gershenfeld1996signal}, which pointed out 
that all of the thermodynamic analyses of conventional (irreversible) computing architectures at the time
were concerned with properties of individual gates, rather than entire circuits.  
That paper works through some elementary
examples of the thermodynamics of circuits, and analyzes how the
global structure of circuits (i.e., their wiring diagram) affects their thermodynamic properties.  
Gernshenfeld concludes that the ``next step will be to extend the analysis from these
simple examples to more complex systems'' {\footnote{This prescient article even contains a cursory
		discussion of the thermodynamic consequences of providing a sequence of non-IID rather than IID
		inputs to a computational device,
		a topic that has recently received renewed scrutiny~\cite{mandal2012work,Boyd:2018aa,boyd2016identifying,Strasberg2017}.}}.

There are also several papers that do not address circuits, but focus on 
tangentially related topics,
using modern nonequilibrium statistical physics. 
Ito and Sagawa~\cite{ito2013information,ito_information_2015}  considered the thermodynamics of 
(time-extended) Bayesian networks~\cite{koller2009probabilistic,neapolitan2004learning}. They
divided the variables in the Bayes net into two sets: the sequence of states of a particular system through time, which they write as
$X$, and all external variables that interact with the system as it evolves,
which they write as $\cal{C}$. They then derive and investigate an integral fluctuation theorem~\cite{van2015ensemble,seifert2012stochastic,crooks1998nonequilibrium,rao_esposito_my_book_2019}
that relates the EP generated by $X$ and the EP flowing between $X$ and  $\cal{C}$.
(See also~\cite{ito2016information}).)

Note that \cite{ito2013information} focuses on the EP generated by a proper subset of the
nodes in the entire network. In contrast, our results below concern the EP generated
by all nodes. In addition, while \cite{ito2013information} concentrates on an integral
fluctuation theorem involving EP, we give an exact expression for (expected) EP.

Otsubo and Sagawa~\cite{Otsubo2018} considered the thermodynamics of stochastic Boolean
network models of gene regulatory networks. They focused in particular on 
characterizing the information-theoretic and dissipative properties of 3-node motifs.
While their study does concern dynamics over networks, it has little in
common with the analysis in the current paper, in particular due to its restriction to 3-node
systems.

Solitary processes are similar to ``feedback control'' processes, which have attracted much attention in the thermodynamics of information literature~\cite{sagawa2008second,sagawa2012fluctuation,parrondo2015thermodynamics}.  In feedback control processes, there is a subsystem $A$ that evolves while coupled to another subsystem $B$, which is held fixed.  (This joint evolution is often used to represent either $A$ making a measurement of the state of $B$, or the state of $B$ being used to determine which control protocol to apply to $B$.)  It has been shown for feedback control processes that the total EP incurred by the joint $A\times B$ system is the ``subsystem EP'' of $A$, plus the drop in the mutual information between $A$ and $B$~\cite{sagawa2008second}. Formally, this is identical to \cref{eq:drop_in_mutual}.  

Crucially however, in feedback control processes there is no assumption that $A$ and $B$ are physically decoupled. 
(Formally, \cref{eq:fracsystem2} is not assumed.) Therefore the change in mutual information can either be negative or positive in those
processes (the latter occurs, for instance, when $A$ performs a measurement of the state of $B$). In addition, the ``subsystem EP'' in these processes can be negative. For this reason, in feedback control processes there is no simple relationship between subsystem EP and the total EP incurred by the joint $A\times B$ system.  In contrast, in solitary processes $A$ and $B$ are physically decoupled (cf.  \cref{eq:fracsystem2,eq:fracsystem3}). For this reason, in solitary processes subsystem EP is non-negative, as is the drop in mutual information, \cref{eq:drop_in_mutual}, and so each of them is a lower bound on the total EP incurred by the joint $A\times B$ system.

Boyd et al.~\cite{Boyd:2018aa} considered the thermodynamics of ``modular'' systems, which in our terminology are a special type of solitary processes,
with extra constraints imposed. In particular, to derive their
results, \cite{Boyd:2018aa} assumes there is exactly one thermodynamic reservoir (in their case, a heat bath).
That restricts the applicability of their results. 
Nonetheless,  
individual gates in a circuit are
run by solitary processes, and one could require that they in fact be run by modular systems, in order to analyze the thermodynamics
of (appropriately constrained) circuits. 
However, instead of focusing on this issues,~\cite{Boyd:2018aa} focuses on the thermodynamics of
``information ratchets''~\cite{mandal2012work}, 
modeling them as a sequence of iterations of a single solitary process, 
successively processing the symbols on a semi-infinite tape. 
In contrast, we extend the analysis of single solitary processes operating in isolation to analyze full circuits that comprise
multiple interacting 
solitary processes.  

Riechers~\cite{riechers_thermo_comp_book_2018} also contains work related
to solitary processes, assuming a single heat bath,
like~\cite{Boyd:2018aa}.
~\cite{riechers_thermo_comp_book_2018}
exploits the  
decomposition of EP into ``mismatch cost'' plus ``residual EP'' 
introduced in~\cite{kolchinsky2016dependence}, in order to analyze thermodynamic
attributes of a special kind of circuit.
The analysis in that paper %
is not as complicated as either the analysis in the current paper
or the analysis in~\cite{Boyd:2018aa}. That is because 
~\cite{riechers_thermo_comp_book_2018} does not focus on how the thermodynamic costs of running a system
are affected if we impose a constraint on how the system is allowed to operate (e.g., if we require that it use solitary processes).
In addition, the system considered in that paper %
is a very
special kind of circuit: a set of $N$ disconnected gates, working
in parallel, with the outputs of those gates never combined. 
~\cite{wolpert_thermo_comp_review_2019} is a survey article relating many papers in
the thermodynamics of computation. To clarify some of those relationships, it introduces a type of process
related to solitary processes, called ``subsystem processes''.
(See also~\cite{wolpert2018exact}.) 
For the purposes of the current paper though, we need to
understand the thermodynamics specifically of solitary processes.
In addition, being a summary paper,~\cite{wolpert_thermo_comp_review_2019} presents some results
from the arXiv version of the current paper,~\cite{wolpert2018exact}. Specifically,~\cite{wolpert2018exact,wolpert_thermo_comp_review_2019} 
summarize some of the thermodynamics of straight-line circuits subject to the extra restriction (not made in the current paper)
that there only be a single output node.

There is a fairly extensive literature on ``logically reversible circuits'' and their thermodynamic properties~\cite{fredkin1982conservative,drechsler2012reversible,perumalla2013introduction,frank2005introduction,wolpert_thermo_comp_review_2019}. This work is based on the early analysis in \cite{landauer1961irreversibility}, and so it is not grounded
in modern nonequilibrium statistical physics. Indeed, 
modern nonequilibrium statistical physics reveals some important subtleties and caveats with
the thermodynamic properties of logically reversible circuits~\cite{wolpert_thermo_comp_review_2019}. 
Also see~\cite{sagawa2014thermodynamic} for important
clarifications of the relationship between thermodynamic and logical reversibility, not appreciated in
some of the research community working on logically reversible circuits.

Finally, another related paper is \cite{grochow_wolpert_sigact_2018}. This paper
starts by taking a distilled version of the decomposition of EP in~\cite{kolchinsky2016dependence} 
as given. It then discusses some of the many new problems in computer science theory that this decomposition leads to, both
involving circuits and involving many other kinds of computational system.

\section{Discussion and Future Work}
\label{sec:future_work}

It is important to emphasize that SR circuits are somewhat unrealistic models of many real-world digital circuits. For example, many real digital circuits
have multiple gates running at the same time, and often do not reinitialize
their gates after they're run. In addition, many real digital
circuits have characteristics like loops and branching. This makes them challenging to model at all using simple solitary processes. Extending our analysis to these more general models of circuits is an important direction for future work. Nonetheless, it is worth mentioning that 
 all of the thermodynamic  costs discussed above --- including
Landauer loss, mismatch cost, and residual EP ---  are intrinsic to \textit{any} physical process,
as described in~\cref{sec:efdecomp}. So versions of them arise in those other kinds of circuits, only in modified form.

An interesting set of issues to investigate in future work is the scaling properties of the thermodynamic
costs of SR circuits. In conventional circuit complexity theory ~\cite{arora2009computational,savage1998models}
one first specifies a ``circuit family'' which comprises an infinite set of circuits that have different
size input spaces but that are all (by definition) viewed as ``performing the same computation''. For
example, one circuit family is given by an infinite set of circuits
each of which has a different size input space, and
outputs the single bit of whether the number of 1's in its input string is odd or even. Circuit complexity theory
is concerned with how various resource costs in making a given circuit (e.g., the number of
gates in the circuit) scales with the size of the circuit as one goes through the members of a
circuit family. For example, it may analyze how the number of gates in a set of 
of circuits, each of which determines whether its input string contains an odd number of 1's, scales
with the size of those input strings. One interesting set of issues for future research is
to perform these kinds of scaling analyses when the ``resource costs'' are thermodynamic
costs of running the circuit, rather than conventional costs like the number of gates. In particular, it's interesting
to consider classes of circuit families defined in terms of such costs, in analogy to the
complexity classes considered in computer science theory, like $\cs{P/poly}$, or $\cs{P / log}$.

Other interesting issues arise if we formulate a cellular automaton (CA) as a circuit with an infinite number
of nodes in each layer, and an infinite number of layers, each layer of the circuit corresponding to another timestep
of the CA. For example, suppose we are given a particular CA rule (i.e., a particular map taking the state of
each layer $i$ to the state of layer $i+1$) and a particular distribution over its initial infinite bit
pattern. These uniquely specify the ``thermodynamic EP rate'', given by the total EP generated by running the CA for $n$ iterations
(i.e., for reaching the $n$'th layer in the circuit), divided by $n$. It would be interesting
to see how this EP rate depends on the CA rule and initial distribution over bit patterns.

Finally, another important direction for future work arises if we broaden our scope beyond digital circuits designed by
human engineers, to include naturally occurring circuits such as brains and gene regulatory networks. The ``gates'' in such circuits
are quite noisy --- but all of our results hold independent of the noise levels of the gates. On the other hand,
like real digital circuits, these naturally occurring circuits have loops, branching, concurrency, etc., and
so might best be modeled with some extension of the models introduced in this paper. Again though, the
important point is that whatever model is used, the EP generated by running a physical system governed
by that model would include Landauer loss, mismatch cost, and residual EP.

$ $

\begin{acknowledgments}
\emph{Acknowledgments} ---
We would like to thank Josh Grochow for helpful discussion,
and thank the Santa Fe Institute for helping to support this research. This paper was made possible through Grant No. CHE-1648973 from the U.S. National Science Foundation, Grant No. FQXi-RFP-1622 from the Foundational Questions Institute, and Grant No. FQXi-RFP-IPW-1912 from the Foundational Questions Institute and Fetzer Franklin Fund, a donor advised fund of Silicon Valley Community Foundation
\end{acknowledgments}

\bibliographystyle{ieeetr}

\clearpage
%

\appendix

\section{Proof of \cref{thm:cost} and related results}
\label{app:costappendix}

\subsubsection*{Preliminaries}

\newcommand\ei[1][1=i]{{\mathbf{u}^{#1}}}%

\global\long\def\i{x}%
\global\long\def\j{y}%
\global\long\def\Tji{\map(y\vert x)}%

\global\long\def\p{p}%
\global\long\def\s{\mu}%
\global\long\def\N{\mathbb{N}}%
\global\long\def\q{q}%
\global\long\def\N{\mathbb{N}}%
\global\long\def\sX{\mathcal{X}}%
\global\long\def\sY{\mathcal{Y}}%
\global\long\def\hlf{{\textstyle \frac{1}{2}}}%
\global\long\def\e{\epsilon}%
\global\long\def\qip{[\M\q]_{j}}%
\global\long\def\pip{[\M\p](\j)}%
\global\long\def\f{f}%
\global\long\def\optp{s}%
\global\long\def\E{\mathbb{E}}%
\global\long\def\w{q^{\e}}%
\global\long\def\wi{\w_{i}}%
\global\long\def\wip{\left[\M\w\right]_{j}}%
\global\long\def\wp{\M\w}%
\global\long\def\de{{\textstyle \partial_{\e}^{+}}}%
\global\long\def\dem{{\textstyle \partial_{\e}^{-}}}%
\global\long\def\ppi{p(\i)}%
\global\long\def\ppl{p(\l)}%
\global\long\def\ppll{p^{c}}%
\global\long\def\ppil{\ppll(\i)}%
\global\long\def\qe{q^{\e}}%
\global\long\def\pip{p(\j)}%
\global\long\def\qip{q(\j)}%
\global\long\def\qll{q^{c}}%

Consider a conditional distribution $\map(\j\vert\i)$ that specifies
the probability of ``output'' $\j\in\sY$ given ``input'' $\i\in\sX$,
where $\sX$ and $\sY$ are finite. 

Given some $\sZ\subseteq \sX$, the island decomposition $L_\sZ(P)$ of $\map$, and any $p\in\Delta_\sX$,
let $\ppl=\sum_{\i\in\l}\ppi$ indicate the total probability within
island $\l$, and 
\[
\ppil:=\begin{cases}
\frac{\ppi}{\ppl} & \text{if \ensuremath{\i\in\l} and \ensuremath{\ppl>0}}\\
0 & \text{otherwise}
\end{cases}
\]
indicate the conditional probability of state $\i$ within island
$\l$.

In our proofs below, we will make use of the notion of \emph{relative
interior.} Given a linear space $V$, the relative interior of a subset $A\subseteq V$ is defined as \citep{borwein2003notions}
\[
\mathrm{relint}\, A := \{x\in A : \forall y \in A,\exists\epsilon>0 \text{ s.t. } x+\e(x-y)\in A\}\,.
\]

Finally, for any function $g(x)$, we use the notation
\[\partial_{x}^{+}g(x)\vert_{x=a}:=\lim_{\delta \rightarrow0^{+}}\frac{1}{\delta}\left(g(a+\delta)-g(a)\right)
\]
to indicate the right-handed derivative of $g(x)$ at $x=a$. When the condition that $x=a$ is omitted, $a$ is implicitly
assumed to equal $0$, i.e.,
\[\partial_{x}^{+}g(x) :=\lim_{\delta \rightarrow0^{+}}\frac{1}{\delta}\left(g(\delta)-g(0)\right)
\]
\global\long\def\aej{[\map\ae]_{j}}%
\global\long\def\aei{\ae_{i}}%
\global\long\def\aej{[\map\ae](\j)}%
\global\long\def\aei{\ae(\i)}%
\global\long\def\aej{\ae(\j)}%
\global\long\def\aei{\ae(\i)}%
\global\long\def\ae{a^{\e}}%
We also adopt the shorthand that
$\ae\coloneqq a+\e(b-a)$, and write $S(\ae) : = S(p(\ae))$,  $P\ae := Pp(\ae)$, and
so  $S(P\ae) = S(Pp(\ae))$.

\subsubsection*{Proofs}

Given some conditional distribution $\Tji$ and function $\f:\sX\to\mathbb{R}$, we consider the function $\FF:\Delta_\sX \rightarrow\mathbb{R}$
as
\[
\FF(\p):=S(\map \p)-S(\p)+\E_{p}[f]\,.
\]
Note that $\FF$ is continuous on the relative interior of $\Delta_\sX $.

\vspace{5pt}

\begin{applemma}
\label{lem:dd}For any $a,b\in\Delta_\sX$, the directional derivative of
$\FF$ at $a$ toward $b$ is given by
\[
\de\FF(a+\e(b-a))\vert_{\e=0}=D(\map b\Vert\map a)-D(b\Vert a)+\FF(b)-\FF(a).
\]
\end{applemma}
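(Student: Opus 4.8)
The plan is to compute the right-handed derivative by differentiating the three pieces of $\FF$ term by term along the segment $p^\epsilon := a+\epsilon(b-a) = (1-\epsilon)a+\epsilon b$, which lies in $\Delta_\sX$ for $\epsilon\in[0,1]$. Two elementary observations do most of the work. First, by \cref{eq:matrixnotation} the map is linear, so $\map p^\epsilon = (1-\epsilon)\,\map a + \epsilon\,\map b$, and the expectation term is trivially linear, $\mathbb{E}_{p^\epsilon}[f]=(1-\epsilon)\,\mathbb{E}_a[f]+\epsilon\,\mathbb{E}_b[f]$, so its $\epsilon$-derivative is just $\mathbb{E}_b[f]-\mathbb{E}_a[f]$. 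Second, for any two distributions $u,v$ on a finite set, $\partial_\epsilon^+ S\big((1-\epsilon)u+\epsilon v\big)\big|_{\epsilon=0}=-\sum_x(v(x)-u(x))\ln u(x)$: differentiating $-\sum_x p^\epsilon(x)\ln p^\epsilon(x)$ via the product rule produces an extra term $-\sum_x(v(x)-u(x))$, which vanishes because $u$ and $v$ are both normalized.

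Next I would rewrite $-\sum_x(v(x)-u(x))\ln u(x)$ in information-theoretic form: using $\sum_x v(x)\ln u(x) = -\SSS(v)-\DDf{v}{u}$ and $\sum_x u(x)\ln u(x)=-\SSS(u)$ gives $\partial_\epsilon^+ S((1-\epsilon)u+\epsilon v)\big|_{0}=\SSS(v)-\SSS(u)+\DDf{v}{u}$. Applying this with $(u,v)=(a,b)$ for the $-S(p^\epsilon)$ term and with $(u,v)=(\map a,\map b)$ for the $S(\map p^\epsilon)$ term, and adding the expectation contribution, yields
\[
\partial_\epsilon^+\FF(p^\epsilon)\big|_{\epsilon=0} = \big[\SSS(\map b)-\SSS(\map a)+\DDf{\map b}{\map a}\big] - \big[\SSS(b)-\SSS(a)+\DDf{b}{a}\big] + \big(\mathbb{E}_b[f]-\mathbb{E}_a[f]\big).
\]
Regrouping the terms $\SSS(\map b),-\SSS(b),\mathbb{E}_b[f]$ into $\FF(b)$ and the corresponding $a$-terms into $\FF(a)$ gives exactly $\DDf{\map b}{\map a}-\DDf{b}{a}+\FF(b)-\FF(a)$, which is the claimed identity.

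The only point requiring care is the behavior of the logarithmic derivatives when the supports are not nested. If $\supp b\not\subseteq\supp a$ then $\DDf{b}{a}=\infty$; correspondingly any state $x$ with $a(x)=0<b(x)$ contributes $b(x)\big(\ln(\epsilon b(x))+1\big)\to-\infty$ to $\partial_\epsilon^+[-S(p^\epsilon)]$, so both sides of the identity equal $-\infty$ — here one uses that $\supp b\subseteq\supp a$ implies $\supp\map b\subseteq\supp\map a$, so the map term stays finite whenever the system term does. In every application of the lemma $a$ is an interior point of the relevant face of $\Delta_\sX$, so $\DDf{b}{a}<\infty$ and all quantities are finite, and I would simply state the computation in that case. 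This support bookkeeping is really the only obstacle; the remainder is the routine differentiation sketched above.
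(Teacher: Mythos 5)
Your proposal is correct and follows essentially the same route as the paper's proof: differentiate the entropy terms and the linear expectation term separately along the segment $(1-\e)a+\e b$, use normalization to kill the extra constant, and regroup the resulting $\ln a$ and $\ln(\map a)$ sums into KL divergences and entropies. The only difference is cosmetic (you package the entropy derivative as a reusable identity and add a brief remark on the non-nested-support case, which the paper leaves implicit).
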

\begin{proof}
Using the definition of $\FF$,  write
\begin{align}
\de\FF(\ae)=\de\left[S(\map\ae)-S(\ae)\right]+\de\E_{\ae}[f].\label{eq:ddd-1}
\end{align}
Consider the first term on the RHS,
\global\long\def\bj{b(\j)}%
\global\long\def\bi{b(\i)}%
\global\long\def\aj{a(\j)}%
\global\long\def\ai{a(\i)}%
\begin{align*}
& \de\left[S(\map\ae)-S(\ae)\right] \\
& =-\sum_{\j\in\sY}\left[ (\de P\aej)\ln P\aej+\de[\map\ae](\j)\right] \\
& \qquad+\sum_{\i\in\sX}\left[(\de \ae)i\ln\aei+\de\aei\right]\\
 & =-\sum_{\j\in\sY}(P\bj\!-\!P\aj)\ln\aej+\sum_{\i\in\sX}(\bi\!-\!\ai)\ln\aei
\end{align*}
Evaluated at $\epsilon=0$, the last line can be written as 
\begin{align*}
 & -\sum_{\j\in\sY}(\bj\!-\!\aj)\ln\aj+\sum_{\i\in\sX}(\bi-\ai)\ln\ai\\
 & \quad=D(\map b\Vert\map a)+S(\map b)-S(\map a)-D(b\Vert a)-S(b)+S(a)
\end{align*}
where we adopt the convention that if $a(x) = 0, b(x) \ne 0$ for some $x$, then this expression means $-\infty$.
We next consider the $\de\E_{\ae}[f]$ term,
\begin{align*}
\de\E_{\ae}[f] &=\de\left[\sum_{\i\in\sX}\left(a(\i)+\e(b(\i)-a(\i))\right)f(\i)\right]\\
&=\E_{b}[f]-\E_{a}[f]\,.
\end{align*}

Combining the above gives
\begin{align*}
\de\FF(\ae)\vert_{\e=0} & =D(\map b\Vert\map a)-D(b\Vert a)+S(\map b)-S(b)\\
& \qquad -(S(\map a)-S(a))+\E_{b}[f]-\E_{a}[f]\\
 & =D(\map b\Vert\map a)-D(b\Vert a)+\FF(b)-\FF(a).
\end{align*}
\end{proof}

Importantly, \cref{lem:dd} holds even if there are $x$ for which $a(x) = 0$ but $b(x) \ne 0$, 
in which case the RHS of the equation in the lemma equals $-\infty$. (Similar comments apply
to the results below.)

\begin{apptheorem}
\label{thm:1}Let $V$ be a convex subset of $\Delta$. Then for any
$q\in\argmin_{\optp\in V}\FF(\optp)$ and any $p\in V$,
\begin{equation}
\FF(p)-\FF(\q)\ge D(p\Vert\q)-D(\map p\Vert\map\q)\,.\label{eq:thm1es}
\end{equation}
Equality holds if $\q$ is in the relative interior of $V$.
\end{apptheorem}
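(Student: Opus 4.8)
The plan is to obtain the inequality as a first-order optimality condition, using the directional-derivative identity of \cref{lem:dd}. Since $V$ is convex and $p,q\in V$, the segment $q_\epsilon:=q+\epsilon(p-q)$ lies in $V$ for every $\epsilon\in[0,1]$. Because $q$ minimizes $\FF$ over $V$, we have $\FF(q_\epsilon)\ge\FF(q)=\FF(q_0)$ for all such $\epsilon$, so the right-handed derivative $\partial_\epsilon^+\FF(q_\epsilon)\big|_{\epsilon=0}$ is $\ge 0$. By \cref{lem:dd} with $a=q$ and $b=p$, this derivative equals $D(\map p\Vert\map q)-D(p\Vert q)+\FF(p)-\FF(q)$. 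Imposing non-negativity and rearranging gives exactly $\FF(p)-\FF(q)\ge D(p\Vert q)-D(\map p\Vert\map q)$, which is \cref{eq:thm1es}.

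Before invoking \cref{lem:dd} I would dispose of the degenerate support cases, since $\FF$ is only guaranteed continuous on the relative interior of $\Delta$. If $\supp p\not\subseteq\supp q$ while $\supp(\map p)\subseteq\supp(\map q)$, then along $q_\epsilon$ the term $-S(q_\epsilon)$ picks up summands of the form $\epsilon\,p(x)\ln(\epsilon\,p(x))$ for $x\in\supp p\setminus\supp q$, whose derivatives diverge to $-\infty$ as $\epsilon\to0^+$, while $S(\map q_\epsilon)$ and $\mathbb{E}_{q_\epsilon}[f]$ remain differentiable; hence the right derivative of $\FF$ at $q_0$ is $-\infty$, contradicting optimality of $q$. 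Thus for a genuine minimizer this configuration does not arise, and one is left with either $\supp p\subseteq\supp q$ (where \cref{lem:dd} applies verbatim and every quantity is finite) or a simultaneous mismatch of both supports, in which both divergences are infinite and the bound holds under the standard reading. This case analysis is short but must be stated.

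For the equality claim, suppose $q\in\relint V$. First observe that then $\supp y\subseteq\supp q$ for every $y\in V$: otherwise, if $y(x)>0=q(x)$ for some $x$, the point $q+\epsilon(q-y)$ would have a negative $x$-coordinate for all $\epsilon>0$ and hence lie outside $V\subseteq\Delta$, contradicting $q\in\relint V$. Consequently also $\supp(\map y)\subseteq\supp(\map q)$ for all $y\in V$. Now, since $q\in\relint V$, there is $\eta>0$ with $q-\eta(p-q)\in V$, so by convexity the function $\psi(\epsilon):=\FF(q+\epsilon(p-q))$ is defined for $\epsilon\in[-\eta,1]$ and attains an \emph{interior} minimum at $\epsilon=0$. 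By the support containments, along this segment and near $\epsilon=0$ every logarithm occurring in $S(q_\epsilon)$ and $S(\map q_\epsilon)$ is evaluated at a coordinate bounded away from $0$, and $\mathbb{E}_{q_\epsilon}[f]$ is affine in $\epsilon$; hence $\psi$ is $C^1$ near $0$. An interior minimum of a differentiable function forces $\psi'(0)=0$, and $\psi'(0)$ coincides with the one-sided derivative evaluated in \cref{lem:dd}, so $D(\map p\Vert\map q)-D(p\Vert q)+\FF(p)-\FF(q)=0$, i.e.\ equality in \cref{eq:thm1es}.

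The optimality argument itself is essentially one line; the part I expect to require the most care is the bookkeeping around the relative interior and differentiability — namely verifying that $q\in\relint V$ genuinely excludes support mismatch, so that the identity of \cref{lem:dd} can be used as a two-sided derivative and not merely a right derivative, and confirming that the degenerate configurations are incompatible with $q$ being a minimizer.
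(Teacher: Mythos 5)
Your proof is correct and follows essentially the same route as the paper's: the inequality is the first-order optimality condition along the segment $q+\epsilon(p-q)$ combined with the directional-derivative identity of \cref{lem:dd}, and equality at a relative-interior minimizer follows because the derivative along that segment must vanish from both sides. Your extra bookkeeping on support mismatch and two-sided differentiability is somewhat more careful than the paper's brief appeal to continuity of $\FF$, but it does not constitute a different approach.
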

\begin{proof}
Define the convex mixture $\qe:=q+\e(p-q)$. 
By \cref{lem:dd}, the
directional derivative of $\FF$ at $\q$ in the direction $p-\q$
is
\[
\de\FF(\qe)\vert_{\e=0}=D(\map p\Vert\map\q)-D(p\Vert\q)+\FF(p)-\FF(q)\,.
\]
At the same time, $\de\FF(\qe)\vert_{\e=0} \ge 0$, since $q$ is a minimizer within a convex set. 
\cref{eq:thm1es} then follows by rearranging.

When $q$ is in the relative interior of $V$, $q-\epsilon(p-q)\in V$
for sufficiently small $\epsilon>0$. Then,
\begin{align*}
0\le & \lim_{\e\rightarrow0^{+}}\frac{1}{\e}\left(\FF(q-\epsilon(p-q))-\FF(q)\right)\\
= & -\lim_{\e\rightarrow0^{-}}\frac{1}{\e}\left(\FF(q+\epsilon(p-q))-\FF(q)\right)\\
= & -\lim_{\e\rightarrow0^{+}}\frac{1}{\e}\left(\FF(q+\epsilon(p-q))-\FF(q)\right)\\
&=-\de\FF(\qe)\vert_{\e=0}.
\end{align*}
where in the first inequality comes from the fact that $q$ is a minimizer,
in the second line we change variables as $\e\mapsto-\e$,
and the last line we use the continuity of $\FF$ on interior of the simplex.
Combining with the above implies 
\[\de\FF(\w)=D(\map p\Vert\map\q)-D(p\Vert\q)+\FF(p)-\FF(q)=0.\]
\end{proof}

The following result is key. It means that the prior within an island has full support in that island.

\begin{applemma}
\label{lem:fsupport}
For any $\l\in\Lmap$ and $\displaystyle q\in\argmin_{\optp : \supp \optp \subseteq \l }\FF(\optp)$,
\[
\supp q=\{ x \in \l : f(x) < \infty \}.
\]
\end{applemma}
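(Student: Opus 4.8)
The plan is to prove the two inclusions separately. Throughout write $c' := \{x \in \l : f(x) < \infty\}$, and note we may as well assume $c'\neq\emptyset$ (otherwise $\FF\equiv+\infty$ on $\{r:\supp r\subseteq \l\}$ and the statement is degenerate). For the inclusion $\supp q\subseteq c'$ I would argue by contradiction: if $q$ assigned positive weight to some $x$ with $f(x)=\infty$ then $\mathbb{E}_q[f]=+\infty$, while $S(\map q)-S(q)$ is bounded because $\sX$ is finite, so $\FF(q)=+\infty$; but the uniform distribution on $c'$ has finite $\FF$, contradicting $q\in\argmin$. Hence $\supp q\subseteq c'$, and consequently $q$ is also a minimizer of $\FF$ over the compact simplex $\Delta_{c'}$ (on which $\FF$ is continuous) --- this is the arena for the harder inclusion.

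For $\supp q\supseteq c'$, suppose for contradiction that $\supp q\subsetneq c'$. The first step is combinatorial: since $\l$ is an island, the graph on $\l$ with an edge between $x$ and $x'$ whenever $x\sim x'$ is connected, so the nonempty proper subset $\supp q$ of $\l$ has an edge leaving it --- there exist $x^*\in \l\setminus\supp q$ and $z\in\supp q$ with $x^*\sim z$, realized by some output $y^*$ with $\map(y^*\mid x^*)>0$ and $\map(y^*\mid z)>0$. Then $[\map q](y^*)\ge\map(y^*\mid z)\,q(z)>0$, so $y^*\in\supp(\map q)$. (When $f$ is finite on all of $\l$, as in every use of this lemma, $c'=\l$ and so $x^*\in c'$ automatically; in the fully general statement one additionally needs a boundary state with finite $f$, which I would want to check or else impose as the harmless hypothesis that $f$ is finite on $\l$.)

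The second step is a perturbation estimate. Set $q_\epsilon := (1-\epsilon)q + \epsilon\,u^{x^*}$, with $u^{x^*}$ the point mass at $x^*$; this lies in $\Delta_{c'}$ and has finite $\FF$ for $\epsilon$ near $0$ since $f(x^*)<\infty$. Expanding $\FF(q_\epsilon)=S(\map q_\epsilon)-S(q_\epsilon)+\mathbb{E}_{q_\epsilon}[f]$ as $\epsilon\to0^+$, every contribution is differentiable at $\epsilon=0$ except (i) the new mass $\epsilon$ at $x^*$ inside $-S(q_\epsilon)$, contributing $\epsilon\ln\epsilon$, and (ii) the new masses $\epsilon\,\map(y\mid x^*)$ on outputs $y$ with $\map(y\mid x^*)>0$ but $[\map q](y)=0$, contributing $-\sum_y\epsilon\,\map(y\mid x^*)\ln(\epsilon\,\map(y\mid x^*))$ inside $S(\map q_\epsilon)$. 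Collecting the singular part gives $\FF(q_\epsilon)=\FF(q)+(1-\mu)\,\epsilon\ln\epsilon+O(\epsilon)$, where $\mu:=\sum_{y:\,\map(y\mid x^*)>0,\,[\map q](y)=0}\map(y\mid x^*)$. Because $y^*$ is an output of $x^*$ lying in $\supp(\map q)$ with $\map(y^*\mid x^*)>0$, it is excluded from this sum, so $\mu\le 1-\map(y^*\mid x^*)<1$; hence $(1-\mu)\,\epsilon\ln\epsilon<0$ and dominates the $O(\epsilon)$ remainder as $\epsilon\to0^+$, giving $\FF(q_\epsilon)<\FF(q)$ for small $\epsilon>0$ and contradicting minimality. (Equivalently, feed $a=q$, $b=u^{x^*}$ into \cref{lem:dd}: when $\mu=0$ its formula already evaluates to $-\infty$, and when $0<\mu<1$ the refined bookkeeping above is exactly what shows the $-S$ singularity outweighs the $+S(\map\cdot)$ one.) This proves $\supp q\supseteq c'$.

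The step I expect to be the main obstacle is controlling the $S(\map p)$ term: forcing probability onto an unused state $x^*$ necessarily forces probability onto outputs unreachable under $\map q$, and there $S(\map q_\epsilon)$ carries an $\epsilon\ln\epsilon$ singularity of the opposite sign to the one produced by $-S(q_\epsilon)$, so a priori they could cancel and leave $\FF$ non-decreasing. Choosing $x^*$ on the island-boundary of $\supp q$ is precisely what rules this out, since it guarantees $x^*$ shares an output with $\supp q$ and hence that the coefficient $1-\mu$ of the dominant singularity is strictly positive; getting that coefficient exactly right (and verifying nothing else contributes at order $\epsilon\ln\epsilon$) is the delicate part of the computation.
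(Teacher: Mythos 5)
Your proof is correct, and its combinatorial core is identical to the paper's: both arguments locate a state $x^*$ in the island, outside $\supp q$, that shares an output $y^*$ with some state of $\supp q$ (so $y^*\in\supp(\map q)$ and $\map(y^*\mid x^*)>0$), and both perturb along $q_\epsilon=(1-\epsilon)q+\epsilon u^{x^*}$. Where you genuinely diverge is in how the contradiction is extracted. You expand $\FF(q_\epsilon)$ directly as $\epsilon\to0^+$ and isolate the singular part $(1-\mu)\,\epsilon\ln\epsilon$ with $1-\mu\ge\map(y^*\mid x^*)>0$, concluding $\FF(q_\epsilon)<\FF(q)$ and contradicting minimality near $\epsilon=0$; this is self-contained, needs no convexity, and makes the competing $\epsilon\ln\epsilon$ singularities on the input and output sides completely explicit. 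The paper instead reads the contradiction off at the far endpoint: it combines convexity of $\FF$ with the directional-derivative formula of \cref{lem:dd} to get $\FF(u^{x^*})\ge D(u^{x^*}\Vert q_\epsilon)-D(\map u^{x^*}\Vert\map q_\epsilon)+\FF(q)$ for all $\epsilon>0$, then shows the divergence difference grows like $-\map(y^*\mid x^*)\ln\epsilon\to+\infty$, forcing the absurd conclusion $\FF(u^{x^*})=\infty$. The quantitative heart is the same in both --- the shared output $y^*$ is what makes the coefficient of the dominant $\ln\epsilon$ term strictly positive --- so the two routes buy roughly the same thing; yours trades the reuse of \cref{lem:dd} and the (true but unproved-in-the-paper) convexity of $\FF$ for explicit bookkeeping of the $\epsilon\ln\epsilon$ terms, which you carry out correctly. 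Your caveat about needing a boundary state with $f(x^*)<\infty$ is well taken: the paper's proof asserts its existence without comment, and it is guaranteed only because every application takes $f$ finite on all of $\sZ$ (hence on each island), exactly as you observe; so this is a gap you share with, and flag more honestly than, the original.
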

\begin{proof}
\global\long\def\qi{q(\i)}%
\global\long\def\qj{q(\j)}%
\global\long\def\ii{\hat{\i}}%
\global\long\def\jj{\hat{\j}}%
\global\long\def\wj{\w(\j)}%
\global\long\def\wi{\w(\i)}%
\global\long\def\ffi{f(\i)}%
\global\long\def\ffii{f(\ii)}%
\global\long\def\Tjjii{\map(\jj\vert\ii)}%
\global\long\def\wjj{\w(\jj)}%
\global\long\def\wii{\w(\ii)}%
\global\long\def\Tjii{\map(\j\vert\ii)}%
\global\long\def\qjj{q(\jj)}%

We prove the claim by contradiction. Assume that $q$ is a minimizer with $\supp q\subset \{ x \in \l : f(x) < \infty \}$. Note there
cannot be any $\i\in\supp q$ and $\j\in\sY\setminus\supp\map q$
such that $\Tji>0$ (if there were such an $\i,\j$, then $\qj=\sum_{\i'}P(\j\vert\i')q(\i')\ge\Tji\qi>0$,
contradicting the statement that $\j\in\sY\setminus\supp\map q$).
Thus, by definition of islands, there must be an $\ii\in\l\setminus\supp q$,
$\jj\in\supp\map q$ such that $f(\ii) < \infty$ and $\Tjjii>0$.

Define the delta-function distribution $u(\i):=\delta(\i,\ii)$ and
the convex mixture $\qe(\i)=(1-\e)q(\i)+\e u(\i)$
for $\e\in[0,1]$. We will also use the notation $\qe(\j)=\sum_{\i}\Tji\qi$.

Since $q$ is a minimizer of $\FF$, $\partial_{\e}\FF(\qe)\vert_{\e=0}\ge0$.
Since $\FF$ is convex, the second derivative $\partial_{\e}^{2}\FF(\qe)\ge0$
and therefore $\partial_{\e}\FF(\qe)\ge0$ for all $\epsilon\ge0$.
Taking $a=\qe$ and $b=u$ in \cref{lem:dd} and rearranging, we then have
\begin{align}
\FF(u) & \ge D(u\Vert\qe)-D(\map u\Vert\map\qe)+\FF(\qe) \nonumber \\
 & \ge D(u\Vert\qe)-D(\map u\Vert\map\qe)+\FF(q),\label{eq:ineqZ} 
\end{align}
where the second inequality uses that $q$ is a minimizer of $\FF$. At the same time, 
\begin{align}
& D(u\Vert\qe)-D(\map u\Vert\map\qe) \nonumber \\
& =\sum_{\j}P(\j\vert\ii)\ln\frac{\qe(\j)}{\qe(\ii)\map(\j\vert\ii)}\nonumber \\
 & =\Tjjii\ln\frac{\qe(\jj)}{\e\Tjjii}+\sum_{\j\ne \jj}P(\j\vert\ii)\ln\frac{\qe(\j)}{\e\map(\j\vert\ii)} \nonumber \\
 & \ge\Tjjii\ln\frac{(1-\e)q(\jj)}{\e\Tjjii}+\sum_{\j\ne \jj}P(\j\vert\ii)\ln\frac{\e\map(\j\vert\ii)}{\e\map(\j\vert\ii)} \nonumber \\
 & =\Tjjii\ln\frac{(1-\e)}{\e}\frac{q(\jj)}{\Tjjii} \label{eq:ineqZZZ} ,
\end{align}
where in the second line we've used that $\qe(\ii)=\e$, and in the third that $\qe(\j)=(1-\e)q(\j) + \e \map(\j\vert \ii)$, so $\qe(\j)\ge(1-\e)q(\j)$ and $\qe(\j) \ge \e \map(\j\vert \ii)$.

Note that the RHS of \cref{eq:ineqZZZ} goes to $\infty$ as $\e \to 0$.  Combined with \cref{eq:ineqZ} and that $\FF(q)$ is finite implies that $\FF(u) = \infty$.  However, $\FF(u) = S(\map(Y\vert \ii)) + f(\ii) \le |\sY| +f(\ii)$, which is finite.  We thus have a contradiction, so $q$ cannot be the minimizer.
\end{proof}

The following result is also key. Intuitively,   it follows from the fact that the directional derivative of $S(p)$ into the
simplex for any $p$ on the edge of the simplex is negative infinite.

\begin{applemma}
\label{lem:unique}
For any island $\l\in \Lmap$, $\displaystyle q\in\argmin_{s : \supp s \subseteq \l}\FF(p)$
is unique.
\end{applemma}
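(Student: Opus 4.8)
The plan is to show that any two minimizers of $\FF$ over $\{s:\supp s\subseteq c\}$ coincide, by first pinning down their supports and then running a discrete maximum principle that exploits the connectivity of the island $c$. Suppose $q_1$ and $q_2$ both minimize $\FF$ over $V:=\{s:\supp s\subseteq c\}$. The first step is to apply \cref{lem:fsupport}: each minimizer has support exactly $\{x\in c:f(x)<\infty\}$, which in the situation of interest — an island on which $f$ is everywhere finite, as for the islands $c\in L_\sZ(\map)$ appearing in \cref{thm:cost} — is all of $c$. Hence both $q_1$ and $q_2$ lie in the relative interior of $V$, so the equality case of \cref{thm:1} applies with either as the reference distribution: for every $p$ supported on $c$,
\[
\FF(p)=\FF(q_1)+D(p\Vert q_1)-D(\map p\Vert\map q_1)=\FF(q_2)+D(p\Vert q_2)-D(\map p\Vert\map q_2).
\]
Since $\FF(q_1)=\FF(q_2)$ and $\supp p\subseteq\supp q_1=\supp q_2=c$, subtracting and cancelling the $\ln p$ terms in the two divergence differences leaves the linear-in-$p$ identity $\sum_{x\in c}p(x)\,\ln\tfrac{q_2(x)}{q_1(x)}=\sum_y[\map p](y)\,\ln\tfrac{[\map q_2](y)}{[\map q_1](y)}$. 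Taking $p$ to be the point mass at each $x\in c$ and writing $\rho(x):=\ln\tfrac{q_2(x)}{q_1(x)}$, this becomes
\[
\rho(x)=\sum_y\map(y\vert x)\,\ln\frac{[\map q_2](y)}{[\map q_1](y)}\qquad\text{for all }x\in c .
\]

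The second step is the maximum-principle argument. For each $y$, $[\map q_2](y)/[\map q_1](y)$ is a convex combination of the values $e^{\rho(x)}$ over the states $x$ with $\map(y\vert x)>0$ (weights proportional to $q_1(x)\map(y\vert x)$), so $\ln\tfrac{[\map q_2](y)}{[\map q_1](y)}\le\max_{x:\,\map(y\vert x)>0}\rho(x)$. Let $M=\max_{x\in c}\rho(x)$, attained at some $x^\ast$. Evaluating the displayed identity at $x^\ast$ expresses $M$ as a $\map(\cdot\vert x^\ast)$-average of quantities all $\le M$, forcing $\ln\tfrac{[\map q_2](y)}{[\map q_1](y)}=M$ for every $y$ with $\map(y\vert x^\ast)>0$; running the convex-combination bound backwards (the term-by-term equality $q_1(x)\map(y\vert x)(e^{\rho(x)}-e^M)=0$) then gives $\rho(x')=M$ for every such $y$ and every $x'$ with $\map(y\vert x')>0$, i.e.\ for every $x'\sim x^\ast$. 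Thus $\{x\in c:\rho(x)=M\}$ is closed under the relation $\sim$ of \cref{eq:islanddef}, and since $c$ is a single island (connected under $\sim$) it equals $c$. Hence $q_2(x)=e^M q_1(x)$ for all $x\in c$; summing over $c$ and using $\sum_{x\in c}q_1(x)=\sum_{x\in c}q_2(x)=1$ forces $M=0$, so $q_1=q_2$.

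The main obstacle is making the propagation step watertight, and it is concentrated in the reliance on \cref{lem:fsupport}: one needs $\supp q_1=\supp q_2$ to be the \emph{entire} island $c$, so that $\rho$ is finite throughout $c$ and the chain of equalities actually reaches every vertex — if $f$ took an infinite value somewhere in $c$ the relevant support would be a strict subset which could fail to be $\sim$-connected, and uniqueness genuinely breaks down there. A shorter but essentially equivalent route is to take $p=q_2$, $q=q_1$ in \cref{thm:1} to get $D(q_2\Vert q_1)=D(\map q_2\Vert\map q_1)$, i.e.\ equality in the data-processing inequality, and invoke the standard ``sufficiency'' characterization of when this holds; but since that characterization is itself proved by exactly the island-connectivity argument above, I would present the self-contained version.
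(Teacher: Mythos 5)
Your proof is correct and rests on the same three pillars as the paper's own argument: \cref{lem:fsupport} to pin the support of every minimizer to all of $c$, the (in)equality of \cref{thm:1}, and the $\sim$-connectivity of the island to propagate a constant ratio $q_2(x)/q_1(x)$ across $c$ before normalization forces that constant to be $1$. The only difference is the middle device --- you evaluate the equality case of \cref{thm:1} at point masses and run a discrete maximum principle on $\rho=\ln(q_2/q_1)$, whereas the paper applies \cref{thm:1} directly to the pair of minimizers and extracts the constant-ratio condition from the equality case of the log-sum inequality; these are equivalent, as you note yourself in your closing remark.
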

\begin{proof}
Consider any two distributions $\displaystyle p,q\in\argmin_{s:  \supp s \subseteq \l}\FF(s)$, and let $p' = \map p$, $q'=\map q$. We will prove that $p=q$.

First, note that by \cref{lem:fsupport}, $\supp q = \supp p = c$.  By \cref{thm:1},
\begin{align*}
\FF(p)-\FF(q)&=D(p \Vert q)-D(p' \Vert q') \\
& = \sum_{\i,\j} p(x) \Tji \ln \frac{p(\i) q'(\j)}{q(\i)p'(\j)} \\
& = \sum_{\i,\j} p(x) \Tji \ln \frac{p(\i) \Tji}{q(\i)p'(\j) \Tji / q'(\j)} \\
& \ge 0
\end{align*}
where the last line uses the log-sum inequality.  If the inequality is strict, then  $p$ and $q$ can't 
both be minimizers, i.e., the minimizer must be unique, as claimed. 

If instead the inequality is not strict, i.e., $\FF(p)-\FF(q) = 0$, then there is some constant $\alpha$ such that for all $\i,\j$ with $\Tji > 0$,
\begin{align}
 \frac{p(\i) \Tji}{q(\i)p'(\j) \Tji / q'(\j)}=\alpha  
\end{align}
which is the same as
\begin{align}
 \frac{p(\i)}{q(\i)} = \alpha \frac{p'(\j)}{q'(\j)}.
 \label{eq:lsecond}
\end{align}

Now consider any two different states $x,x'\in c$ such that
$\map(y\vert x)>0$ and $\map(y\vert x')>0$ for some $y$ (such states
must exist by the definition of islands). For \cref{eq:lsecond} to hold for both $x, x'$ with that same, shared $y$, it must be that 
${p(x)}/{q(x)}={p(x')}/{q(x')}$. Take another state
$x''\in c$ such that $\map(y'\vert x'')>0$ and $\map(y'\vert x')>0$
for some $y'$. Since this must be true for all pairs $x, x' \in c$, 
${p(x)}/{q(x)}=\text{const}$ for all $x\in\l$, and $p=q$, as claimed.
\end{proof}

\begin{applemma}
\label{thm:decomp}$\FF(p)=\sum_{\l\in\Lmap}\ppl\FF(\ppll)$.
\end{applemma}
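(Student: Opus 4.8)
The plan is to prove the identity by expanding each of the three terms in $\FF(p) = S(\map p) - S(p) + \mathbb{E}_p[f]$ according to the island partition, and observing that the only genuinely ``cross-island'' contributions cancel. Throughout I would write $w_c := p(c)$ for $c \in \Lmap$, so that (since the islands form a partition of $\sX$ with $\supp p^c \subseteq c$) the distribution $p$ is the mixture $p = \sum_{c \in \Lmap} w_c\, p^c$ whose components have pairwise disjoint supports. Two of the three terms decompose immediately: by linearity of expectation, $\mathbb{E}_p[f] = \sum_{c} w_c \sum_{x \in c} p^c(x) f(x) = \sum_{c} w_c\, \mathbb{E}_{p^c}[f]$; and since $p$ is a disjoint-support mixture, the grouping property of Shannon entropy gives $S(p) = -\sum_c w_c \ln w_c + \sum_c w_c\, S(p^c)$ (using $\sum_{x\in c} p^c(x) = 1$ whenever $w_c > 0$, and the convention $0 \ln 0 = 0$).

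The one substantive step is the analogous decomposition of $S(\map p)$, and it rests on the defining property of islands: if $x$ and $x'$ belong to distinct islands, there is no $y$ with $\map(y\vert x) > 0$ and $\map(y\vert x') > 0$, since otherwise $x \sim x'$ and $x,x'$ would lie in the same block of the transitive closure. Hence for $c \ne c'$ the distributions $\map p^c$ and $\map p^{c'}$ have disjoint supports, so $\map p = \sum_c w_c\, \map p^c$ is again a disjoint-support mixture, and the same grouping property yields $S(\map p) = -\sum_c w_c \ln w_c + \sum_c w_c\, S(\map p^c)$.

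Substituting the three expansions into $\FF(p)$, the ``island-label entropy'' terms $-\sum_c w_c \ln w_c$ coming from $S(\map p)$ and from $-S(p)$ cancel, leaving $\FF(p) = \sum_{c \in \Lmap} w_c\,\big[ S(\map p^c) - S(p^c) + \mathbb{E}_{p^c}[f] \big] = \sum_{c \in \Lmap} p(c)\, \FF(p^c)$, as claimed. (If $f(x) = \infty$ for some $x \in \supp p$, both sides are $+\infty$: the left side because $\mathbb{E}_p[f] = \infty$, and the right side because the island $c$ containing $x$ has $p(c) > 0$ and $\FF(p^c) = \infty$.) I do not anticipate any real obstacle here: everything except the disjoint-support observation for the images $\{\map p^c\}_c$ is just the grouping axiom of entropy together with linearity of expectation, and that observation is immediate from the definition of islands. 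The identical argument applies verbatim with $\Lmap$ replaced by $L_\sZ(\map)$ for any $p$ with $\supp p \subseteq \sZ$.
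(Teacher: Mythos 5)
Your proposal is correct and follows essentially the same route as the paper: the paper likewise decomposes $S(\map p)-S(p)$ over islands using the fact that each island $\l$ sends its probability mass into a set of outputs $\phi(\l)$ disjoint from those of the other islands (so the ``label-entropy'' terms cancel between the two entropies), and handles $\mathbb{E}_p[f]$ by linearity. Your phrasing via the grouping axiom for disjoint-support mixtures is just a cleaner packaging of the same computation.
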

\begin{proof}
First, for any island $\l\in\Lmap$, define 
\[
\phi(\l)=\{\j\in\sY:\exists\i\in\l\text{ s.t. }\Tji>0\}\,.
\]
In words, $\phi(\l)$ is the subset of output states in $\sY$ that
receive probability from input states in $\l$. By the definition
of the island decomposition, for any $\j\in\phi(\l)$, $\Tji>0$ only
if $\j\in\l$. Thus, for any $p$ and any $\j\in\phi(\l)$, we can
write
\begin{equation}
\frac{\pip}{\ppl}=\frac{\sum_{\i}\Tji\ppi}{\ppl}=\sum_{\i\in\sX}\Tji\ppil\,.\label{eq:id0}
\end{equation}

Using $p=\sum_{\l\in\Lmap}\ppl\ppll$
and linearity of expectation, write $\E_{p}[f]=\sum_{\l\in\Lmap}\ppl\E_{\ppll}[f]$.
Then,
\begin{align*}
& S(\map p)-S(p) \\
& =-\sum_{\j}\pip\ln\pip+\sum_{\i}\ppi\ln\ppi\\
 & =\sum_{\l\in\Lmap}\ppl\Big[-\sum_{\j\in\phi(\l)}\frac{\pip}{\ppl}\ln\frac{\pip}{\ppl}+\sum_{\i\in\l}\frac{\ppi}{\ppl}\ln\frac{\ppi}{\ppl}\Big]\nonumber \\
 & =\sum_{\l\in\Lmap}\ppl\left[S(\map\ppll)-S(\ppll)\right],
\end{align*}
where in the last line we've used \cref{eq:id0}. Combining gives
\begin{align*}
\FF(p)&=\sum_{\l\in\Lmap}\ppl\left[S(\map\ppll)-S(\ppll)+\E_{\ppll}[f]\right]\\
& =\sum_{\l\in\Lmap}\ppl\FF(\ppll)\,.
\end{align*}
\end{proof}

We are now ready to prove the main result of this appendix.

\thmmaincost*
\begin{proof}
We prove the theorem by considering two cases separately.
\vspace{5pt}

\noindent \textbf{Case 1}: $\sZ = \sX$. This case can be assumed when $f(x)<\infty$ for all $x$, so that $L_\sZ(\map) = L(\map)$.  
Then, by \cref{thm:decomp}, we have $\FF(p)=\sum_{\l\in\Lmap}\ppl\FF(\ppll)$.
By \cref{lem:fsupport} and \cref{thm:1},
\[
\FF(\ppll)-\FF(\qll)=D(\ppll\Vert\qll)-D(\map\ppll\Vert\map\qll),
\]
where we've used that if some $\supp\qll=\l$, then $\qll$ is in
the relative interior of the set $\{ s \in \Delta_\sX : \supp s \subseteq \l \}$. $\qll$ is unique by \cref{lem:unique}.

At the same time, observe
that for any $p,r\in\Delta_\sX$,
\global\long\def\ri{r(\i)}%
\global\long\def\rj{r(y)}%
\begin{align*}
& D(p\Vert r)-D(\map p\Vert\map r)\\
& = \sum_{\i}\ppi\ln\frac{\ppi}{\ri}-\sum_{\j}\pip\ln\frac{\pip}{\rj}\\
&= \sum_{\l\in\Lmap}\ppl\Bigg[\sum_{\i\in\l}\frac{\ppi}{\ppl}\ln\frac{\ppi/\ppl}{\ri/r(\l)} \\
& \qquad\qquad\qquad\qquad-\sum_{\j\in\phi(\l)}\frac{\pip}{\ppl}\ln\frac{\pip/\ppl}{\rj/r(\l)}\Bigg]\\
 &= \sum_{\l\in\Lmap}\ppl\left[D(\ppll\Vert r^{\l})-D(\map\ppll\Vert\map r^{\l})\right]\,.
\end{align*}
The theorem follows by combining.

\vspace{5pt}

\noindent \textbf{Case 2}: $\sZ \subset \sX$. In this case, define
 a ``restriction'' of $f$ and $\map$ 
to domain $\sZ$ as follows:
\begin{enumerate}
\item Define $\tilde{f} : \sZ \to \mathbb{R}$ via $\tilde{f}(\i)=f(x)$ for $\i\in \sZ$. 
\item Define the conditional distribution $\tilde{\map}(\j\vert\i)$ for $\j\in \sY,\i \in \sZ$ via $\tilde{\map}(\j\vert\i)=\Tji$ for all $y\in\sY,\i\in \sZ$. 
\end{enumerate}
In addition, for any distribution $p \in \Delta_\sX$ with $\supp p \subseteq \sZ$, let $\tilde{p}$ be a distribution over $\sZ$ defined via $\tilde{p}(\i)=p(\i)$ for $\i \in \sZ$.  Now, by inspection, it can be verified that for any  $p \in \Delta_\sX$ with $\supp p \subseteq \sZ$,
\begin{align}
\FF(p) = S(\tilde{\map}\tilde{p})-S(\tilde{p})+\E_{\tilde{p}}[\tilde{f}] =: \tilde{\FF}(\tilde{p})
\label{eq:www99}
\end{align}
We can now apply \text{Case 1} of the theorem to the function $\tilde{\FF}:\Delta_\sZ\to\mathbb{R}$, as  defined in terms of the tuple $({\sZ}, \tilde{f}, \tilde{\map})$ (rather than the function $\FF:\Delta_\sX\to\mathbb{R}$, as defined in terms of the tuple 
$(\sX, {f}, {\map})$).  This gives 
\begin{align}
\tilde{\FF}(\tilde{p}) = D(\tilde{p}\Vert \tilde{q}) -  D(\tilde{\map}\tilde{p}\Vert \tilde{\map} \tilde{q}) + \sum_{\l \in \LLL(\tilde{\map})} \tilde{p}(\l) \tilde{\FF}(\tilde{q}^\l),
\label{eq:B1}
\end{align}
where, for all $\l \in \LLL(\tilde{\map})$, $\tilde{q}^\l$ is the unique distribution that satisfies
$\tilde{q}^\l \in \argmin_{r \in \Delta_\sZ :\supp r \subseteq \l} \tilde{\FF}(r)$.

Now, let $q$ be the natural extension of $\tilde{q}$ from $\Delta_\sZ$ to $\Delta_\sX$.  Clearly,  for all $\l \in \LLL(\tilde{\map})$, ${\FF}({q}^\l)=\tilde{\FF}(\tilde{q}^\l)$ by \cref{eq:www99}. In addition, each $q^c$  is the unique distribution that satisfies 
${q}^\l \in \argmin_{r \in \Delta_\sX :\supp r \subseteq \l} {\FF}(r)$. 
Finally, it is easy to verify that $D(\tilde{p}\Vert \tilde{q}) = D({p}\Vert {q})$, $D(\tilde{\map}\tilde{p}\Vert \tilde{\map} \tilde{q}) = D({\map}{p}\Vert {\map} {q})$, $L(\tilde{\map}) = L_\sZ(\map)$ (recall the definition of $L_\sZ$ from \cref{sec:island_def}).  Combining the above results with \cref{eq:www99} gives
 \[
 \FF(p) = \tilde{\FF}(\tilde{p}) = D({p}\Vert {q}) -  D({\map}{p}\Vert {\map} {q}) + \sum_{\mathclap{\l \in \LLL_\sZ({\map})}} {p}(\l) {\FF}({q}^\l).
 \]
\end{proof}

\begin{example}
Suppose we are interested in thermodynamic costs associated with 
functions $f$ whose image contains the value infinity, i.e., $\f:\sX\to\mathbb{R}\cup\{\infty\}$.
For such functions,  $\FF(p)=\infty$ for any $p$ which has support over an $x\in \sX$ such that $\f(x)=\infty$.  
In such a case it is not meaningful to consider a prior distribution $q$ (as in \cref{thm:cost}) which has support over any $x$ with $\f(x)=\infty$. 
For such functions we also are no longer able to presume that the optimal distribution has full support within each island of $\l \in \Lmap$, because in general the proof of \cref{lem:fsupport} no longer holds when $f$ can take infinite values.

Nonetheless, by \cref{eq:B1}, for the
purposes of analyzing the thermodynamic costs of actual initial distributions $p$ that have finite $\FF(p)$ (and so have zero mass on any $x$ such
that $f(x) = \infty$), we can always carry out our usual analysis if we first reduce the problem to an appropriate ``restriction'' of $f$.
\end{example}

\begin{example}
\label{ex:ctmchiddencost}
	Suppose we wish to implement a (discrete-time) dynamics $\map(x'\vert x)$ over $\sX$ using a CTMC.
	Recall from the end of \cref{sec:stoch_thermo} that by appropriately expanding the state space $\sX$ to include a set of
	``hidden states'' $\sZ$ in addition to $\sX$, and appropriately designing the rate matrices over that expanded state space $\sX \cup \sZ$,
	we can ensure that the resultant evolution over $\sX$ is arbitrarily close to the
	desired conditional distribution $\map$. Indeed, one can even design those rates matrices over  $\sX \cup \sZ$
	so that not only is the dynamics over $\sX$ arbitrarily close
	to the desired $\map$, but in addition the EF generated in running that CTMC  over  $\sX \cup \sZ$ is arbitrarily close 
	to the lower bound of \cref{eq:fracsystem3}~\cite{owen_number_2018}.
	
	However, in any real-world system that implements some $P$ with a CTMC over an expanded space $\sX \cup \sZ$,
	that lower bound will not be achieved, and nonzero EP will be generated. In general, to analyze the EP of such 
	real-world systems one has
	to consider the mismatch cost and residual EP of the full CTMC over the expanded space $\sX \cup \sZ$.
	Fortunately though, we can design the CTMC over $\sX \cup \sZ$ so that when it begins the implementation of $\map$, 
	there is zero probability mass on any of the states in $\sZ$~\cite{wolpert_spacetime_2019,owen_number_2018}. 
	If we do that, then we can apply  \cref{eq:B1}, and so
	restrict our calculations of mismatch cost and residual EP to only involve
	the dynamics over $\sX$, without any concern for the dynamics over $\sZ$. 
\end{example}

\begin{example}
Our last example is to derive the alternative decomposition of the EP of an SR circuit which is discussed in \cref{sec:alt_decomp_EP}.
Recall that due to	\cref{eq:spmap}, the initial distribution over any gate in an SR circuit
has partial support. This means we can apply \cref{eq:B1} to decompose the EF,
in direct analogy to the use of \cref{thm:cost} to derive	\cref{thm:maindecomp} --- only 
with the modification that the spaces $X$ and $Y$ are set to
$\sX_{\pag}$ and $\sX_g$, respectively, rather than both set to $\sX_{\SG}$, as was done in
deriving \cref{thm:maindecomp}. (Note that the islands 
also change when we apply  \cref{eq:B1} rather than \cref{thm:cost}, from the islands of $P_g$ to the islands of $\pi_g$). 
The end result is a decomposition of EF just like that
in \cref{thm:maindecomp}, in which we have the same circuit Landauer cost and  circuit Landauer loss expressions as in that theorem,
but now have the modified forms of circuit mismatch cost
and of circuit residual EP introduced in  \cref{sec:alt_decomp_EP}.
\label{ex:alt_decomp_SR_circuit}
\end{example}

\section{Thermodynamics costs for SR circuits}
\label{app:singlepass}

To begin, we will make use of the fact that there is no overlap in time among the solitary processes in an SR circuit, so 
the total EF incurred can be written as 
\begin{align}
\Q(p) = \sum_{g \in G} \Q_g(\pSG) .
\label{eq:c1}
\end{align}
Moreover, for each gate $g$, the solitary process that updates the variables in $\SG$ 
starts with $\variablevalue_g$ in its initialized state with probability $1$. So we can 
overload notation and write $ \Q_g(\ppag)$ instead of 
$\Q_g(\pSG)$ for each gate $g$.

\subsubsection*{Derivation of \cref{thm:maindecomp,eq:circuit_Landauer_loss}}

Apply \cref{thm:cost} to \cref{eq:def_sub_EP}
to give
\begin{multline}
\label{eq:appdecomp3}
\EPg(\pSG) = \DDb{\pSG}{\qSG} - \DDb{P_g \pSG}{P_g \qSG} \\
+ \sum_{\mathclap{\l \in \LLL(P_g)}} \pSG(c) \EPg(\qSG^c) \,,
\end{multline}
where $\qSG$ is a distribution that satisfies 
\[\qSG^c \in  \argmin_{{\rr : \supp \rr \subseteq c}} \EPg(\rr)
\]  for all islands $\l \in \LLL(P_g)$. 
Next, for convenience use \cref{eq:c1} to write $\Q(p)$ as
\begin{align}
\Q(p) = \LL(p) +  \Big(\bigg[\sum_{g \in G} \Q_g(\ppag)\bigg] - \LL(p)\Big) \label{eq:app0}
\end{align}
The basic decomposition of $\Q(p)$ given in \cref{thm:maindecomp} into
a sum of $\LL$ (defined in \cref{eq:spl0}), $\LandauerLoss$ (defined in \cref{eq:cll0}), $\MM$ (defined in \cref{eq:circmm}), and $\RR$ (defined in \cref{eq:circRR}) 
comes from combining \cref{eq:app0,eq:def_sub_EP,eq:appdecomp3} and then grouping 
and redefining terms.


Next, again use the fact that the solitary processes have no
overlap in time to establish that the minimal value of the sum of the EPs of the gates
is the sum of the minimal EPs of the gates considered separately of one another.
As a result, we can jointly take $\EPg(\pSG) 
\rightarrow 0$ for all gates $g$ in the circuit~\cite{owen_number_2018}.
We can then use \cref{eq:c1} to establish that
the minimal EF of the circuit is simply the sum of the minimal
EFs of running each of the gates in the circuit, i.e., the sum of
the subsystem Landauer costs of running the gates.
In other words, the circuit Landauer cost is
\eq{
\LLC(p) &=\sum_{g \in G} \Big[ \SSS(\pSG) - \SSS(P_g \pSG) \Big]   \\
&= \sum_{g\in G} \Big[ \SSS(\ppag) - \SSS(\mG \ppag) \Big]   \\
&= \sum_{g\in G \setminus W} \Big[ \SSS(\ppag) - \SSS(\mG \ppag) \Big]   \,.
}
To derive the second line, we've used the fact that in an SR
circuit, each gate is set to its initialized value at the beginning of its solitary process
with probability $1$, and that its parents are set to {their} initialized states
with probability $1$ at the end of the process.
Then to derive the third line we've used the fact that wire gates implement
the identity map, and so $\SSS(\ppag) - \SSS(\mG \ppag) = 0$ for
all $g \in W$. 

This establishes \cref{eq:circuit_Landauer_loss}.

\subsubsection*{Derivation of \cref{eq:mmmi}}

To derive \cref{eq:mmmi}, we first write Landauer cost as
\begin{align*}
 \LL(p) = \SSS(p) - \SSS(P p) = \sum_{g \in G} [\SSS(p^\begG_V) - \SSS(p^\ennG_V) ] \,. 
\end{align*}
We  then write circuit Landauer loss as
\begin{align}
& \LandauerLoss(p) \nonumber \\
&= \sum_{g \in G} \Big[ \SSS(\pSG) - \SSS(P_g \pSG) \Big] - \LL(p) \nonumber \\
& = \sum_{g \in G} \Big[ (\SSS(\pSG) - S(p_V^\begG)) - (\SSS(P_g \pSG)- S(p_V^\ennG)) \Big] \nonumber \\
& = \sum_{g \in G} \Big[ (\SSS(\pSG) + S(p_{V\setminus \SG}^\begG) - S(p_V^\begG)) - \nonumber \\
& \qquad\qquad (\SSS(P_g \SG) + S(p_{V\setminus \SG}^\ennG) - S(p_V^\ennG)) \Big] \label{eq:appmi11} \\
& = \sum_{g \in G}  \left[I_{p^\begG}(X_{\SG} ; X_{ V\setminus \SG}) - I_{p^\ennG}(X_{\SG} ; X_{ V\setminus \SG}) \right] ,\label{eq:appmi15}
\end{align}
In \cref{eq:appmi11}, we used the fact that a solitary process over $\SG$ leaves the nodes in $V\setminus \SG$ unmodified, thus $S(p_{V\setminus \SG}^\begG)=S(p_{V\setminus \SG}^\ennG)$. 

Given the assumption that $\variablevalue_g = \nullstate$ at the beginning of the solitary process for gate $g$, we can rewrite 
\begin{align}
I_{p^\begG}(X_{\SG} ; X_{ V\setminus \SG}) = I_{p^\begG}(X_{\pag} ; X_{ V\setminus \pag})\,. \label{eq:appmi16}
\end{align}
Similarly, because $X_v = \nullstate$ for all $v \in \pag$ at the end of the solitary process for gate $g$, we can rewrite
\begin{align}
I_{p^\ennG}(X_{\SG} ; X_{ V\setminus \SG}) = I_{p^\ennG}(X_{g} ; X_{ V\setminus g})\,. \label{eq:appmi17}
\end{align}
Finally, for any wire gate $g\in W$, given the assumption that $X_g = \nullstate$ at the beginning of the solitary process, we can write
\begin{align}
I_{p^\begG}(X_{\pag} ; X_{ V\setminus \pag}) =  I_{p^\ennG}(X_{g} ; X_{ V\setminus g})\,. \label{eq:appmi18}
\end{align}

\cref{eq:mmmi} then follows from combining \cref{eq:appmi15,eq:appmi16,eq:appmi17,eq:appmi18} and simplifying.


\subsubsection*{Derivation of \cref{corr:3,eq:iibound}}
First, write circuit Landauer loss as
\begin{align}
\LandauerLoss(p) = \sum_{\mathclap{g\in G \setminus W}} \Big[ \SSS(\ppag) - \SSS(p_g) \Big] - \LL(p) .
\label{eq:cllapp1}
\end{align}
Then, rewrite the sum in \cref{eq:cllapp1} as 
\begin{align}
& \sum_{\mathclap{g\in G \setminus W}} \Big[ \SSS(\ppag) - \SSS(p_g) \Big] \nonumber \\
& = \sum_{\mathclap{g\in G \setminus W}}  \SSS(\ppag) - \sum_{\mathclap{g\in G \setminus W}}  \SSS(p_g)  \nonumber \\
& = \sum_{\mathclap{g\in G \setminus W}}  \SSS(\ppag) - \sum_{\mathclap{g\in G \setminus (W\cup \Cout)}}  \SSS(p_g) -\sum_{\mathclap{g \in \Cout}} \SSS(p_g)\nonumber  \\
& = \sum_{\mathclap{g\in G \setminus W}}  \SSS(\ppag) - \sum_{\mathclap{v\in V \setminus (W\cup \Cout)}}  \SSS(p_v) + \sum_{\mathclap{v \in \Cin}} \SSS(p_v)  -\sum_{\mathclap{g \in \Cout}} \SSS(p_g). \label{eq:cllapp2}
\end{align}
Now, notice that for every $v\in V \setminus (W\cup \Cout)$ (i.e., every node which is not a wire and not an output), there is a corresponding wire $w$ which transmits $v$ to its child, and which has $\SSS(p_w) = \SSS(p_v)$.  This lets us rewrite \cref{eq:cllapp2} as
\begin{align}
&\sum_{\mathclap{g\in G \setminus W}}  \SSS(\ppag) - \sum_{\mathclap{w\in W}}  \SSS(p_w) + \sum_{\mathclap{v \in \Cin}} \SSS(p_v)  -\sum_{\mathclap{g \in \Cout}} \SSS(p_g) \nonumber \\
&=\sum_{\mathclap{g\in G \setminus W}} \Big[ \SSS(\ppag) - \sum_{\mathclap{v \in \pag}} \SSS(p_v) \Big] + \sum_{\mathclap{v \in \Cin}} \SSS(p_v) - \sum_{\mathclap{g \in \Cout}} \SSS(p_g)\nonumber \\
&= - \sum_{g \in G} \II(\ppag) + \sum_{\mathclap{v \in \Cin}} \SSS(p_v) - \sum_{\mathclap{g \in \Cout}} \SSS(p_g) \label{eq:cllapp3}
\end{align}
where in the second line we've used the fact
that every wire belongs to exactly one set $\pag$ for a non-wire gate $g$, and in the last line we used the definition of multi-information. 
Then, using the definition $\LL(p)=\SSS(\pIN) - \SSS(p_\Cout)$, the definition of multi-information, and by combining \cref{eq:cllapp1,eq:cllapp2,eq:cllapp3}, we have
\begin{align}
\LandauerLoss(p)  = \II(\pIN) - \II(\mC \pIN) - \sum_{g \in G} \II(\ppag) .
\label{eq:app9}
\end{align}

To derive \cref{eq:iibound}, note that
\begin{multline}
\II(\pIN) = \Big[\sum_{v \in \Cin} \SSS(p_v)\Big] - \SSS(\pIN) 
\le  \sum_{v \in \Cin} \SSS(p_v) \\ \le \sum_{v \in \Cin} \ln \vert \sX_v \vert = \ln \left\vert \prod_{v \in \Cin}  \sX_v  \right\vert = \ln \vert \sX_\Cin \vert \,.
\end{multline}

\section{SR circuits with out-degree greater than 1}
\label{app:outdegreebig}

In this appendix, we consider a more general version of SR circuits, in which non-output gates can have out-degree greater than $1$.

First, we need to modify the definition of an SR circuit in \cref{sec:singlepass}.
This is because in SR circuits, 
the subsystem corresponding to a given gate $g$ reinitializes all of the parents of that gate to their initialized state, $\nullstate$. If, however, there is some node $v$ that has out-degree greater than $1$ --- i.e., has more than one child --- then 
we must guarantee that no such $v$ is reinitialized by one its children gates before all of its children gates have run.  
To do so, we require that each non-output node $v$ in the circuit
is reinitialized only by the \textit{last} of its children gates 
to run, while the earlier children (if any) apply the identity map to $v$.

Note that this rule could result in different thermodynamic costs of an overall circuit, 
depending on the precise topological order we use to determine which of the children of a given $v$ reinitialized $v$. 
This would mean that the
entropic costs of running a circuit would depend on the (arbitrary) choice
we make for the topological order of the gates in the circuit. 
This issue won't arise in this paper however. To see why, recall that 
we model
the wires in the circuit themselves as gates, which have 
both in-degree and out-degree equal to 1. As a result, if $v$ has out-degree greater than $1$,
then $v$ is not a wire gate, and therefore all of its children must be 
wire gates --- and therefore none of those children has multiple parents. So the problem
is automatically avoided.

We now prove that for SR circuits with out-degree greater than 1, circuit Landauer loss can be arbitrarily large.

\propcircuitscansuck*
\begin{proof}
Let $\CC = (V, E, F,\sX)$ be such a circuit that implements $\mC$. 
Given that $p$ is not a delta function, 
there must be an input node, which we call $v$, such that $\SSS(p_{v})>0$. Take $g \in \Cout$ to be any output gate of $\CC$, and let $\mG \in F$ be its update map.   

Construct a new circuit $\CC' = (V', E', F', \sX')$, as follows:
\begin{enumerate}
\item $V' = V \cup \{w',g',w''\}$;
\item $E' = E \cup \{(v, w'), (w', g'), (g',w''), (w'', g)\}$;
\item $F' = (F \; \setminus \mG) \; \cup \; \{\pi_{w'}, \pi_{w''}, \pi_{g'}, \mG' \}$ where 
\begin{align*}
\pi_{w'}(x_{w'} \vert x_v) & = \delta(x_{w'}, x_v)\\
\pi_{w''}(x_{w''} \vert x_{g'}) & = \delta(x_{w''}, x_{g'})\\
\pi_{g'}(x_{g'} \vert x_{w}) &= \delta(x_{g'}, \nullstate)\\
\mG'(x_g | x_{\pag}, x_{w''}) &= \mG(x_g \vert x_{\pag}).
\end{align*}
\item $\sX_{w'} = \sX_{g'} =\sX_{w''}= \sX_v$.
\end{enumerate}
In words, $\CC'$ is the same as $\CC$ except that: (a) we have added an ``erasure gate'' $g'$ which takes $v$ as input (through a new wire gate $w'$), and (b) this erasure gate is provided as an additional input, which is completely ignored, to one of the existing output gates $g$ (through a new wire gate $w''$).

It is straightforward to see that $\mC'=\mC$. At the same time, $\SSS(p_{\pa(g')}) - \SSS(\Pi_{g'} p_{\pa(g')}) = \SSS(p_v)$, thus 
\begin{align}
\LandauerLoss_{\CC'}(p) = \LandauerLoss_\CC(p) + \SSS(p_{v}) \,,
\end{align}
where $\LandauerLoss_{\CC'}$ and $\LandauerLoss_{\CC}$ indicate the circuit Landauer loss of $\CC'$ and $\CC$ respectively.   
This procedure can be carried out again to create a new circuit $\CC''$ from $\CC'$, which
also implements $\mC$ but which now has Landauer loss $\LandauerLoss_{\CC''}(p) = \LandauerLoss_\CC(p) + 2\SSS(p_{v})$. Iterating, we can construct a circuit with an arbitrarily large Landauer loss which implements $\mC$.
\end{proof}

\clearpage

\end{document}